\newcommand{\eg}{{\it e.g.}}
\newcommand{\ie}{{\it i.e.}}
\newtheorem{theorem}{Theorem}
\renewcommand{\maketag@@@}[1]{\hbox{\m@th\normalsize\normalfont#1}}%
\title{Entanglement wedge cross section
triangle information and holographic entanglement of assistance}
\author[a]{Xin-Xiang Ju,}
\author[b]{Wen-Bin Pan,}
\author[a,c]{Ya-Wen Sun}
\author[a]{and Yang Zhao}
\emailAdd{juxinxiang21@mails.ucas.ac.cn}
\emailAdd{panwb@buct.edu.cn}
\emailAdd{yawen.sun@ucas.ac.cn}
\emailAdd{zhaoyang20a@mails.ucas.ac.cn}
\affiliation[a]{School of Physical Sciences, University of Chinese Academy of Sciences, Beijing 100190, China}
\affiliation[b]{College of Mathematics and Physics, Beijing University of Chemical Technology, Beijing 100029, China}
\affiliation[c]{Kavli Institute for Theoretical Sciences, University of Chinese Academy of Sciences, Beijing 100049, China}
\abstract{
We identify a non-negative and upper-bounded entanglement signal
in holography which is defined as a combination of entanglement wedge cross sections (EWCS) for a tripartite {mixed state} \(ABE\): \(\mathrm{EI}_{\Delta}(A:B|E) = \mathrm{EWCS}(A:EB) + \mathrm{EWCS}(B:EA) - \mathrm{EWCS}(E:AB)\). 
This quantity is an analogue of conditional mutual information (CMI) and shares similar mathematical structures in both quantum information theory and holography. We show that CMI is upper bounded by a quantum information quantity, the entanglement of assistance, which quantifies the entanglement that can be generated between two parties \(A\) and \(B\), given assistance from a third party \(E\). We prove that \(\mathrm{EI}_{\Delta}\) is also upper bounded by the entanglement of assistance in the canonical purification state. We analyze its upper bound by  maximizing \(\mathrm{EI}_{\Delta}(A:B|E)\) over all configurations of the auxiliary subsystem \(E\) in AdS\(_3\)/CFT\(_2\). The maximized \(\mathrm{EI}_{\Delta}\) displays a rich phase structure governed by the cross ratio \(X_{AB}\): it vanishes below a critical threshold and, beyond a second phase transition point, saturates the bound of entanglement of assistance. We comment on the interpretation of \(\mathrm{EI}_{\Delta}\) as characterizing the assisted bipartite {quantum} entanglement between $A$ and $B$ with the help of $E$.}
\begin{document}
\maketitle
\flushbottom

%%%%%%%%%%%%%%%%%%%%%%%%%%%%%%%%
%    Section: Introduction
%%%%%%%%%%%%%%%%%%%%%%%%%%%%%%%%
	
\section{Introduction}\label{sec1}

\begin{comment}
 In holographic systems, the boundary states are typically highly entangled.
To probe the entanglement structure of such states, one employs a variety of entanglement measures, among which the most basic one is the entanglement entropy. In holography, the entanglement entropy of a boundary subregion is given by the area of the Ryu-Takayanagi/Hubeny-Rangamani-Takayanagi  (RT/HRT) surface \cite{Ryu:2006bv,Hubeny:2007xt}.
However, while the entanglement entropy is a good measure of bipartite entanglement in pure states, it fails to capture the entanglement in mixed states or the multi-partite entanglement which involves three or more subsystems.
\end{comment}

\noindent Holographic boundary states are typically highly entangled. The entanglement structure of such states is often probed via entanglement entropy, given by the area of the Ryu-Takayanagi/Hubeny-Rangamani-Takayanagi (RT/HRT) surface \cite{Ryu:2006bv,Hubeny:2007xt}. However, entanglement entropy captures only bipartite entanglement in pure states and fails to characterize mixed-state or genuine multipartite entanglement.
It is thus intriguing to explore the entanglement structure beyond the bipartite setting.
In recent years, rapid progress has been made in the holographic quantification of mixed-state and multi-partite entanglement. 
In particular, the entanglement wedge cross section (EWCS) was proposed as the dual of mixed-state correlation measures such as the entanglement of purification and the reflected entropy \cite{Takayanagi:2017knl,Dutta:2019gen}.
Furthermore, some extensions or generalizations of entropy and EWCS, e.g. multi-entropy \cite{Gadde:2022cqi,Penington:2022dhr,Gadde:2023zzj} and multi-EWCS \cite{Umemoto:2018jpc,Bao:2018fso}, have been developed to probe the intricate structure of multi-partite entanglement in holography  {\cite{Balasubramanian:2024ysu,Iizuka:2025bcc,Ju:2025eyn,Akers:2019gcv,Hayden:2021gno,Basak:2024uwc,Liu:2024ulq,Bao:2025psl}}.

In previous works \cite{Ju:2024kuc,Ju:2024hba,Ju:2025tgg}, we studied the multi-partite entanglement in holography by investigating the upper bound for holographic conditional mutual information (CMI) \(I(A:B|E)=S_{AE}+S_{BE}-S_E-S_{ABE}\).
CMI is a key quantity in both quantum information theory and holography. In quantum information theory, its nonnegativity—strong subadditivity \cite{Araki:1970ba}—is one of the most fundamental entropy inequalities, and the squashed entanglement \cite{Li:2014bep,Ju:2023dzo}, defined via a lower bound on CMI, is among the best entanglement measures for bipartite mixed states. In holography, CMI also appears in a  wide range of contexts, including in the c-theorem \cite{Bhattacharya:2014vja}, the differential entropy framework \cite{Balasubramanian:2013lsa}, the holographic entropy cone program \cite{Bao:2015bfa,Hernandez-Cuenca:2022pst,He:2020xuo}, and the entanglement contour \cite{Freedman:2016zud}.

In this work, we propose a new quantity, the EWCS triangle information on a mixed holographic state $ABE$, defined as a linear combination of EWCS terms as follows
\begin{equation}\label{EWCStri}
    \mathrm{EI}_{\Delta}(A:B|E):=\mathrm{EWCS}(A:EB)+\mathrm{EWCS}(B:EA)-\mathrm{EWCS}(E:AB).
\end{equation} 
We will see that it mirrors CMI closely both in its expression and in its mathematical structure. Moreover, by analyzing its upper bound, we find that a nontrivial generalization of the {multi-entanglement phase transition (MPT) diagram, a graphical framework introduced in \cite{Ju:2024kuc,Ju:2024hba} that identifies the critical configurations with maximal CMI,} provides a natural way to characterize and determine the configurations that saturate this bound  {on $\mathrm{EI}_{\Delta}(A:B|E)$}. While generic EWCS combinations are not strictly positive nor upper bounded, the special EWCS combination we proposed retains positivity and is also bounded from above. 

As EWCS corresponds to both entanglement of purification \cite{Takayanagi:2017knl} and reflected entropy \cite{Dutta:2019gen} of the boundary system, the boundary dual of this EWCS triangle information also has two expressions in terms of either the reflected entropy or the entanglement of purification of the boundary system. As our focus is primarily holographic, we will refer to each term in \eqref{EWCStri} simply as an EWCS, without specifying which boundary quantity it corresponds to. Nevertheless, we will discuss the quantum-information aspects of \eqref{EWCStri} under both proposals in the following and in Section \ref{secCMI}.

It is useful to first analyze this EWCS combination within the framework of canonical purification. For a mixed tripartite boundary state \(\rho_{ABE}\) with spectral decomposition \(\rho_{ABE}= \sum_{i} p_i \lvert \psi_i \rangle_{ABE} \langle \psi_i \rvert_{ABE}\),
its canonical purification is the pure state 
\begin{equation}\label{sqrtrho}
   \lvert \sqrt{\rho_{ABE}} \rangle= \sum_{i} \sqrt{p_i} \lvert \psi_i \rangle_{ABE} \lvert \psi_i \rangle_{A^*B^*E^*},
\end{equation}
living in the doubled Hilbert space \( \mathcal{H}_{ABE} \otimes \mathcal{H}_{A^*B^*E^*}\), 
where \(A^*B^*E^*\) is an auxiliary copy of \(ABE\). 
In holography, this corresponds to doubling the bulk entanglement wedge of \(ABE\) (\(\mathrm{EW}(ABE)\)) and gluing it to the original one along the minimal RT surfaces that separate the boundary regions \cite{Dutta:2019gen,Bao:2019zqc,Chu:2019etd,Harper:2020wad}, as illustrated in Figure \ref{wormhole}.
In the replicated geometry, the systems living on the three mouths of the wormhole $AA^*$, $BB^*$ and $EE^*$ share a pure tripartite state.
The third system \(EE^*\) is the complement of \(AA^*BB^*\) and therefore a purification of \(AA^*BB^*\) which possesses a mixed state.
Notice that each of the three EWCS terms in \eqref{EWCStri} corresponds to a holographic entanglement entropy in this setup, \eg, \(\mathrm{EWCS}(A:EB)=S_{AA^*}\). $\mathrm{EI}_{\Delta}(A:B|E)$ is then found to be a half of the mutual information between $AA^*$ and $BB^*$ (\ie, $\mathrm{EI}_{\Delta}(A:B|E)=\frac{1}{2}I(AA^*:BB^*)$) in the state $\lvert \sqrt{\rho_{ABE}} \rangle$.
It follows that $\mathrm{EI}_{\Delta}(A:B|E)$ is strictly positive due to the positivity of the mutual information\footnote{There are some potential extensions to $\mathrm{EI}_{\Delta}(A:B|E)$. 
First, observing that if \(E\) is absent, one has $\mathrm{EI}_{\Delta}(A:B|E)=\frac{1}{2}I(AA^*:BB^*)=\mathrm{EWCS}(A:B)$, one may define the net enhancement in entanglement(correlation) provided by $EE^*$: the EWCS combination $\mathrm{EI}_{\Delta}(A:B|E)-\mathrm{EWCS}(A:B)$. The maximum value of this new combination among possible configurations of \(E\) is non-negative, but the combination itself fails to be non-negative in general.
Second, because $I(AA^*:BB^*)$ is bounded from below by $I(A:B)$ via monogamy of mutual information and strong subadditivity, one may construct another non-negative quantity $\mathrm{EI}_{\Delta}(A:B|E)-I(A:B)$.
Since both the additional terms do not depend on the alterable system \(E\) and remain fixed if \(AB\) is fixed, we focus only on $\mathrm{EI}_{\Delta}(A:B|E)$ in this work.}.

\begin{figure}[H]
\centering
\includegraphics[scale=0.7]{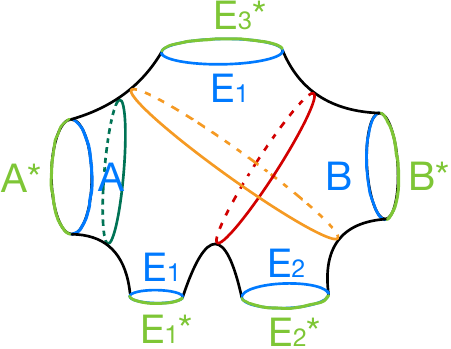}
\caption{{An illustration of a multi-mouth wormhole that is dual to the state \(\lvert \sqrt{\rho_{ABE}} \rangle\) in \eqref{sqrtrho}, with region \(E\) consisting of three intervals. 
The wormhole is constructed by doubling the entanglement wedge of \(ABE\) and gluing them together along the minimal RT surfaces that separate the three boundary regions, \ie, the black curves.
Three candidates of \(\mathrm{EoA}(A:B|E)\) are shown in three different colors in the wormhole. }
}
\label{wormhole}
\end{figure}

\subsection*{Entanglement of assistance (EoA)}

In this paper, we will show that both the CMI and EI$_\Delta$ are upper bounded by a quantum information quantity termed as ``entanglement of assistance" \(\mathrm{EoA}(A:B|T)\) which is a possible mixed state entanglement quantification that also involves a third auxilary party \(T\). Inspired by this upper bound, we propose that the maximum value of $\mathrm{EI}_{\Delta}(A:B|E)$ is a quantification of the {quantum entanglement} between systems \(A\) and \(B\) with the help of a third party \(E\).

\(\mathrm{EoA}(A:B|T)\) was first proposed in \cite{DiVincenzo:1998zz} to be the maximal entanglement that $A$ and $B$ can distill with the assistance of a third system that purifies $AB$.
It was also shown in \cite{Horodecki:2005vvo,Smolin:2005khf} that the entanglement of assistance is given by the minimum entanglement across any bipartite cut of the system which separates \(A\) from \(B\)
\begin{equation}\label{EoA}
    \mathrm{EoA}(A:B|T)=\underset{T_A,T_B}{\mathrm{Min}}(S(A T_A),S(T_B B)),
\end{equation}
where \(T=T_A \bigcup T_B\) is an auxiliary system that purifies $AB$ and the minimum is taken over all possible partitions of \(T\) into \(T_A\) and \(T_B\).

This expression of \(\mathrm{EoA}\) admits a straightforward holographic dual, by identifying the entanglement entropies with the areas of the corresponding RT surfaces.
In the special case of the canonical purification state \(\lvert \sqrt{\rho_{ABE}} \rangle\), which is dual to a multi-mouth wormhole, the holographic EoA can also be directly defined as the area of the minimal surface in the wormhole that separates region \(AA^*\) from \(BB^*\) (\(\mathrm{EoA}(AA^*:BB^*|EE^*)\)).
Several candidates for the minimal surface are illustrated in Figure \ref{wormhole}.
Due to the replication in the canonical purification, \(\mathrm{EoA}(AA^*:BB^*|EE^*)\) equals twice the area of the minimal surface that separates region \(A\) from \(B\) in the entanglement wedge of \(ABE\).
We will refer to this minimal surface in the original geometry as the ``minimal EWCS" and denote it by \(\mathrm{HE}(A:B|E)\),
\begin{equation}\label{ubofEI}
    \mathrm{HE}(A:B|E):=\underset{E_A,E_B}{\mathrm{Min}}(\mathrm{EWCS}(A E_A:E_B B))=\frac{1}{2}\mathrm{EoA}(AA^*:BB^*|EE^*),
\end{equation}
where the minimum is taken over all bipartite divisions of region \(E\) into \(E_A\) and \(E_B\) within the entanglement wedge of $ABE$ in the original state. Note that the left-hand-side of \eqref{ubofEI} is a quantity defined in the original geometry dual to \( \rho_{ABE} \), while the right-hand-side is a half of the EoA in the canonical purification state \(\lvert \sqrt{\rho_{ABE}} \rangle\).

More explicitly, to calculate the upper bound for the EWCS triangle information \(\mathrm{EI}_{\Delta}(A:B|E)\), we compute the maximum value of $\mathrm{EI}_{\Delta}(A:B|E)$ among all configurations of a $n$-partite region $E=\bigcup_{i=1}^{n} E_i$, while regions $A$ and $B$ remain fixed.
We work exclusively in AdS$_3$/CFT$_2$, although the EWCS triangle information can also be generalized to higher dimensions.
A generalization of the MPT diagrams \cite{Ju:2024kuc} is developed to analyze the problem, and it is revealed that $\mathrm{Max}(\mathrm{EI}_{\Delta}(A:B|E))$ has a rich structure that emerges as $X_{AB}$, the cross ratio of \(A\) and \(B\), varies.
We refer to {each distinct pattern of this structure in the parameter space as a ``phase"}, and \(X_{AB}\) works as an ``order parameter".
In different phases the optimal configuration of \(E\) that maximizes $\mathrm{EI}_{\Delta}(A:B|E)$ are qualitatively different, \ie, determined by different conditions involving connecting conditions of entanglement wedges and conditions that constrain EWCSs.
Notably, for each fixed interval number $n$ of \(E\), there exists a critical threshold of $X_{AB}$, below which $\mathrm{Max}(\mathrm{EI}_{\Delta}(A:B|E))$ vanishes. 
It is also observed that $\mathrm{Max}(\mathrm{EI}_{\Delta}(A:B|E))$ saturates the upper bound of entanglement of assistance \eqref{ubofEI} as long as $X_{AB}$ is larger than the second phase transition point.

The rest of this paper is organized as follows.
Section \ref{secCMI} develops the analogy between CMI and the EWCS triangle information and analyzes the bound of both quantities.
Section \ref{sec3} focuses on maximizing the EWCS triangle information by optimizing over \(E\) in the simplest setting where \(E\) is a single interval.
Section \ref{sec4} extends the analysis to the generalized case where $E$ consists of $n$ intervals.
Section \ref{sec5} concludes with implications for our results and comments on extensions to broader classes of entropy/EWCS inequalities.

\section{From CMI upper bound to EWCS upper bound and the entanglement of assistance}\label{secCMI}

\noindent Before any specific discussion of EWCS combinations, it is important to first consider the upper bound of CMI investigated in \cite{Ju:2024kuc} as the foundation of our study. By evaluating the maximal value of $I(A:B|E)$ with $A$ and $B$ fixed and $E$ arbitrarily chosen, \cite{Ju:2024kuc} provides insights on the  multi-partite entanglement structures in holography. We extend such approach to the analysis of EWCS combinations, which further probes the holographic multi-partite entanglement structures in the canonical purification geometry. Meanwhile, \cite{Ju:2024kuc} introduced a set of crucial tools of upper bounding holographic entanglement measures, most notably the MPT diagram. These tools can be naturally generalized to the EWCS scenario.

Specifically, in \ref{secubCMI} we briefly review the holographic upper bound of CMI discovered in \cite{Ju:2024kuc}, and in \ref{QIcmi} we further discuss the quantum information theoretic interpretation of this bound. Subsection \ref{cmitoewcs} focuses on the generalization from CMI to the upper bounds of EWCS combinations, especially the quantity that we study\textemdash the EWCS triangle information (\ref{EWCStri}). Finally, in \ref{qiewcs} the quantum information aspects of (\ref{EWCStri}) are explored.
\subsection{Holographic upper bound of CMI and the entanglement of assistance}\label{secubCMI}

\noindent In this section, we introduce the derivation of the holographic upper bound of the CMI $I(A:B|E)$ in \cite{Ju:2024kuc}. When fixing $A$ and $B$ while fine-tuning $E$, we first found that $I(A:B|E)$ reaches its maximum at the critical point of a series of simultaneous entanglement wedge phase transitions. Specifically, with $m$ intervals of $E$, there are $2m$ phase transition conditions uniquely fixing the $2m$ endpoints.

Aiming at a clear illustration of the various critical configurations, the MPT diagrams were introduced in \cite{Ju:2024kuc}. Each diagram corresponds to a critical configuration, with constraints ensuring that the intervals of $E$ are precisely fixed. From the diagrams, the (dis)connectivity of the entanglement wedges, as well as the $2m$ phase transition conditions can be directly read off. Across different diagrams, these phase transition conditions vary.

Thus, by comparing the CMI values of the diagrams, we first obtained three constraints for the diagram with maximal CMI. 
\begin{itemize}
    \item The entanglement wedge of $ABE$ must be totally connected.
    \item The entanglement wedge of $E$  must be totally disconnected.
    \item $I(A:E)=I(B:E)=0$.
\end{itemize}
Further analysis also provides rules that allow us to directly identify the diagram with the largest CMI. 

In a word, the MPT diagram encodes the phase-transition conditions satisfied by the maximal configuration, and thereby captures the (dis)connectivity requirements of all the entanglement wedges. Details about the MPT diagrams are introduced in Section {\ref{sec4.1}}. We will see that, when upper bounding EWCS combinations, these diagrams continue to be powerful in revealing the essential conditions that the maximal configuration must satisfy.

Meanwhile, with the (dis)connectivity of entanglement wedges in the maximal configuration known from the MPT diagrams, we have successfully obtained a general upper bound of CMI that is valid for arbitrary bulk geometries and dimensions:

\begin{equation}\label{FinalCMI}
    I(A:B|E) \leq 2S_D \quad\quad (D \supseteq A, \ D^c \supseteq B).
\end{equation}
Here $D\supseteq A$ is defined as the boundary region where $S_D$ corresponds to the minimal surface separating $\mathrm{EW}(A)$ from $\mathrm{EW}(B)$. This bound can be approached arbitrarily closely as the UV cutoff $\epsilon \to 0$, allowing the number of intervals in $E$ to become sufficiently large. 

In quantum information, this upper bound coincides with the definition of the { entanglement of assistance}\footnote{We thank Bart{\l}omiej Czech for that.} $\mathrm{EoA}(A:B|T)$ \cite{Smolin:2005khf,Horodecki:2005vvo}, the maximal entanglement that $A$ and $B$ can distill with the help of another subsystem $T$ that purifies $AB$. A more detailed quantum information theoretic analysis of this upper bound will be presented in the next subsection.

\subsection{Quantum information upper bound of CMI and the entanglement of assistance}\label{QIcmi}
\noindent In holography, we conclude that the upper bound of CMI $I(A:B|E)$ is given by the minimal surface that separates $A$ and $B$, while in quantum information, this corresponds to the entanglement of assistance. Such results naturally leads us to further explore the information theoretic interpretation of this upper bound.

Using subadditivity and strong subadditivity twice each, we can show that the CMI is always smaller than the entanglement of assistance, as follows.
\begin{theorem}
 Conditional mutual information is always smaller than the entanglement of assistance.
\end{theorem}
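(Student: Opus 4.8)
I would prove the quantitative statement $I(A:B|E)\le 2\,\mathrm{EoA}(A:B|T)$ for any system $T$ purifying $\rho_{AB}$ (with $\mathrm{EoA}$ as in \eqref{EoA}); this is the entanglement of assistance in the same normalization as the $2S_D$ of Section~\ref{secubCMI}, and equals $\mathrm{EoA}(AA^{*}:BB^{*}|EE^{*})$ in the canonical purification. The plan is: purify $\rho_{ABE}$ by a reference $R$, so that $T=ER$ purifies $\rho_{AB}$ and $ABER$ is pure, and establish the per-partition estimate
\begin{equation}\label{proposal-perpart}
 I(A:B|E)\ \le\ 2\,S_{AT_A}\ =\ 2\,S_{T_B B}\qquad\text{for every }T=T_A\cup T_B,
\end{equation}
the right-hand equality being purity of $ABER$. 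Minimising \eqref{proposal-perpart} over bipartitions then turns the bound into $2\min_{T_A,T_B}(S_{AT_A},S_{T_BB})=2\,\mathrm{EoA}(A:B|T)$.

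For \eqref{proposal-perpart} fix a partition and write $T_A=E_AR_A$, $T_B=E_BR_B$ (so $E=E_A\cup E_B$, $R=R_A\cup R_B$). The argument is three short steps. \textbf{(i)} Bound $I(A:B|E)=I(A:B|E_AE_B)$ by the unconditioned mutual information $I(AE_A:BE_B)$: enlarging the second argument of the CMI only increases it, $I(A:B|E_AE_B)\le I(A:BE_B|E_A)$ (strong subadditivity, $I(A:E_B|E_A)\ge0$), and then $I(A:BE_B|E_A)\le I(AE_A:BE_B)$ (subadditivity, $I(E_A:BE_B)\ge0$). \textbf{(ii)} Enlarge the first argument by $R_A$: $I(AE_A:BE_B)\le I(AE_AR_A:BE_B)$ (strong subadditivity, $I(R_A:BE_B|AE_A)\ge0$). \textbf{(iii)} Evaluate and close: since $AE_AR_ABE_B=ABER_A$ has complement $R_B$ in the pure state,
\begin{equation}
 I(AE_AR_A:BE_B)=S_{AT_A}+S_{BE_B}-S_{R_B},
\end{equation}
and the Araki--Lieb inequality $S_{BE_B}\le S_{T_BB}+S_{R_B}$ together with $S_{T_BB}=S_{AT_A}$ (purity) makes the right-hand side $\le 2S_{AT_A}$. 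Counting the nontrivial inequalities, this is subadditivity and strong subadditivity twice each.

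The computation is light; the one point that needs care is structural rather than algebraic. The monotonicity/chain-rule moves in steps (i)--(ii) must keep exactly $E$ — not $ER$ — in the conditioning slot, so only pieces of $E$ can be shuffled in at first and $R_A,R_B$ must enter as enlargements of the \emph{arguments}, not of the condition; consequently the minimisation over $\{T_A,T_B\}$ has to be performed last, after \eqref{proposal-perpart} is in hand for each cut. The factor of $2$ is genuine (it already appears as $2S_D$ and is saturated, e.g., when $\rho_{AB}$ is pure), so the result is most honestly phrased as $I(A:B|E)\le 2\,\mathrm{EoA}(A:B|T)=\mathrm{EoA}(AA^{*}:BB^{*}|EE^{*})$. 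When $\rho_{ABE}$ is already pure — as it is when $ABE$ is the whole boundary — one may drop $R$ and run the same three steps directly with partitions $E=E_A\cup E_B$.
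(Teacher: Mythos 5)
Your proposal is correct and is essentially the paper's own argument in different packaging: with your $E_A,E_B,R_A,R_B$ playing the roles of the paper's $e_1,e_2,g_1,g_2$, your three monotonicity steps plus Araki--Lieb reproduce the same combination of two subadditivity and two strong-subadditivity applications (Araki--Lieb being subadditivity applied to the purification), with the same intermediate quantities and the same per-cut bound $I(A:B|E)\le 2S_{AT_A}$ before minimizing over bipartitions. Your explicit bookkeeping of the factor of $2$ relative to $\mathrm{EoA}(A:B|T)$ is a sensible sharpening of the theorem's looser phrasing, but it does not change the substance of the argument.
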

\begin{proof}
Given a pure state density matrix $\rho_{ABEG}$, where $E=e_1\cup e_2$ and $G=g_1\cup g_2$, denote $T=e_1g_1$\footnote{Note that $e_1,e_2$ and $g_1,g_2$ serve as arbitrary partitions of $E$ and $G$ respectively. These partitions may be trivial, \eg, $e_1=\varnothing$ and $e_2=E.$}. We now prove that $I(A:B|E)\leq 2S_{AT}$ by applying subadditivity and strong subadditivity twice each, as follows\footnote{Although the proof is straightforward in principle, we have not found it in previous works.}:
\begin{equation}
    \begin{aligned}
    2S_{AT}&=(S_{A e_1g_1})+S_{Be_2g_2}\\
    &\geq (S_{A e_1g_1g_2})-(S_{g_1g_2})+[(S_{g_1})]+[S_{Be_2g_2}]\\
    &\geq S_{A e_1g_1g_2}+[S_{Be_2g_1g_2}]-S_{g_1g_2}\\
    &=(S_{Be_2})+S_{A e_1}+(S_{e_1e_2})-S_{e_1e_2}-S_{g_1g_2}\\
    &\geq (S_{Be_1e_2})+[(S_{e_2})]+[S_{A e_1}]-S_{e_1e_2}-S_{g_1g_2}\\
    &\geq [S_{A e_1e_2}]+S_{Be_1e_2}-S_{e_1e_2}-S_{g_1g_2}\\
    &= I(A:B|E).
    \end{aligned}
\end{equation}
In this derivation, all terms in parentheses correspond to applications of strong subadditivity, while all terms in square bracket correspond to applications of subadditivity. For all these inequalities to be saturated, we must have
\begin{equation}\label{saturation}
    \begin{aligned}
        I(Ae_1:g_2|g_1)&=0, \quad\quad I(Be_2:g_1|g_2)=0,\\
        I(g_1:Be_2g_2)&=0, \quad\quad I(g_2:Ae_1g_2)=0,\\
        I(e_1:B|e_2)&=0, \quad\quad I(e_2:A|e_1)=0,\\
        I(Ae_1:e_2)&=0,\quad\quad I(Be_2:e_1)=0,
    \end{aligned}
\end{equation}
where the four equalities in the left column are equivalent to those in the right column. Finally, we have shown that in quantum information theory, the CMI is also upper bounded by the entanglement of assistance {$\mathrm{EoA}(A:B|T)=\underset{T_A,T_B}{\mathrm{Min}}(S(A T_A),S(T_B B))$}. The nontrivial observation in holography is that the entanglement of assistance is always a \textbf{strict} upper bound of the CMI, because in principle we can always find a subregion $E$ that saturates the bound. This means that, for arbitrary $A$ and $B$, we can always find $E=e_1\cup e_2$ and $G=g_1\cup g_2$ such that (\ref{saturation}) is satisfied. This reveals the absence of genuine bipartite entanglement (like Bell pairs) between smaller subsystems in $A$, $B$ and their complements. In {\cite{Ju:2024kuc,Ju:2024hba}} we conclude that the bipartite entanglement can always ``emerge" from the tripartite entanglement in a finer partition of the bipartite system.
\end{proof}
\subsection{From CMI to EWCS combination}\label{cmitoewcs}

\noindent In the previous sections, we have already seen that evaluating the holographic upper bound of CMI leads to meaningful results in holography and even provides insights into quantum information theory. This motivates us to study the upper bounds of other holographic entanglement entropy combinations \cite{Ju:2025tgg}. Unfortunately, MPT diagrams lose their  power in general cases; that is, the upper bound may fail to correspond to any geometric configuration that reaches a multi-phase-transition point. As a result, it becomes difficult to determine the strict upper bound using a single geometric configuration. Instead, we developed methods to generate families of inequalities and compare them to identify the tightest one\textemdash a complicated procedure that we will not elaborate on here. In this paper, we aim to generalize the analysis of upper bounding CMI in a different direction: replacing the CMI with a nontrivial EWCS combination.

As an analogue of the CMI, we require the EWCS combination to be strictly positive in holography. We also demand that it takes a form similar to the CMI, and to have a highly nontrivial upper-bound configuration. To clarify what we mean by nontrivial, consider the combination $\mathrm{EWCS}(E:AB)-\mathrm{EWCS}(E:A)-\mathrm{EWCS}(E:B)$, with $A$, $B$ fixed and $E$ arbitrarily chosen. This quantity has a trivial upper bound: we can simply make $\mathrm{EW}(AE)$ and $\mathrm{EW}(BE)$ disconnected, while $\mathrm{EW}(ABE)$ remains connected. One then can directly choose the configuration maximizing $I(A:B|E)$ to make $\mathrm{EWCS}(E:AB)$ infinitely approach $S_{AB}$. Consequently, the upper bound of $\mathrm{EWCS}(E:AB)-\mathrm{EWCS}(E:A)-\mathrm{EWCS}(E:B)$ can be made arbitrarily close to its information theoretical upper bound, $S_{AB}$. 

After ruling out all trivial choices and all EWCS combinations that are not strictly positive, we arrive at the simplest nontrivial EWCS combination, given by
\begin{equation}\label{triieq}
    \mathrm{EWCS}(A:EB)+\mathrm{EWCS}(B:AE)-\mathrm{EWCS}(E:AB)\geq 0.
\end{equation}
The positivity of this quantity in holography can be strictly proven as follows. The terms $\mathrm{EWCS}(A:EB)$ and $\mathrm{EWCS}(B:AE)$ are the areas of the minimal surfaces within $\mathrm{EW}(ABE)$ which separate $A$ from $BE$ and $B$ from $AE$ respectively. The union of these two surfaces naturally serves as a candidate surface that separates $E$ from $AB$ in $\mathrm{EW}(ABE)$. Therefore, the sum of their areas must be greater than or equal to $\mathrm{EWCS}(E:AB)$. In the canonical purification geometry \cite{Dutta:2019gen,Bao:2019zqc,Chu:2019etd,Harper:2020wad} where the system $A^*B^*E^*$ purifies $ABE$, we have 
\begin{equation}
\begin{aligned}
    \mathrm{EWCS}(A:EB)+\mathrm{EWCS}(B:AE)-\mathrm{EWCS}(E:AB)&=\frac12(S_{AA^*}+S_{BB^*}-S_{EE^*})\\
    &=\frac12I(AA^*:BB^*).
\end{aligned}
\end{equation}
It can be observed that this quantity is not only strictly positive, but is in fact greater than $2I(A:B)$:
\begin{equation}
    I(AA^*:BB^*)\geq I(A:BB^*)+I(A^*:BB^*)\geq I(A:B)+I(A^*:B^*),
\end{equation}
where we have used MMI \cite{Hayden:2011ag} and SSA in the first and second inequalities respectively. Note that this relation is derived in the canonical purification geometry and is satisfied only in holography.

To express this combination in a more compact form, we define a quantity called the EWCS triangle information, denoted by $\mathrm{EI}_\Delta(A:B|E)$, as 
\begin{equation}\label{ewtriangledelta}
    \mathrm{EI}_\Delta(A:B|E)=\mathrm{EWCS}(A:EB)+\mathrm{EWCS}(B:AE)-\mathrm{EWCS}(E:AB).
\end{equation}
Note that the CMI can be rewritten in a similar form
\begin{equation}
    I(A:B|E)=I(A:EB)+I(B:AE)-I(E:AB)-I(A:B),
\end{equation}
where we can simply replace the mutual information with the corresponding EWCS to turn the CMI into $\mathrm{EI}_\Delta(A:B|E)$. Note that $I(A:B)$ could also be inserted in the definition for $\mathrm{EI}_\Delta(A:B|E)$, which however is not important when we fix $A$ and $B$, therefore, we omit the mutual information term in this definition. Moreover, the  exact upper bound of $\mathrm{EI}_\Delta(A:B|E)$ is related to the entanglement of assistance in the canonical purification geometry. 

{
In the following, we show that the EWCS triangle information is also bounded by the entanglement of assistance $\mathrm{EoA}(AA^*,BB^*|EE^*)$ in the canonical purification geometry. Since \(AA^*BB^*EE^*\) is a pure state, we have $I(AA^*:BB^*|EE^*)=I(AA^*:BB^*)$.
The EWCS triangle information $\mathrm{EI}_{\Delta}(A:B|E)$ is thus related to the CMI $I(AA^*:BB^*|EE^*)$ by
\begin{equation}\label{EICMI}
    \mathrm{EI}_{\Delta}(A:B|E)=\frac{1}{2}I(AA^*:BB^*)=\frac{1}{2}I(AA^*:BB^*|EE^*).
\end{equation}
As we have proved above that the CMI is bounded by twice the entanglement of assistance in the original geometry, this inequality also holds in the canonical purification geometry: $I(AA^*:BB^*|EE^*)\le 2\mathrm{EoA}(AA^*,BB^*)$.
Then by virtue of \eqref{EICMI} we have
\begin{equation}
    \mathrm{EI}_{\Delta}(A:B|E) \le \mathrm{EoA}(AA^*:BB^*),
\end{equation}
\ie, the upper bound of the EWCS triangle information is the entanglement of assistance of $AA^*$ and $BB^*$ in the canonical purification state.
Note that $\mathrm{EoA}(AA^*:BB^*|EE^*)=2 \mathrm{HE}(A : B|E) = 2\underset{E_A,E_B}{\mathrm{Min}}(\mathrm{EWCS}(A E_A:E_B B))$, which reflects the quantum entanglement between $A$ and $B$ with the help of $E$.}

\subsection{The saturation of {the lower bound of} the EWCS triangle information}\label{qiewcs}
\noindent Here, we briefly discuss the quantum information aspects of the EWCS triangle inequality (\ref{triieq}). As we have shown earlier, it has a clear interpretation in canonical purification\textemdash half of the mutual information $I(AA^*:BB^*)$. In this subsection, we adopt the EoP=EWCS proposal {\cite{Takayanagi:2017knl}}, and investigate this combination for general quantum states beyond holography.

As shown in Figure \ref{SSEOP}, using the surface-state correspondence \cite{Miyaji:2015yva}, we can take $A'$ to be the subsystem that minimizes $S_{AA'}$, and $B'$ the one minimizing $S_{BB'}$. Taking the complement of $AA'BB'E$ as subsystem $O$, we have
\begin{equation}\label{EPRTIproof}
    E_P(A:BE)+E_P(B:AE)=S_{AA'}+S_{BB'}\geq S_{AA'BB'}=S_{EO}\geq E_P(E:AB),
\end{equation}
where the first inequality follows from subadditivity and the second from the definition of $E_P(E:AB)$.
\begin{figure}[h]
\centering
     \includegraphics[width=5cm]{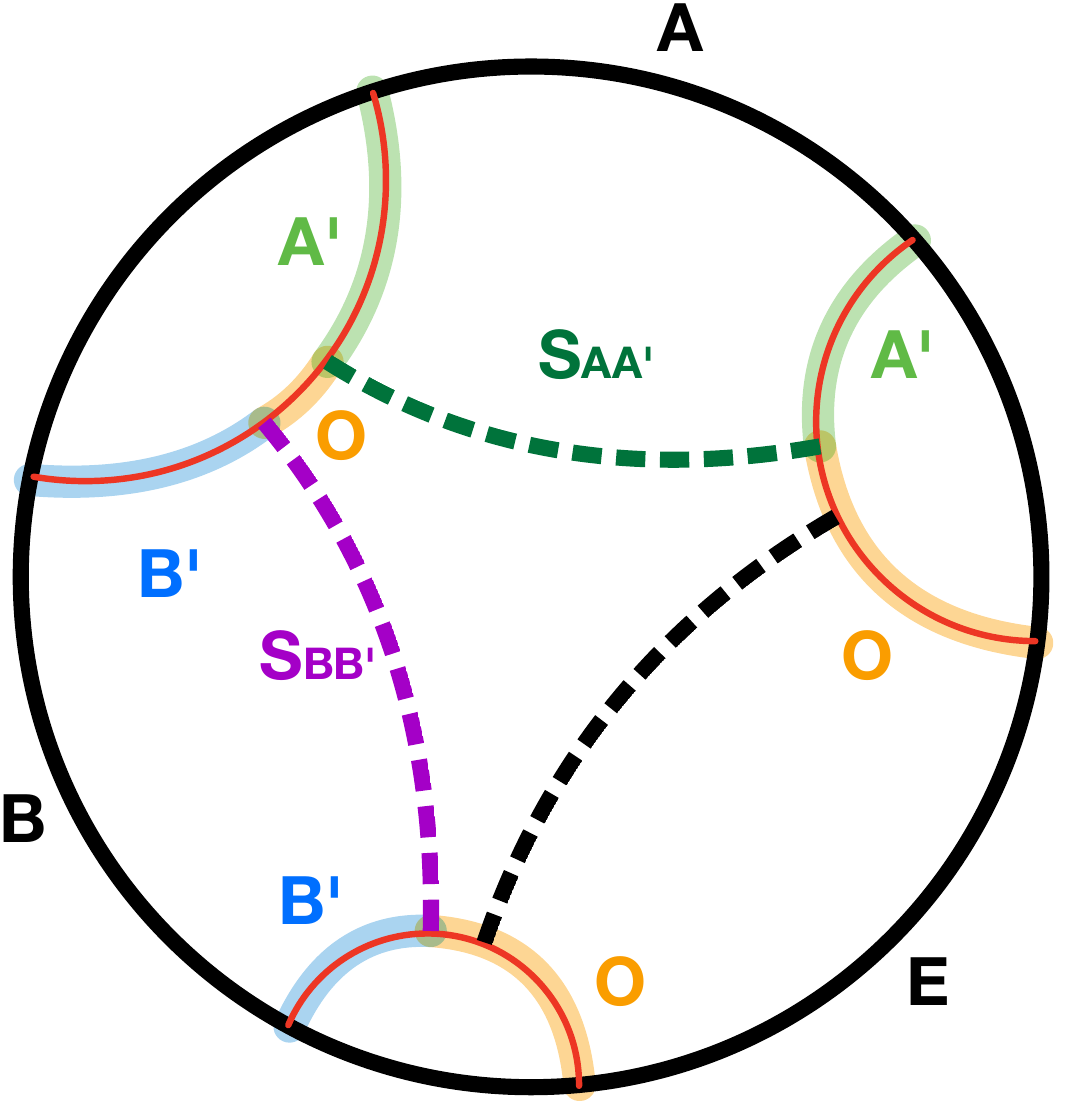}
 \caption{Proof of EWCS triangle inequality in holography.}
\label{SSEOP}
\end{figure}
For general quantum states, can we apply this proof directly? One might worry that the argument breaks down if $A'$ and $B'$ share overlapping degrees of freedom, namely subsystem $U=A'\cap B'$. However, we can show that this cannot happen because of weak monotonicity:
$S_{AA'}+S_{BB'}\geq S_{AA'\backslash U}+S_{BB'\backslash U}.$
This tells us that deleting $U$ from both $A'$ and $B'$ simultaneously does not increase $S_{AA'}+S_{BB'}$. Therefore, in the process of evaluating the entanglement of purification, we can always choose $A'$ and $B'$ so that they share no overlapping d.o.f. 

However, the step in (\ref{EPRTIproof}) that may fail is the assumption that the density matrix $\rho_{AA'BB'E}$ is consistent with the given density matrices $\rho_{AA'BE}$ and $\rho_{ABB'E}$. This issue is part of the quantum marginal problem \cite{Schilling:2014sjf,Schilling:2015wtq}. One special feature of the holographic setting is that the surface-state correspondence always allows us to construct a consistent $\rho_{AA'BB'E}$, thus ensuring the validity of proof (\ref{EPRTIproof}). If such a consistent $\rho_{AA'BB'E}$ does exist, we can claim that the saturation of the EWCS triangle inequality (\ref{triieq}) is equivalent to the fact that
\begin{itemize}
    \item There exist subsystems $A'B'E'$ that purify $\rho_{ABE}$ such that $S_{AA'},S_{BB'},$ and $S_{EE'}$ each attain their minimal values among all possible extensions of $\rho_{ABE}$, with $I(AA':BB')=0$.
\end{itemize}

\section{{$\mathrm{Max}(\mathrm{EI}_{\Delta}(A:B|E))$ with E being one interval and its maximum}}\label{sec3}

\noindent To determine the maximum value of the quantity $\mathrm{EI}_{\Delta}(A:B|E)$ in a holographic system, we proceed as follows. First, we fix the boundary regions 
$A$ and $B$, while varying the third region $E$. We work in the AdS$_3$/CFT$_2$ framework, where $E$ can in principle consist of any number 
$n$ of disjoint intervals. To build intuition, this section focuses on the simplest case: $E$ contains a single interval ($n=1$), {in which the maximization of $\mathrm{EI}_{\Delta}(A:B|E)$ can be computed analytically. The setup of this $n=1$ case is shown in Figure \ref{Fig5}.} 
The general multi-interval case will be addressed in the next section.

{The entanglement wedge cross section (EWCS)\cite{Takayanagi:2017knl} is the minimal area surface that splits the entanglement wedge into two subregions whose boundaries on the conformal boundary are \(A\) and \(B\) respectively.}
{In global AdS$_3$, the EWCS of two intervals $I_1$ and $I_2$ can be represented by the cross ratio $X_{I_1 I_2}$ as\footnote{{We use cross ratios as basic variables in our computation. The definition of a cross ratio is introduced in Appendix \ref{sec2}.}} 
\begin{equation}\label{EWCSAB}
    \mathrm{EWCS}(I_1:I_2)=\frac{c}{3}\log \left(\sqrt{X_{I_1 I_2}}+\sqrt{1+X_{I_1 I_2}} \right),
\end{equation}
where \(c\) is the central charge and we have used the identity \(\frac{L}{G_N}=\frac{2c}{3}\)\cite{Brown:1986nw}.
It is easy to check that \eqref{EWCSAB} holds in the special case where the entanglement wedge of $I_1 I_2$ has {interchange} symmetry, and it is conformal invariant.
Since all other configurations can be brought to be {interchange} symmetric by conformal transformations, \eqref{EWCSAB} holds for all the configurations of $I_1 I_2$.
$\mathrm{EWCS}(I_1:I_2)$ may also be defined by the cross ratios of two gap regions $G_1$ and $G_2$ which are ``inside" and ``outside" $I_1$ and $I_2$, since we have $X_{I_1 I_2}=\frac{1}{X_{G_1 G_2}}$.
Therefore, we can write
\begin{equation}\label{EWCSDD}
    \mathrm{EWCS}(I_1:I_2)=\frac{c}{3}\log \left(\sqrt{\frac{1}{X_{G_1 G_2}}}+\sqrt{1+\frac{1}{X_{G_1 G_2}}} \right).
\end{equation}}

The prescription for 
$\mathrm{EI}_{\Delta}(A:B|E)$ involves a sum of three entanglement wedge cross-section terms. 
{In the case where $E$ consists of a single interval ($n=1$), the whole system has three degrees of freedom due to the conformal symmetry.
The specific geodesic configuration contributing to each term is controlled by the cross ratios among the gap regions, $X_{D_0 D_1}$, $X_{D_0 D_2}$ and $X_{D_1 D_2}$, as dictated by \eqref{EWCSDD}.
Although these could serve as the three independent variables in the analysis of $\mathrm{Max}(\mathrm{EI}_{\Delta}(A:B|E))$, we instead transform them to cross ratios involving the three intervals, $X_{AE}$, $X_{BE}$ and $X_{AB}$, as shown later.
This choice is more convenient because $A$ and $B$ should be fixed and $X_{AB}$ remains constant.}
For each fixed $X_{AB}$, the cross ratios $X_{AE}$ and $X_{BE}$ span a two-dimensional parameter space. At every point in this space, we can compute a value for $\mathrm{EI}_{\Delta}(A:B|E)$.

This parameter space naturally divides into several distinct regions. Within each region, $\mathrm{EI}_{\Delta}(A:B|E)$
 is computed from a different set of geodesics, leading to different functional dependencies on the parameters. The structure of this partition—specifically, the number and arrangement of these regions—depends critically on the value of $X_{AB}$. As $X_{AB}$ is tuned from zero to larger values, this structure undergoes qualitative changes at certain critical points. 
 We refer to the distinct pattern of regions between two successive critical points as a phase. $X_{AB}$ could be viewed as an ``order parameter" then.

Thus, as $X_{AB}$ increases from zero, the system passes through several phases. In each phase, the parameter plane is partitioned differently, and within each sub-region, $\mathrm{EI}_{\Delta}(A:B|E)$
is given by a distinct geometric quantity. The maximum value of
$\mathrm{EI}_{\Delta}(A:B|E)$ within a phase typically occurs at one particular intersection point between these regions. By finding this maximum for each phase, we obtain the maximum value for every $X_{AB}$, allowing us to plot the maximum $\mathrm{EI}_{\Delta}(A:B|E)$
 as a function of 
$X_{AB}$.

\begin{figure}[H]
\centering
\includegraphics[width=5cm]{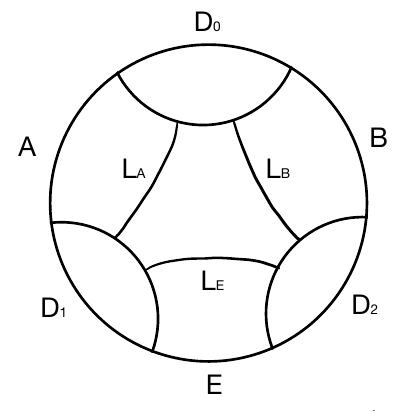}
\caption{Calculation of the EWCS triangle information $\mathrm{EI}_{\Delta}(A:B|E)$ when tuning E, in the case where $n=1$ and the entanglement wedge of $ABE$ is fully connected. Three EWCSs $L_A$, $L_B$ and $L_E$, which are relevant for the computation of $\mathrm{EI}_{\Delta}(A:B|E)$, are shown in the diagram.
 }
\label{Fig5}
\end{figure}

{We now begin the explicit computation for the case where $E$ consists of a single interval ($n=1$) and} analyze the behavior of $\mathrm{EI}_{\Delta}(A:B|E)$ with $X_{AE}$, $X_{BE}$, and $X_{AB}$ being three independent variables. All other cross ratios and EWCSs can be expressed in terms of them. For fixed $X_{AB}$, we maximize $\mathrm{EI}_{\Delta}(A:B|E)$ over the pair ($X_{AE}$, $X_{BE}$), and then vary $X_{AB}$ to study how it affects the maximized value.
For notational convenience, we denote reciprocals with an overbar: $\overline{X}_{AB} := 1/X_{AB}$, $\overline{X}_{AE} := 1/X_{AE}$, $\overline{X}_{BE} := 1/X_{BE}$, etc. 
It is also useful to define the hat symbol of a quantity $Y$ as $\widehat{Y} := e^{(3/c）Y}$, where $Y$ denotes the EWCS or the mutual information; we will often use this notation to remove logarithms and the overall factor $c/3$ appearing in the definition of quantities such as the EWCS and mutual information.

For $X_{AB} < 1$, it is easy to find that
the EWCS triangle information \(\mathrm{EI}_{\Delta}(A:B|E)\) is nonzero only when the entanglement wedge of the region $ABE$ is fully connected. However, for $X_{AB} \geq 1$, it can remain nonzero under an additional condition that in $ABE$, only the entanglement wedge of the subregion $AB$ is connected.
We first analyze the regime $X_{AB} < 1$, where only the fully connected case requires detailed analysis, since non-fully connected configurations yield zero. We will later turn to the case $X_{AB} \geq 1$.

The conditions for the entanglement wedge of $ABE$ to be fully connected are obtained in \eqref{con} of Appendix \ref{sec2.2} as follows
\begin{equation}\label{CC}
\begin{aligned}
   \frac{X_{AB}}{X_{D_1 D_2}} > 1, \quad \frac{1}{X_{D_0 D_1}} > 1, \quad
   \frac{1}{X_{D_0 D_2}} > 1, \quad \frac{1}{X_{D_1 D_2}} > 1,
\end{aligned}    
\end{equation} where the location of the gap regions $D_{0,1,2}$ could be found in Figure \ref{Fig5}.
In the fully connected case, each EWCS term in the combination $\mathrm{EI}_{\Delta}(A:B|E)$ has two possible configurations, as shown in Figure \ref{Fig5}. Therefore we have the following results for each of the terms
\begin{equation}
\begin{aligned}
    \widehat{\mathrm{EWCS}}(A:BE)=\mathrm{Min} (\widehat{L}_A, \widehat{L}_B \widehat{L}_E)\\
    \widehat{\mathrm{EWCS}}(B:AE)=\mathrm{Min} (\widehat{L}_B,  \widehat{L}_A \widehat{L}_E)\\
    \widehat{\mathrm{EWCS}}(E:AB)=\mathrm{Min}(\widehat{L}_E, \widehat{L}_A \widehat{L}_B),\\
\end{aligned}
\end{equation}
where 
\begin{equation}\label{LABE}
\begin{aligned}
    \widehat{L}_A= \sqrt{\overline{X}_{D_0 D_1}}+\sqrt{1+\overline{X}_{D_0 D_1}},\\
    \widehat{L}_B=\sqrt{\overline{X}_{D_0 D_2}}+\sqrt{1+\overline{X}_{D_0 D_2}},\\
    \widehat{L}_E=\sqrt{\overline{X}_{D_1 D_2}}+\sqrt{1+\overline{X}_{D_1 D_2}},\\
\end{aligned}
\end{equation}
as shown in \eqref{EWCSDD}.

The problem of finding the maximal value of $\mathrm{EI}_{\Delta}(A:B|E)=\mathrm{EWCS}(A:EB)+\mathrm{EWCS}(B:EA)-\mathrm{EWCS}(E:AB)$ is now reduced to maximizing the following ratio,
\begin{equation}\label{ratioE}
    \widehat{\mathrm{EI}}_{\Delta}(A:B|E)=\frac{\mathrm{Min}(\widehat{L}_A, \widehat{L}_B \widehat{L}_E) \mathrm{Min}(\widehat{L}_B, \widehat{L}_A \widehat{L}_E)}{\mathrm{Min}(\widehat{L}_E, \widehat{L}_A \widehat{L}_B)}.
\end{equation}
Since $\widehat{L}_A$, $\widehat{L}_B$ and $\widehat{L}_E$ are all larger than $1$ as seen from \eqref{LABE} due to the positivity of the lengths, this ratio has only four possible values\footnote{{For instance, if $\widehat{L}_B \widehat{L}_E<\widehat{L}_A$, we have $\widehat{L}_B <\widehat{L}_A$ and $\widehat{L}_E <\widehat{L}_A$. Then $\widehat{\mathrm{EWCS}}(B:EA)$ should be $\widehat{L}_B$ rather than $\widehat{L}_A \widehat{L}_E$. Therefore, values such as $\widehat{L}_A \widehat{L}_B \widehat{L}_E$ and $\widehat{L}_E^2$ are not possible.
}}
{\begin{equation}\label{posval}
\begin{aligned}
\widehat{\mathrm{EI}}_{\Delta}(A:B|E)=\left\{\begin{matrix}
  \frac{\widehat{L}_A \widehat{L}_B}{\widehat{L}_E}, & \quad \widehat{L}_E<\widehat{L}_A \widehat{L}_B, \; \widehat{L}_A<\widehat{L}_B \widehat{L}_E, \; \widehat{L}_B<\widehat{L}_A \widehat{L}_E\\
  \widehat{L}_A^2, & \widehat{L}_B>\widehat{L}_A \widehat{L}_E\\
  \widehat{L}_B^2, & \widehat{L}_A>\widehat{L}_B \widehat{L}_E\\
  1, & \widehat{L}_E>\widehat{L}_A \widehat{L}_B
\end{matrix}\right.
\end{aligned}
\end{equation}
provided that the entanglement wedge of \(ABE\) is fully connected.}
Which of these four values should be chosen depends on the parameter values of ${X}_{AE}$, ${X}_{BE}$ and ${X}_{AB}$.

We need to express the results \eqref{posval} in terms of the parameters ${X}_{AE}$, ${X}_{BE}$ and ${X}_{AB}$, or equivalently $\overline{X}_{AE}$, $\overline{X}_{BE}$, and $\overline{X}_{AB}$. {The values of }$\widehat{L}_A, \widehat{L}_B, \widehat{L}_E$ are already given by the cross ratios of the gap regions $\overline{X}_{D_0 D_1}$, $\overline{X}_{D_0 D_2}$, $\overline{X}_{D_1 D_2}$ as in \eqref{LABE}.
These can be converted into our preferred free variables  $\overline{X}_{AE}$, $\overline{X}_{BE}$, and $\overline{X}_{AB}$ using the following identities
\begin{equation}\label{idX1}
\begin{aligned}
  \overline{X}_{D_1 D_2}-\overline{X}^2_{AB} \overline{X}_{D_0 D_1} \overline{X}_{D_0 D_2}+\overline{X}_{AB}(1+\overline{X}_{D_0 D_1}+\overline{X}_{D_0 D_2}+\overline{X}_{D_1 D_2})=0,\\
  \overline{X}_{D_0 D_1}-\overline{X}^2_{BE} \overline{X}_{D_0 D_2} \overline{X}_{D_1 D_2}+\overline{X}_{BE}(1+\overline{X}_{D_0 D_1}+\overline{X}_{D_0 D_2}+\overline{X}_{D_1 D_2})=0,\\
  \overline{X}_{D_0 D_2}-\overline{X}^2_{AE} \overline{X}_{D_0 D_1} \overline{X}_{D_1 D_2}+\overline{X}_{AE}(1+\overline{X}_{D_0 D_1}+\overline{X}_{D_0 D_2}+\overline{X}_{D_1 D_2})=0,
\end{aligned}
\end{equation}
derived from their geometric relations. Detailed derivations can be found in Appendix \ref{CRidX1}. From these identities, we can express  $\overline{X}_{D_0 D_1}$, $\overline{X}_{D_0 D_2}$, $\overline{X}_{D_1 D_2}$ using  $\overline{X}_{AE}$, $\overline{X}_{BE}$,  $\overline{X}_{AB}$
\begin{equation}\label{XDD}
\begin{aligned}
  &\overline{X}_{D_0 D_1}=\frac{Y}{2 \overline{X}_{AB} \overline{X}_{AE}},\quad
  \overline{X}_{D_0 D_2}=\frac{Y}{2 \overline{X}_{AB} \overline{X}_{BE}},\quad
  \overline{X}_{D_1 D_2}=\frac{Y}{2 \overline{X}_{AE} \overline{X}_{BE}}\\
  &Y=1+\overline{X}_{AB}+\overline{X}_{AE}+\overline{X}_{BE}+\sqrt{(1+\overline{X}_{AB}+\overline{X}_{AE}+\overline{X}_{BE})^2+4\overline{X}_{AB} \overline{X}_{AE} \overline{X}_{BE}}.
\end{aligned}
\end{equation}
 A useful observation from \eqref{XDD} is 
\begin{equation}\label{idX2}
    \frac{\overline{X}_{BE}}{\overline{X}_{D_0 D_1}}=\frac{\overline{X}_{AE}}{\overline{X}_{D_0 D_2}}=\frac{\overline{X}_{AB}}{\overline{X}_{D_1 D_2}}.
\end{equation} 
This identity can also be demonstrated diagrammatically, as the three different expressions correspond to three different kinds of combinations derived from subtracting diagrams, as shown in Figure \ref{Fig3b}.
The ratio \(\widehat{\mathrm{EI}}_{\Delta}(A:B|E)\) can then be expressed as a function of \(X_{AE}\), \(X_{BE}\), and \(X_{AB}\). The expression is rather lengthy, so we do not write it out here.

We now analyze how the parameter space 
\((X_{AE},X_{BE})\) is partitioned into different regions, each corresponding to a distinct functional expression of \(\mathrm{EI}_{\Delta}(A:B|E)\). At \(X_{AB}<1\), 
the combination \(\mathrm{EI}_{\Delta}(A:B|E)\) may vanish in two regions of the parameter space. The first is the region where the entanglement wedge of $ABE$ is {not totally connected} when \(X_{AE}\) and \(X_{BE}\) are small enough, which corresponds to the parameter region outside \eqref{CC}. 
{This region exists at all values of \(X_{AB}<1\) and is pictured in orange in Figures \ref{FigXABCC1} and \ref{FigXABCC2}.}
The second is the region where \(\widehat{\mathrm{EI}}_{\Delta}(A:B|E)\) picks the branch of 1 in \eqref{posval} when the entanglement wedge of $ABE$ is fully connected. 
In this region, both \(X_{AE}\) and \(X_{BE}\) are large enough and fall within a range where \(\widehat{L}_E > \widehat{L}_A \widehat{L}_B\).
This region only exists at \(X_{AB}< \frac{1}{2}(\sqrt{2} - 1) \approx 0.2071\) and {is pictured in purple in e.g. Figure \ref{FigXABCC1} (a), (b) and (c). The panel (d) of this figure shows that this region is no longer present in the parameter plane for \(X_{AB}=0.3\).}

The regions in the parameter plane for nonzero values of \(\widehat{\mathrm{EI}}_{\Delta}(A:B|E)\) could be divided into three regions corresponding to the first three values in \eqref{posval} {and the three regions are plotted in Figure \ref{FigXABCC1} and \ref{FigXABCC2} in different colors: white for $\frac{\widehat{L}_A \widehat{L}_B}{\widehat{L}_E}$, green for $\widehat{L}_A^2$, and blue for $\widehat{L}_B^2$. }
The boundaries between the four potential values \eqref{posval} ({dashed curves}) of the combination and the four connection conditions \eqref{CC} （{solid curves}） are shown in Figures \ref{FigXABCC1} and \ref{FigXABCC2} with several representative values for \(X_{AB}\).

The three nonzero regions do not necessarily exist simultaneously for all values of \(X_{AB}\). As we will show explicitly, at \(X_{AB}<0.0952\), none of the three nonzero regions could exist in the parameter space. As \(X_{AB}\) increases, one or all three nonzero regions could emerge. Thus, based on the presence or absence—and the distribution—of these three nonzero regions and the two zero regions in the parameter plane, the system can be seen as exhibiting several distinct phases, with \(X_{AB}\) acting as an ``order parameter". Within each phase, the maximum value of the EWCS triangle information \(\mathrm{EI}_{\Delta}(A:B|E)\) can be determined by comparing its values across all regions. As \(X_{AB}\) increases, the system undergoes phase transitions from one phase to another and several critical values of \(X_{AB}\) could be identified. 

For the case of $E$ having only one interval ($n=1$), only {four} phases exist, whereas larger $n$ yields more phases. In the following, we list the four phases for $n=1$: {three} at \(X_{AB}<1\) and the other at \(X_{AB}>1\).

\begin{figure}[h]
\centering
\subfigure[]{
		\begin{minipage}[b]{.47\linewidth}
			\centering
			\includegraphics[scale=0.22]{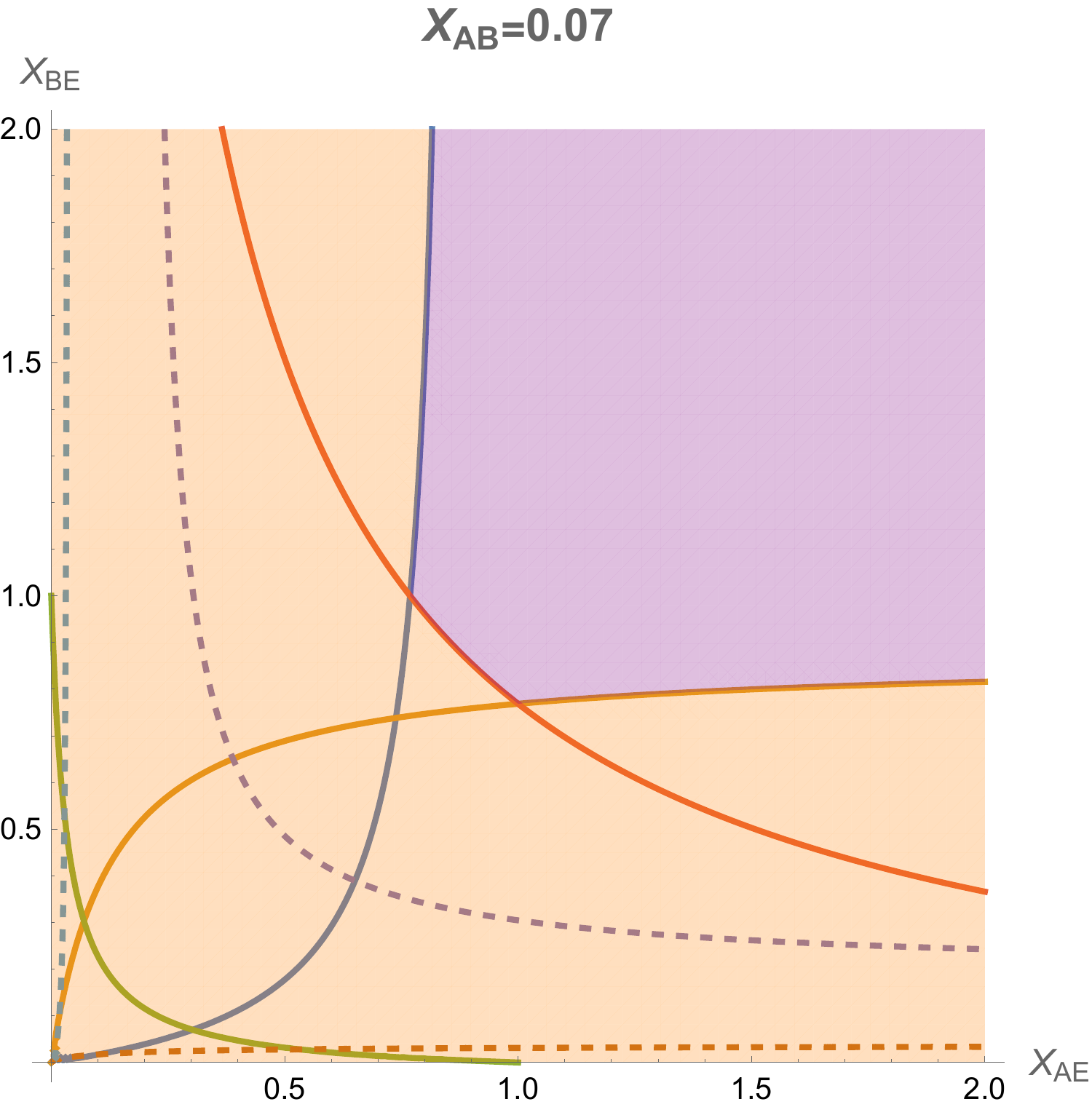}
		\end{minipage}
	}
	\subfigure[]{
		\begin{minipage}[b]{.47\linewidth}
			\centering
			\includegraphics[scale=0.22]{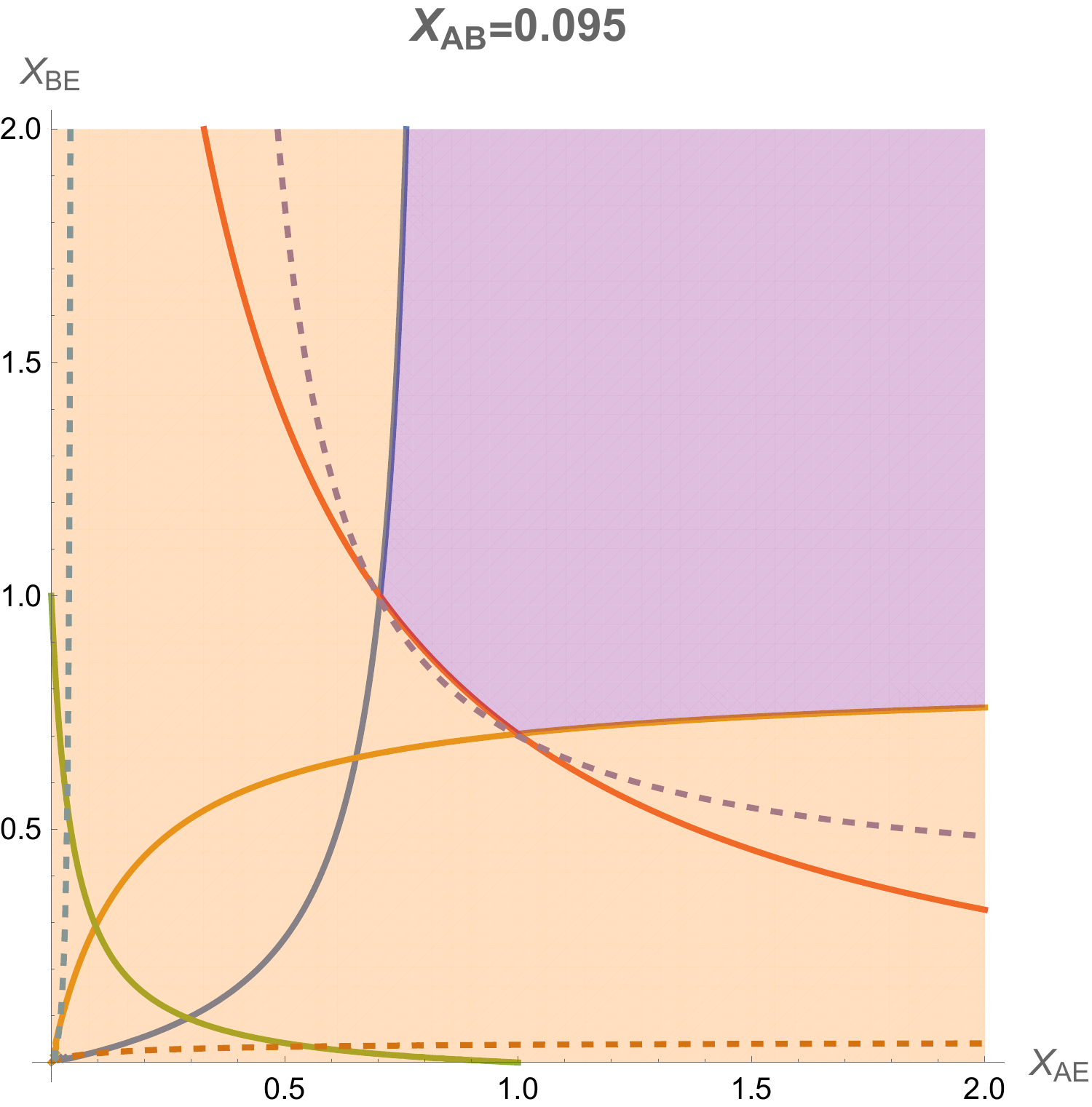}
		\end{minipage}
	}\\
	\subfigure[]{
		\begin{minipage}[b]{.47\linewidth}
			\centering
			\includegraphics[scale=0.22]{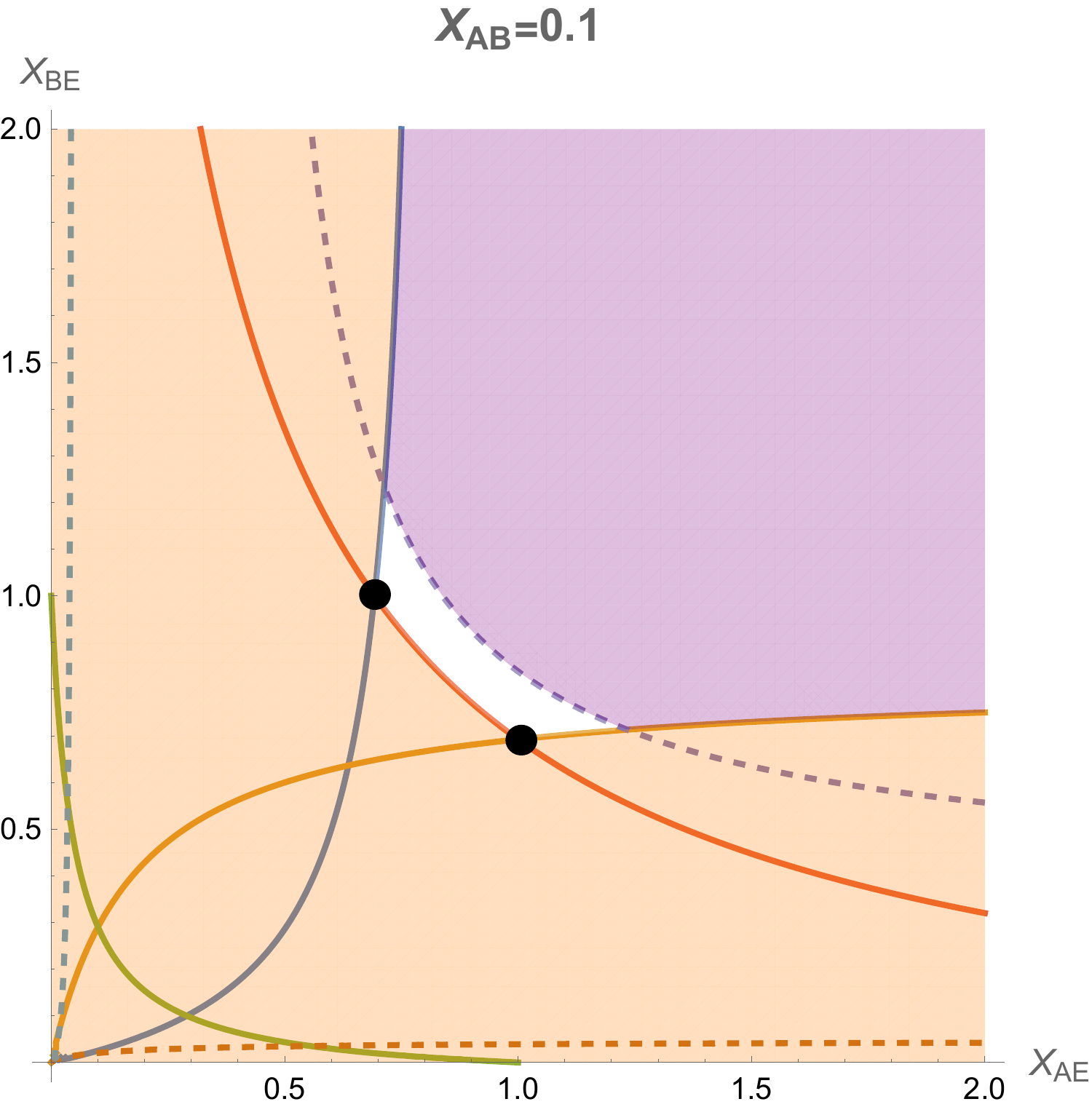}
		\end{minipage}
	}
	\subfigure[]{
		\begin{minipage}[b]{.47\linewidth}
			\centering
			\includegraphics[scale=0.22]{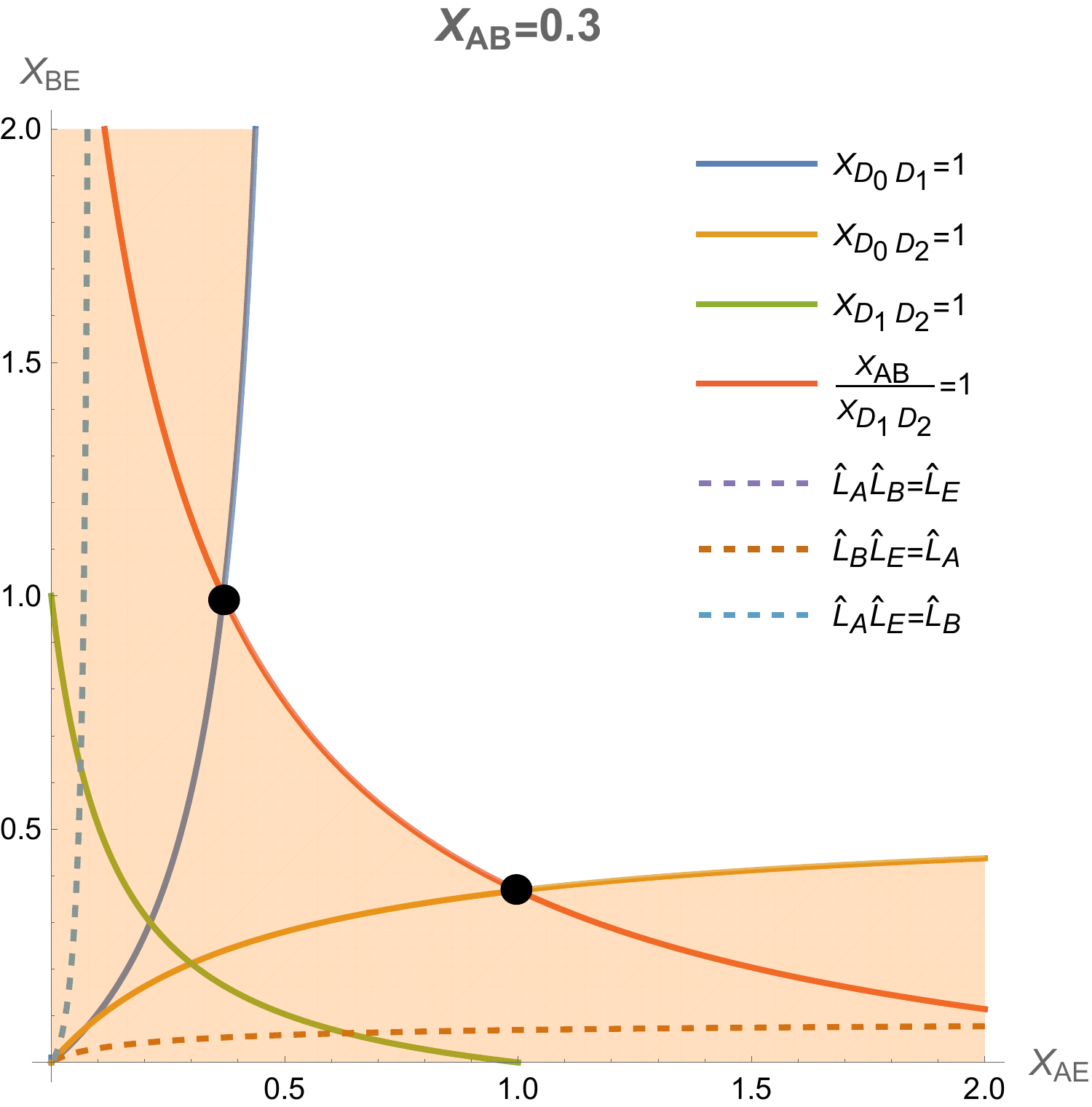}
		\end{minipage}
	}
 \caption{The division of the $(X_{AE}, X_{BE})$ parameter plane into regions where the functional forms of $\widehat{\mathrm{EI}}_{\Delta}(A:B|E)$ are distinct, with four relatively small values of $X_{AB}<1$. The boundaries of the connection conditions and the boundaries of the four possible values of $\widehat{\mathrm{EI}}_{\Delta}(A:B|E)$ are shown in solid and dashed curves respectively.
 The legends of all the 4 sub-figures are placed in the white region of panel (d).
 The orange regions violate at least one of the connection conditions such that $\mathrm{EW}(ABE)$ is not fully connected.
 The purple regions fulfill all the connection conditions as well as the condition $\widehat{L}_E>\widehat{L}_A \widehat{L}_B$, in which $\widehat{\mathrm{EI}}_{\Delta}(A:B|E)$ is 1 in \eqref{posval}.
  $\mathrm{EI}_{\Delta}(A:B|E)$ vanishes in both the orange and purple regions while remaining finite in the white regions in which $\widehat{\mathrm{EI}}_{\Delta}(A:B|E)=(\widehat{L}_A \widehat{L}_B)/\widehat{L}_E$.
 $X_{AB}$ increases across subfigures (a)–(d). 
 Panel (a) shows a representative parameter space of the first phase in which the orange and purple regions cover the whole plane such that $\mathrm{EI}_{\Delta}(A:B|E)$ vanishes in the entire space.
 In panel (b), $X_{AB}$ reaches the first critical point at which the curves $\widehat{L}_E=\widehat{L}_A \widehat{L}_B$(dashed purple), $X_{AB}/X_{D_1 D_2}=1$(solid red) and $X_{D_0 D_1}=1$(solid blue) (or $X_{D_0 D_2}=1$(solid orange)) intersect at a single point.
 Panel (c) and (d) depict the parameter space at selected $X_{AB}$ in the second phase where there exist regions(white) where $\mathrm{EI}_{\Delta}(A:B|E)$ does not vanish.
 The black dots mark the locations at which $\mathrm{EI}_{\Delta}(A:B|E)$ attains its maximum in each of these two sub-figures, namely the intersection points of the solid red and solid blue (or solid orange) curves.
 Since $\mathrm{\widehat{EI}}_{\Delta}(A:B|E)\neq 1$ for any {totally connected configuration of $E$} when $X_{AB}>0.2071$, the purple region disappears in (d).}
\label{FigXABCC1}
\end{figure}
    
    \begin{itemize}
    \item {\bf Phase 1: \(0< X_{AB} \le  0.0952\) with   \(\mathrm{Max}(\widehat{\mathrm{EI}}_{\Delta}(A:B|E))=0\) }

    Figures \ref{FigXABCC1}(a) and (b) illustrate that when \(X_{AB}\) is below a certain threshold, the two regions where \(\mathrm{EI}_{\Delta}(A:B|E)\) vanishes cover the entire plane of the \((X_{AE}, X_{BE})\) parameter space. In this figure, the zero region where the entanglement wedge of $ABE$ is disconnected is colored in orange, while the region where \(\mathrm{EI}_{\Delta}(A:B|E)\) vanishes in the fully connected regime is colored in purple. Note that these figures all possess a reflection symmetry along the $X_{AE}=X_{BE} $ line due to the interchange symmetry of $A$ and $B$. 
    {As \(X_{AB}\) increases from zero, the solid red curve (a boundary of the orange region) and dashed purple curve (the boundary of the purple region) keep moving closer and the critical \(X_{AB}\) value occurs when a middle non-vanishing region (white) between these two curves is about to emerge.}

    We now calculate the critical value of \(X_{AB}\) between the first and the second phases. This critical value is determined by the following critical conditions
\[
\begin{aligned}
   \widehat{L}_E = \widehat{L}_A \widehat{L}_B, \quad X_{BE} = 1, \quad X_{D_0 D_1} = 1 \quad
   \text{or} \quad
   \widehat{L}_E = \widehat{L}_A \widehat{L}_B, \quad X_{AE} = 1, \quad X_{D_0 D_2} = 1,
\end{aligned}
\]
which have been simplified using \eqref{idX2}. Upon exchanging \(D_1\), \(D_2\) and also \(A\), \(B\), the two sets of conditions switch roles due to the interchange symmetry of $A$ and $B$. For simplicity, we can focus on just one set of conditions without loss of generality. Utilizing \eqref{idX1}, the first set of the critical conditions simplifies to
\begin{equation}\label{ConD1}
\begin{aligned}
&\sqrt{\overline{X}_{D_1 D_2}} + \sqrt{1 + \overline{X}_{D_1 D_2}} = \left(\sqrt{\overline{X}_{D_0 D_1}} + \sqrt{1+\overline{X}_{D_0 D_1}}\right)\left(\sqrt{\overline{X}_{D_0 D_2}} + \sqrt{1+\overline{X}_{D_0 D_2}}\right),\\
&2\overline{X}_{D_0 D_1} + 1 + \overline{X}_{D_0 D_2} + \overline{X}_{D_1 D_2} - \overline{X}_{D_0 D_2} \overline{X}_{D_1 D_2} = 0, \quad \overline{X}_{D_0 D_1} = 1.
\end{aligned}    
\end{equation}
Note that we have expressed these conditions in terms of the reciprocals of cross ratios for conciseness of the equations. The only valid solution is
\begin{equation}\label{sol1}
\begin{aligned}
\overline{X}_{D_0 D_1} =& 1, \quad \overline{X}_{D_1 D_2} = \frac{1}{2} (5 + 4 \sqrt{3} + \sqrt{41 + 24 \sqrt{3}}) \approx 10.5075,\\
\overline{X}_{D_0 D_2} =& - 2 \sqrt{62 + 36 \sqrt{3} + 6 \sqrt{41 + 24 \sqrt{3}} + 4 \sqrt{123 + 72 \sqrt{3}}}\\&+\frac{17}{2} + 6 \sqrt{3} + \frac{3}{2} \sqrt{41 + 24 \sqrt{3}} \\
&\approx 1.4207.
\end{aligned}    
\end{equation}
Using \eqref{idX2}, we finally find that at the critical point \(X_{AB} = X_{D_1 D_2} \approx \frac{1}{10.5075} \approx 0.0952\) and \(X_{AE} = X_{D_0 D_2} \approx \frac{1}{1.4207} \approx 0.7039\), with \(X_{BE}=1\) already stated in the critical conditions.

As a result, the combination \(\mathrm{EI}_{\Delta}(A:B|E)\) vanishes for any choice of the interval \(E\) when \(0 < X_{AB} < 0.0952\). Notably, at the critical point the configuration \((X_{AE},X_{BE})=(0.7039, 1)\) occurs in a non-symmetric configuration of \(E\) (\ie, \(X_{AE} \neq X_{BE}\)) and due to the reflection symmetry, the {mirror} point \((X_{AE},X_{BE})=(1,0.7039)\) is also a critical configuration.
\end{itemize}

For \(X_{AB} > 0.0952\), there starts to exist one region in the \((X_{AE}, X_{BE})\) plane where the combination \(\mathrm{EI}_{\Delta}(A:B|E)\) does not vanish {and \(\widehat{\mathrm{EI}}_{\Delta}(A:B|E)\) equals \(\frac{\widehat{L}_A \widehat{L}_B}{\widehat{L}_E}\)} (the white region in Figures \ref{FigXABCC1}(c) and (d)). As \(X_{AB}\) continues to increase, {its maximum will transition through several phases as demonstrated below}. Due to the interchange symmetry of $A$ and $B$, there are always two maximum points {which are depicted as black dots in Figures \ref{FigXABCC1} and \ref{FigXABCC2}}. In the following discussion, we will focus on just one of these points \textemdash {the one that satisfies \(X_{BE}>X_{AE}\)}\textemdash as \(X_{AB}\) varies.

\begin{itemize}
\item {\bf Phase 2: \(0.0952 <X_{AB} \le 0.7039 \)}

\begin{figure}[h]
\centering
	\subfigure[]{
		\begin{minipage}[b]{.47\linewidth}
			\centering
			\includegraphics[scale=0.22]{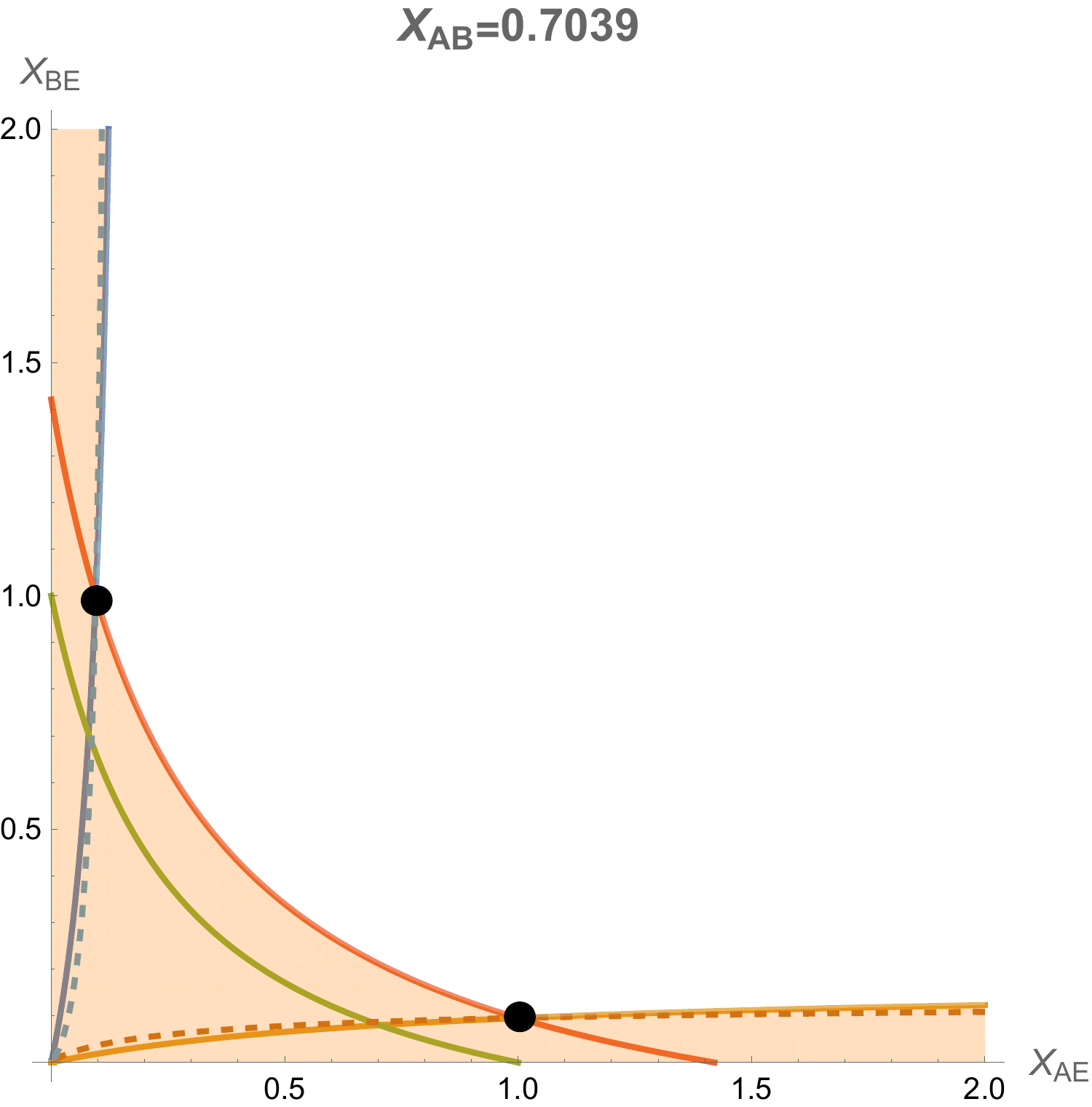}
		\end{minipage}
	}
	\subfigure[]{
		\begin{minipage}[b]{.47\linewidth}
			\centering
			\includegraphics[scale=0.22]{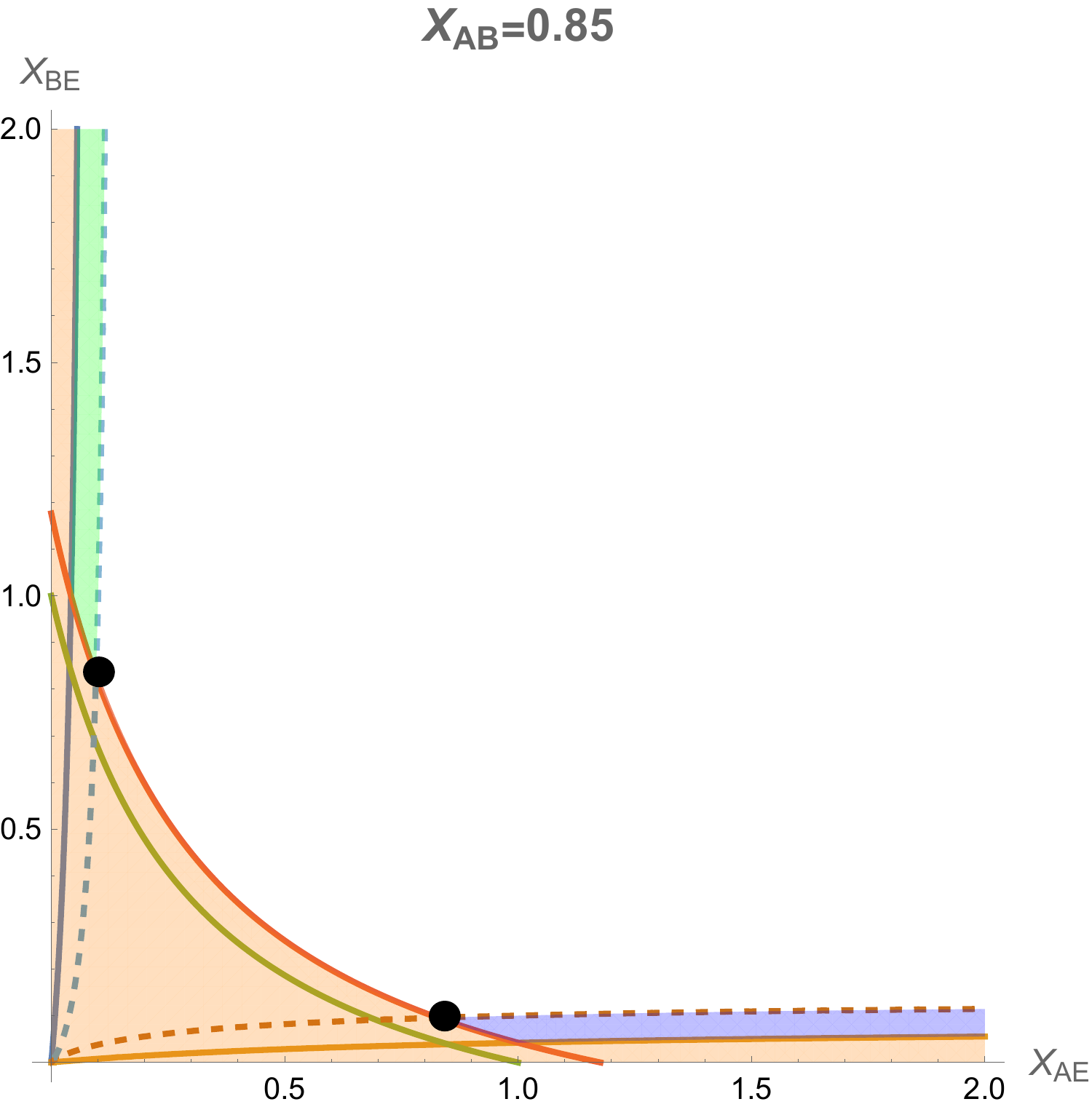}
		\end{minipage}
	}\\
	\subfigure[]{
		\begin{minipage}[b]{.47\linewidth}
			\centering
			\includegraphics[scale=0.22]{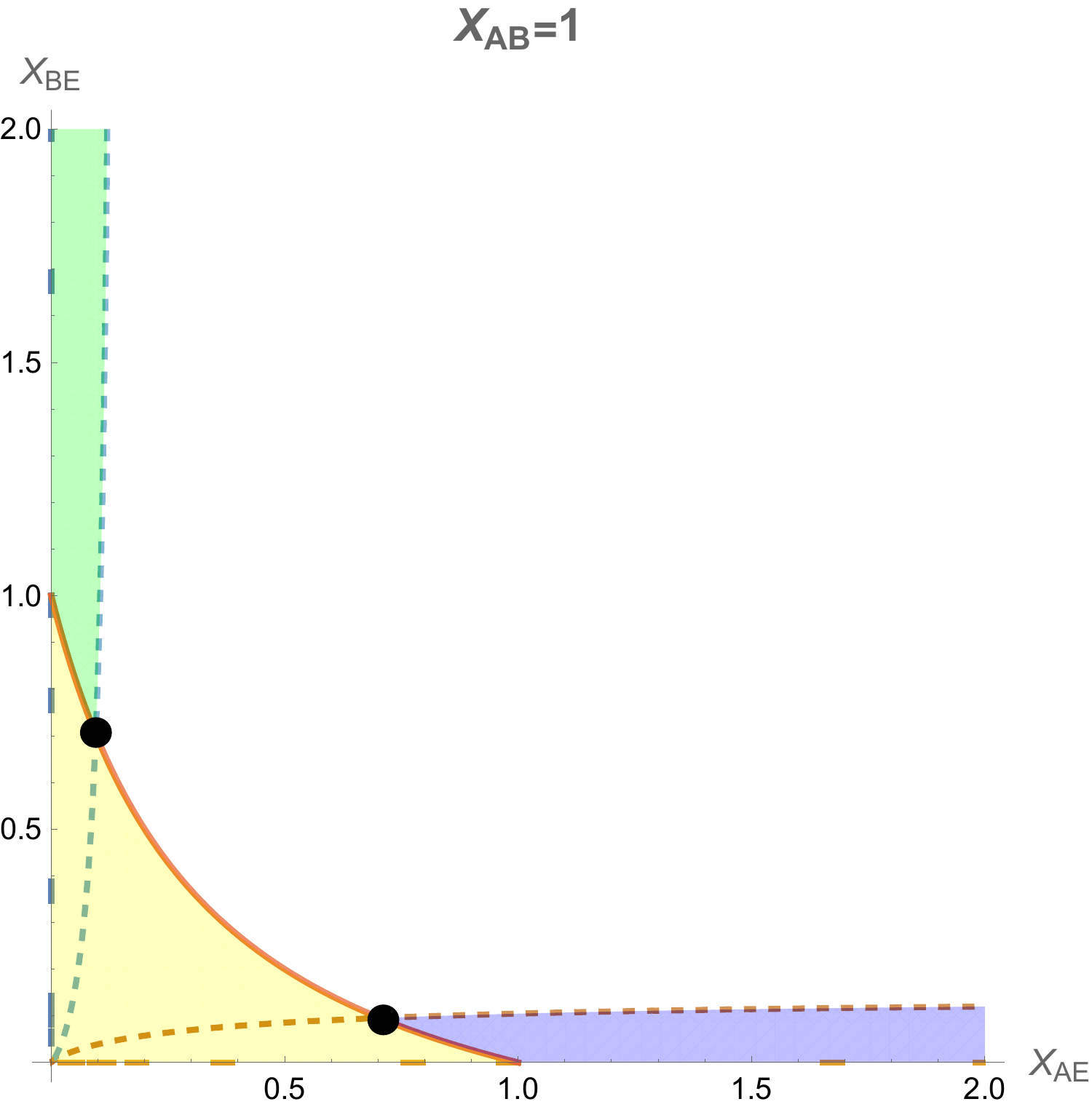}
		\end{minipage}
	}
	\subfigure[]{
		\begin{minipage}[b]{.47\linewidth}
			\centering
			\includegraphics[scale=0.22]{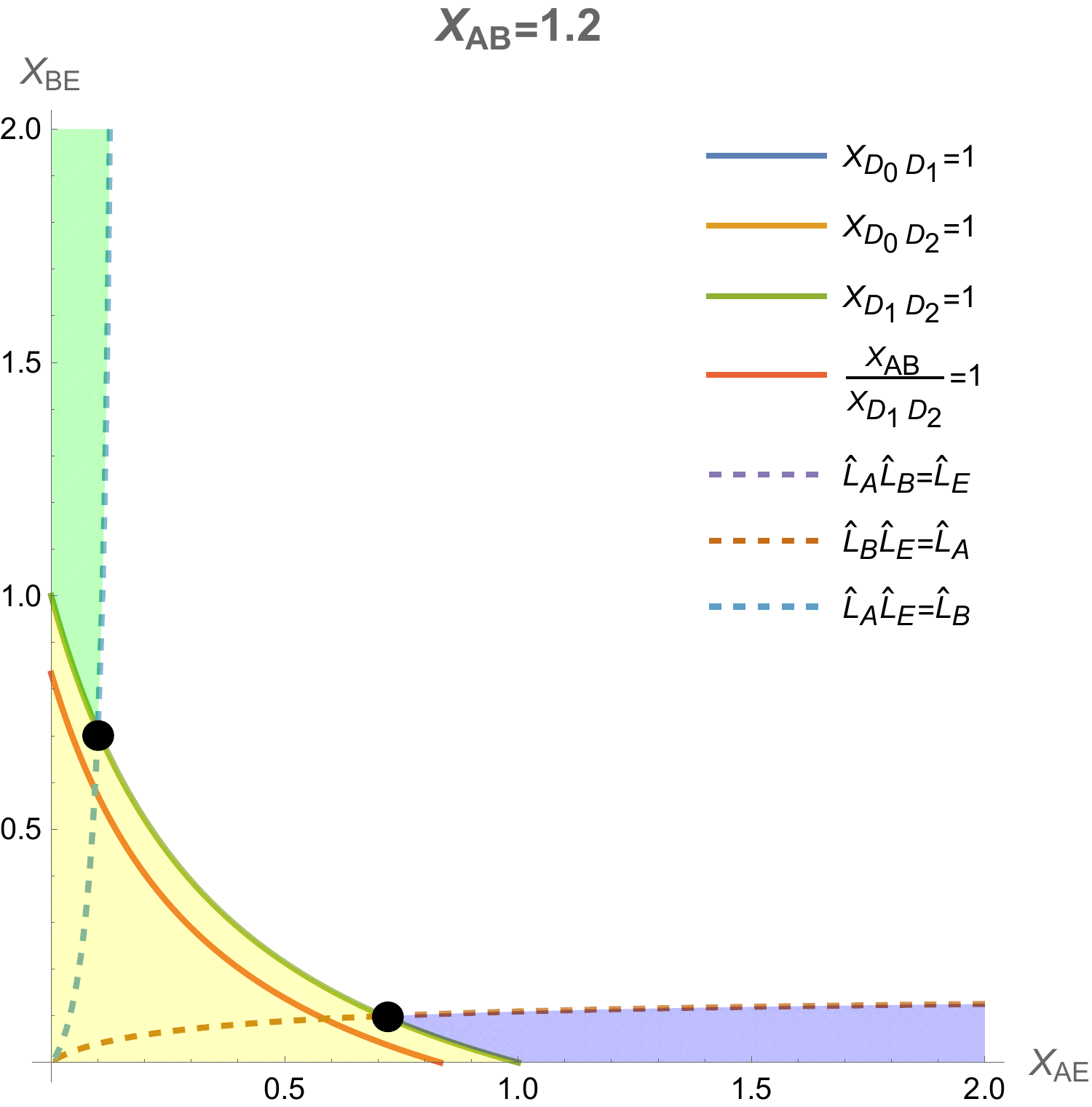}
		\end{minipage}
	}
 \caption{The division of the $(X_{AE},X_{BE})$ parameter plane into regions where the functional forms of $\mathrm{EI}_{\Delta}(A:B|E)$ are distinct, with four relatively large values of $X_{AB}$. 
 The boundaries of the connection conditions and the boundaries of the four possible values of $\widehat{\mathrm{EI}}_{\Delta}(A:B|E)$ are shown in solid and dashed curves respectively.
 The legends of all the four sub-figures are placed in the white region of panel (d). 
 Different shaded regions indicate different functional forms of $\widehat{\mathrm{EI}}_{\Delta}(A:B|E)$.
 The white, green and blue regions correspond to $(\widehat{L}_A \widehat{L}_B)/\widehat{L}_E$, $\widehat{L}_A^2$ and $\widehat{L}_B^2$ respectively.
 In the orange regions, $\mathrm{EI}_{\Delta}(A:B|E)$ vanishes because $\mathrm{EW}(ABE)$ is not fully connected and $X_{AB}<1$.
 In the yellow regions, $\mathrm{EW}(ABE)$ is also not fully connected  but $\mathrm{EW}(AB)$ is connected such that $\mathrm{EI}_{\Delta}(A:B|E)$ does not vanish.
 The black dots mark the locations at which $\widehat{\mathrm{EI}}_{\Delta}(A:B|E)$ attains its maximum in each of the sub-figures.
 $X_{AB}$ increases across subfigures (a)–(d), with each panel showing either a critical point or a phase between critical points.
 (a) The second critical point at which the curves $\widehat{L}_B=\widehat{L}_A \widehat{L}_E$(dashed blue), $X_{AB}/X_{D_1 D_2}=1$(solid red) and $X_{D_0 D_1}=1$(solid blue) (or $X_{D_0 D_2}=1$(solid orange)) intersect at a single point.
 (b) The third phase in which the maximum points locate at the intersection points of the solid red curve and dashed blue (or dashed brown) curve. 
 (c) The third critical point at which the solid red and solid green curves coincide and their intersection with the dashed blue (or dashed brown) curve determine the maximum points.
 (d) The fourth phase in which the maximum points locate at the intersection points of the solid green curve and dashed blue (or dashed brown) curve.}
\label{FigXABCC2}
\end{figure}

In the second phase, \(\mathrm{EI}_{\Delta}(A:B|E)\) reaches its maximum at the intersection point of the solid red and solid blue curve, indicating the following connection conditions are satisfied simultaneously at that point
\[
X_{BE} = 1, \quad X_{D_0 D_1} = 1,
\]
as shown in panel (c) and (d) of Figure \ref{FigXABCC1}.
At this stage, we find the maximum value of the ratio \eqref{ratioE} is in the region  \(\frac{\widehat{L}_A \widehat{L}_B}{\widehat{L}_E}\). 
As \(X_{AB}\) continues to grow, the maximal point moves toward the curve \(\widehat{L}_B = \widehat{L}_A \widehat{L}_E\) (the dashed blue curve), which corresponds to the transition boundary between \(\frac{\widehat{L}_A \widehat{L}_B}{\widehat{L}_E}\) and \(\widehat{L}_A^2\) (see Figures \ref{FigXABCC1}(c), (d) and \ref{FigXABCC2}(a)). The second phase ends when the maximal point reaches this transition boundary (the curve \(\widehat{L}_B = \widehat{L}_A \widehat{L}_E\)). In other words, the critical point between the second and third phases is determined by:
\begin{equation}\label{2ndcond}
  \widehat{L}_B = \widehat{L}_A \widehat{L}_E, \quad  X_{BE} = 1, \quad X_{D_0 D_1} = 1.
\end{equation}
When these equations are explicitly expressed in terms of the cross ratios, we obtain:
\begin{equation}\label{ConD2}
\begin{aligned}
&\sqrt{\overline{X}_{D_0 D_2}} + \sqrt{1 + \overline{X}_{D_0 D_2}} = \left(\sqrt{\overline{X}_{D_0 D_1}} + \sqrt{1+\overline{X}_{D_0 D_1}}\right)\left(\sqrt{\overline{X}_{D_1 D_2}} + \sqrt{1 + \overline{X}_{D_1 D_2}}\right)\\
&2\overline{X}_{D_0 D_1} + 1 + \overline{X}_{D_0 D_2} + \overline{X}_{D_1 D_2} - \overline{X}_{D_0 D_2} \overline{X}_{D_1 D_2} = 0, \quad \overline{X}_{D_0 D_1} = 1.\\
\end{aligned}    
\end{equation}
Comparing this with the condition \eqref{ConD1} that determines the first critical point, the only difference is an exchange between \(\overline{X}_{D_0 D_2}\) and \(\overline{X}_{D_1 D_2}\). Accordingly, the solutions of the cross ratios are swapped: \(\overline{X}_{D_1 D_2} \approx 1.4207\), \(\overline{X}_{D_0 D_1}=1\) and \(\overline{X}_{D_0 D_2} \approx 10.5075\). Consequently, the second critical value is \(X_{AB} = X_{D_1 D_2} \approx \frac{1}{1.4207} \approx 0.7039\). Therefore, the second phase exists between $0.0952<X_{AB}<0.7039$.
    
{As for the maximum of \(\widehat{\mathrm{EI}}_{\Delta}(A:B|E)\) at the second critical point, it can be deduced from the condition \eqref{2ndcond} and the fact that the second critical point locates at the transition boundary of \(\frac{\widehat{L}_A \widehat{L}_B}{\widehat{L}_E}\) and \(\widehat{L}_A^2\).
Thus, we have
\begin{equation}
    \mathrm{Max}(\widehat{\mathrm{EI}}_{\Delta}(A:B|E))=\widehat{L}_A^2=\left(\sqrt{\overline{X}_{D_0 D_1}}+\sqrt{1+\overline{X}_{D_0 D_1}} \right)^2=3+2\sqrt{2}
\end{equation}
at the second critical point.}    
    
    \item {\bf Phase 3: $0.7039<X_{AB} \le 1$}

After this second critical point, the maximum lies at the intersection point of the dashed blue and solid red curves (the black dots shown in \ref{FigXABCC2}(b)) which represent the conditions:
\begin{equation}\label{ConLXc}
\widehat{L}_B = \widehat{L}_A \widehat{L}_E, \quad \text{and} \quad  X_{BE} = X_{D_0 D_1},
\end{equation}
respectively.
Note that the second condition is equivalent to \(\frac{X_{AB}}{X_{D_1 D_2}} = 1\). These conditions determine the maximum point until \(X_{AB}\) reaches 1 (Figure \ref{FigXABCC2}(c)), which is the critical point where the dominance between the connection conditions \(\frac{X_{AB}}{X_{D_1 D_2}} > 1\) and \(\frac{1}{X_{D_1 D_2}} > 1\) switches. If \(X_{AB} > 1\), the configuration determined by \eqref{ConLXc} becomes disconnected (see Figure \ref{FigXABCC2}(b), (c), and (d)). 
The new maximum point for the fully connected configuration should hinge on the dominant connection condition \(X_{D_1 D_2} = 1\) and the transition curve \(\widehat{L}_B = \widehat{L}_A \widehat{L}_E\). Hence, \(X_{AB} = 1\) is the third critical point. 

{The configuration of \(E\) satisfies the following conditions
\begin{equation}\label{3rdcond}
    \widehat{L}_B = \widehat{L}_A \widehat{L}_E, \quad X_{AB} = 1, \quad X_{D_1 D_2} = 1.
\end{equation}
at the third critical point.
When these equations are explicitly expressed in terms of the cross ratios, we obtain:
\begin{equation}\label{ConD3}
\begin{aligned}
&\sqrt{\overline{X}_{D_0 D_2}} + \sqrt{1 + \overline{X}_{D_0 D_2}} = \left(\sqrt{\overline{X}_{D_0 D_1}} + \sqrt{1+\overline{X}_{D_0 D_1}}\right)\left(\sqrt{\overline{X}_{D_1 D_2}} + \sqrt{1 + \overline{X}_{D_1 D_2}}\right)\\
&2\overline{X}_{D_1 D_2} + 1 + \overline{X}_{D_0 D_1} + \overline{X}_{D_0 D_2} - \overline{X}_{D_0 D_1} \overline{X}_{D_0 D_2} = 0, \quad \overline{X}_{D_1 D_2} = 1.
\end{aligned}    
\end{equation}
Comparing this with the condition \eqref{ConD2} that determines the second critical point, the only difference is an exchange between \(\overline{X}_{D_0 D_1}\) and \(\overline{X}_{D_1 D_2}\). Accordingly, the solution of the cross ratios are swapped again. 
Now, we have \(\overline{X}_{D_1 D_2} \approx 1\), \(\overline{X}_{D_0 D_1} \approx 1.4207\) and \(\overline{X}_{D_0 D_2} \approx 10.5075\).
The maximum of \(\widehat{\mathrm{EI}}_{\Delta}(A:B|E)\) at the third critical point can then be implied from \eqref{3rdcond}:
\begin{equation}
    \mathrm{Max}(\widehat{\mathrm{EI}}_{\Delta}(A:B|E))=\widehat{L}_A^2=\left(\sqrt{\overline{X}_{D_0 D_1}}+\sqrt{1+\overline{X}_{D_0 D_1}} \right)^2 \approx 7.5504
\end{equation}
for the reason that the third critical point also lies on the transition boundary of \(\frac{\widehat{L}_A \widehat{L}_B}{\widehat{L}_E}\) and \(\widehat{L}_A^2\), but \(\overline{X}_{D_0 D_1}\) differs from its value at the second critical point.}  

\item {\bf Phase 4: $X_{AB}>1$}

For \(X_{AB} > 1\), the EWCS combination no longer vanishes, even if $\mathrm{EW}(ABE)$ is not fully connected. Intervals \(A\) and \(B\) may be connected to each other while remaining separate from interval \(E\). In this configuration, the combination becomes $2\mathrm{EWCS}(A:B)$ and the ratio \eqref{ratioE} is:
\begin{equation}\label{conAB}
\left(\sqrt{X_{AB}}+\sqrt{1+X_{AB}}\right)^2.
\end{equation}
In contrast, for the fully connected configuration, the maximal point satisfies 
\begin{equation}
\widehat{L}_B = \widehat{L}_A \widehat{L}_E, \quad \text{and} \quad  X_{D_1 D_2} = 1,
\end{equation}
and the maximum value of \eqref{ratioE} is:
\[
\widehat{L}_A^2 = \left(\sqrt{\overline{X}_{D_0 D_1}}+\sqrt{1+\overline{X}_{D_0 D_1}}\right)^2
\].
Since \(X_{AB} = \overline{X}_{D_0(D_1 E D_2)}\) and \(X_{D_0(D_1 E D_2)} > X_{D_0 D_1}\), it follows that \(X_{AB} < \overline{X}_{D_0 D_1}\). Thus, the maximal EWCS combination for the fully connected configuration is larger than that of the disconnected configuration. 
\end{itemize}

Figure \ref{MaxEL} illustrates \(\mathrm{Max}(\widehat{\mathrm{EI}}_{\Delta}(A:B|E))\) as a function of \(X_{AB}\) when region \(E\) consists of a single interval. There are three critical points and four distinct phases, where the maximum of the EWCS triangle information vanishes in the first phase, and grows monotonically in the subsequent three phases. 
As $X_{AB}$ approaches infinity, \(\mathrm{Max}(\mathrm{EI}_{\Delta}(A:B|E))\) behaves as $\frac{4 X_{AB}}{(\sqrt{3}-1)(\sqrt{2}+1)-1} \approx 5.2129 X_{AB}$.
\begin{figure}[H]
\centering
	\includegraphics[scale=0.35]{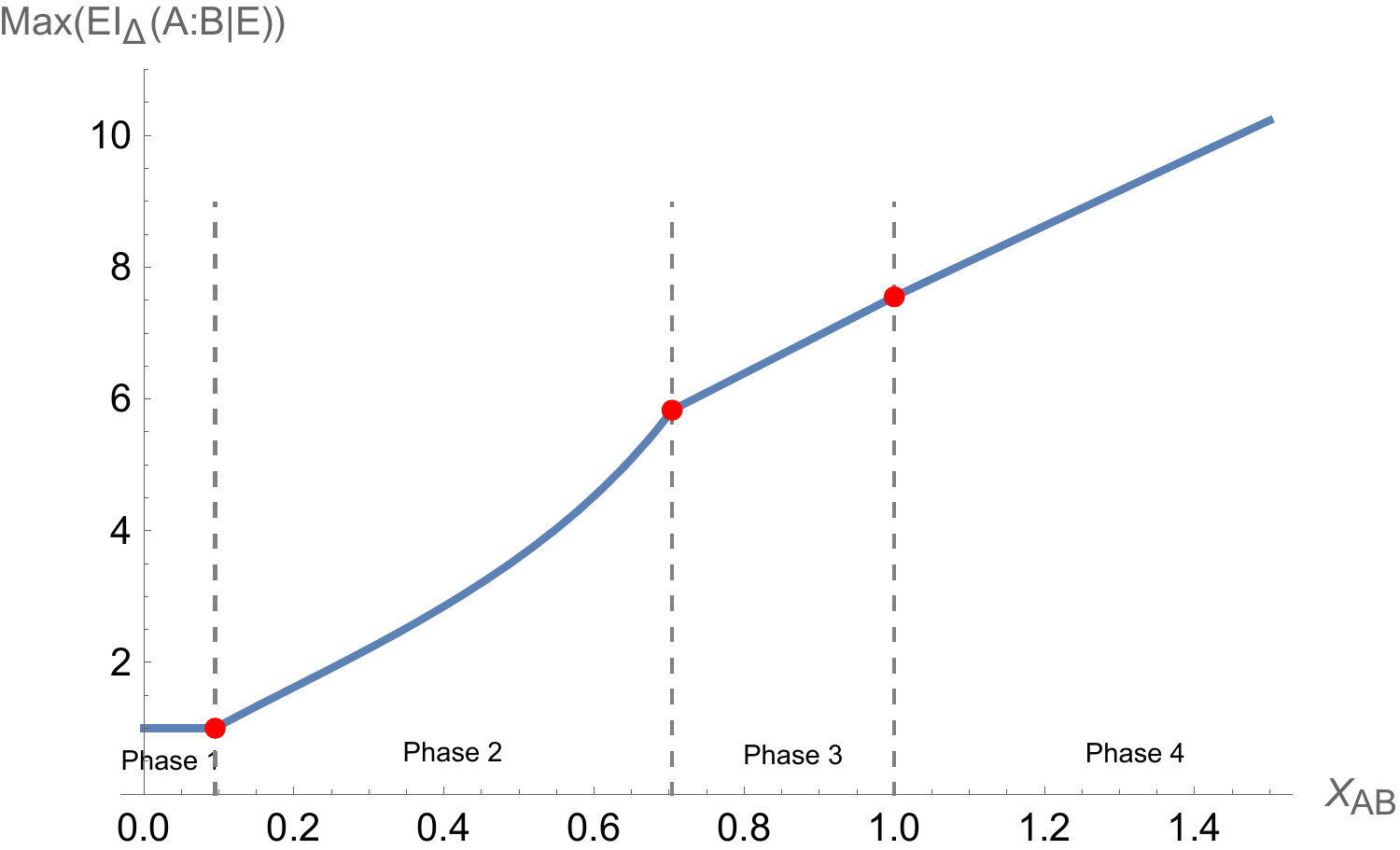}
 \caption{$\mathrm{Max}(\mathrm{EI}_{\Delta}(A:B|E))$ as a function of $X_{AB}$ in the case that region $E$ has only one interval. The phase transition points are denoted as red dots.}
\label{MaxEL}
\end{figure}

\subsection*{The EWCS triangle information and the entanglement of assistance.}

We may now compare \(\mathrm{Max}(\mathrm{EI}_{\Delta}(A:B|E))\) with \(\mathrm{HE}(A:B|E)\), the minimal EWCS that separates A from B in the entanglement wedge of \(ABE\), introduced in Section \ref{sec1}.
On one hand, by definition \(\mathrm{\widehat{HE}}(A:B|E)=\mathrm{Min}(\widehat{L}_A, \widehat{L}_B)\). 
In the above analysis, we mainly focus on the case that \(\widehat{L}_A < \widehat{L}_B\) which is equivalent with \(\widehat{L}_A > \widehat{L}_B\) because of the mirror symmetry between \(A\) and \(B\), so it is convenient to choose \(\mathrm{\widehat{HE}}(A:B|E)=\widehat{L}_A\).
On the other hand, it has been shown above that \(\mathrm{Max}(\mathrm{\widehat{EI}}_{\Delta}(A:B|E))=\widehat{L}_A^2\) in phase 3 and 4.
While in phase 1, \(\mathrm{Max}(\mathrm{\widehat{EI}}_{\Delta}(A:B|E))\) vanishes, and in phase 2, the maximum takes the value of \(\frac{\widehat{L}_A \widehat{L}_B}{\widehat{L}_E}\) and the maximal point lies in a region that \(\widehat{L}_B <\widehat{L}_A \widehat{L}_E\), thus we have \(\mathrm{Max}(\mathrm{\widehat{EI}}_{\Delta}(A:B|E))< \widehat{L}_A^2\) in phase 1 and phase 2.
We therefore conclude that \(\mathrm{Max}(\mathrm{\widehat{EI}}_{\Delta}(A:B|E)) \le \mathrm{\widehat{HE}}(A:B)^2\) and
\begin{equation}
   \mathrm{Max}(\mathrm{EI}_{\Delta}(A:B|E)) \le  2 \mathrm{HE}(A:B|E)
\end{equation}
with the inequality saturated in phase 3 and 4 but not in phase 1 and 2.

\section{Phase structure of $\mathrm{Max}(\mathrm{EI}_{\Delta}(A:B|E))$ with E being n intervals: MPT diagram}\label{sec4}

\noindent {The quantity \((\mathrm{EI}_{\Delta}(A:B|E))\) can be considered as a measure of the entanglement between regions \(A\) and \(B\) with the help of region \(E\)}. As Max\((\mathrm{EI}_{\Delta}(A:B|E))\) should involve a region $E$ that induces the most amount of entanglement between \(A\) and \(B\) facilitated by \(E\), it stands to reason that adding more intervals to \(E\) would strengthen the entanglement. Consequently, the maximum value of \((\mathrm{EI}_{\Delta}(A:B|E))\) could potentially increase as the number \(n\) of intervals in $E$ grows. Thus, in the case where \(E\) has multiple intervals, it becomes intriguing to examine whether there still exists a lower bound for \(X_{AB}\) below which \(\mathrm{EI}_{\Delta}(A:B|E)\) vanishes for any configuration of \(E\). 

In this section, we compute Max\((\mathrm{EI}_{\Delta}(A:B|E))\) for the more general scenario where region \(E\) consists of \(n\) intervals.
{Since there are two gap regions between \(A\) and \(B\), one is ``inside" \(AB\) and the other is ``outside" \(AB\), a general distribution of the \(n\) intervals would be \(n_1\) inside and \(n_2\) outside with \(n_1+n_2=n\).
A simple choice of the distribution is that all the intervals are restricted in one of the gap region.
However, our numeric computation, as will be demonstrated later, suggests that this choice has the minimal value of Max\((\mathrm{EI}_{\Delta}(A:B|E))\) among all possible distributions with fixed total interval number \(n\).
The optimal distribution that maximizes Max\((\mathrm{EI}_{\Delta}(A:B|E))\) turns out to be the most evenly distributed case, \ie, \(n_1=[\frac{n}{2}]\) and \( n_2=n-[\frac{n}{2}]\).
We therefore only explore these two types of interval distributions in this paper:}

\begin{itemize}
    \item Case I: \(n_1=n\) and \( n_2=0\),
    \item Case II: \(n_1=[\frac{n}{2}]\) and \( n_2=n-[\frac{n}{2}]\).
\end{itemize}

We begin in Section \ref{sec4.1} by outlining our numerical method for determining \(\mathrm{Max}(\mathrm{EI}_{\Delta}(A:B|E))\) and identifying the transition conditions that determine the critical points and different phases. 
We developed an informative tool, the generalized multi-phase transition (MPT) diagrams, to indicate the optimal configuration of \(E\) in the case of arbitrary interval number \(n\), in parallel with the upper bounding problem of CMI in \cite{Ju:2024kuc}.
In Section \ref{sec4.2}, we solve the transition conditions and find the values of \(\mathrm{Max}(\mathrm{EI}_{\Delta}(A:B|E))\) at all the critical points.
In Section \ref{secEoA}, we compare \(\mathrm{Max}(\mathrm{EI}_{\Delta}(A:B|E))\) with its upper bound, the entanglement of assistance.
The calculations of the exact values of \(X_{AB}\) at critical points are left to Appendix \ref{sec4.3}.

 \subsection{Max\((\mathrm{EI}_{\Delta}(A:B|E))\) and transition conditions}\label{sec4.1}

\noindent {With \(A\) and \(B\) fixed, the number of degrees of freedom for the choice of $E$ is \(2n\) for general \(n\). While the case \(n=1\) admits analytic solutions, the problem becomes analytically challenging for $n>1$.}
In such cases, it is more practical to use random sampling to approximate the values of Max\((\mathrm{EI}_{\Delta}(A:B|E))\) and identify the corresponding maximal points of free variables. Subsequently, we can examine the conditions that these maximal points satisfy.

\subsubsection{Numerical evaluation}\label{sec4.1.1}

\noindent Here, we outline our random sampling method for a given, fixed \(n\). {We use as free variables the \(2n\) angular coordinates of the endpoints of the intervals \(E_i\) (\(i = 1, 2, 3, \ldots, n\)) on the conformal boundary, instead of cross ratios.}
Unlike cross ratios, angular coordinates are bounded, making them well-suited for random sampling.
{Hence, with \(2n\) angular coordinates spanning the parameter space, the EWCS combination at any point of this space is obtained by evaluating the ratio}
\begin{equation}\label{ratioEn}
    \widehat{\mathrm{EI}}_{\Delta}(A:B|E)=\frac{\widehat{\mathrm{EWCS}}(A:EB) \widehat{\mathrm{EWCS}}(B:EA)}{\widehat{\mathrm{EWCS}}(E:AB)}.
\end{equation}
{Each term in the ratio can be expressed by a set of cross ratios with the relation between geodesic lengths and cross ratios \eqref{EWCSAB}. Then the cross ratios are related to angular coordinates through \eqref{Xac}. }

{To evaluate the ratio \eqref{ratioEn}, we have to specify which configuration of $\mathrm{EW}(ABE)$ should be considered and identify all possible values of each term in that configuration.}
First, we found that it suffices to focus on the case where $\mathrm{EW}(ABE)$ is fully connected.
Otherwise, if there is one interval \(E_j\) that is not connected in $\mathrm{EW}(ABE)$, the problem is then reduced to the case of \(n-1\) intervals.
Thus, we need to screen out the {region in the parameter space that does not correspond to fully connected entanglement wedges.}  All the connection conditions for any \(n\) can be systematically obtained as introduced in Section \ref{sec2.2}.

Second, we need to identify all possible values of \(\widehat{\mathrm{EWCS}}(A:EB)\), \(\widehat{\mathrm{EWCS}}(B:EA)\), and \(\widehat{\mathrm{EWCS}}(E:AB)\). In $\mathrm{EW}(ABE)$, \(\mathrm{EWCS}(A:EB)\) should separate region \(A\) from all intervals belonging to \(B\) and \(E\), as illustrated in Figure \ref{EWCS}(a). Consequently, there must be two endpoints for the candidate geodesics that lie on the RT surface of the two gap regions adjacent to \(A\), specifically \(D_1\) and \(D_{n+2}\).
Suppose we begin at the endpoint on the RT surface of the gap region \(D_1\) and then move along a geodesic to the RT surface of another gap region. Starting again from this second gap region along a new geodesic, we may proceed to a third gap region, and so forth. The path must proceed forward, meaning that if we have already moved to \(E_i\), the next destination cannot be \(E_j\) where \(j<i\), as this would result in a set of self-intersecting geodesics. This process ends when we arrive at the gap region between \(E_n\) and \(A\), namely \(D_{n+2}\). There are \(n\) potential gap regions we can reach in a candidate of \(\mathrm{EWCS}(A:EB)\). Each gap region can either be reached or bypassed, rendering \(\mathrm{EWCS}(A:EB)\) with \(2^n\) potential configurations. Similarly, \(\mathrm{EWCS}(B:EA)\) has \(2^n\) potential configurations, starting from one gap region adjacent to \(B\) and ending at the other.

In contrast, we do not need to survey all \(\mathrm{EWCS}(E:AB)\) configurations; only the blue geodesics depicted in Figure \ref{EWCS}(b) need to be considered.
This holds if we assume that the EWCS combination with \(E\) comprising \(n+1\) intervals exceeds that of \(n\) intervals, which aligns with our expectation. A diagrammatic demonstration is provided in Figure \ref{EWCS}(b).
In this figure, the blue, green and purple geodesics represent three different kind of candidates of EWCS$(E:AB)$. The green geodesics can be divided into two parts, one is a candidate of EWCS$(A:EB)$ and the other is a candidate of EWCS$(B:EA)$.
Because of the division, the value of $\widehat{\mathrm{EWCS}}(E:AB)$ is factorized into a candidate of $\widehat{\mathrm{EWCS}}(A:EB)$ (denoted as $\widehat{L}_A$) and a candidate of $\widehat{\mathrm{EWCS}}(B:EA)$ (denoted as $\widehat{L}_B$).
 On the contrary, such division is impossible for the blue and purple geodesics.
 The blue geodesics connect all the gap regions while the purple ones link a part of the gap regions.
 {If the green geodesics had the smallest total length among all the candidates of EWCS$(E:AB)$, its two factors $\widehat{L}_A$ and $\widehat{L}_B$ must be the smallest among all the candidates of $\widehat{\mathrm{EWCS}}(A:EB)$ and $\widehat{\mathrm{EWCS}}(B:EA)$ respectively.
 Thus we have $\widehat{\mathrm{EWCS}}(A:EB)=\widehat{L}_A$, $\widehat{\mathrm{EWCS}}(B:EA)=\widehat{L}_B$, $\widehat{\mathrm{EWCS}}(E:AB)=\widehat{L}_A \widehat{L}_B$ and the ratio \eqref{ratioEn} does not exceed 1, which is not the maximal value.}
 {If the purple geodesics had the smallest total length, this means the inclusion of more gap regions (and intervals) between the gap regions that the purple geodesics are connecting to does not influence the maximal value.
 However, this is in conflict with our expectation that the EWCS combination in the case where $E$ has $n+1$ intervals is larger than that in the case of $n$ intervals}.
 Therefore, the purple geodesics can also be excluded.
\begin{figure}[H]
\centering
	\subfigure[]{
		\begin{minipage}[b]{.47\linewidth}
			\centering
			\includegraphics[scale=0.7]{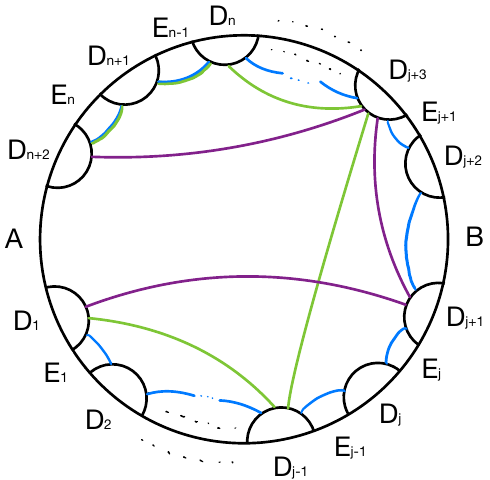}
		\end{minipage}
	}
	\subfigure[]{
		\begin{minipage}[b]{.47\linewidth}
			\centering
			\includegraphics[scale=0.7]{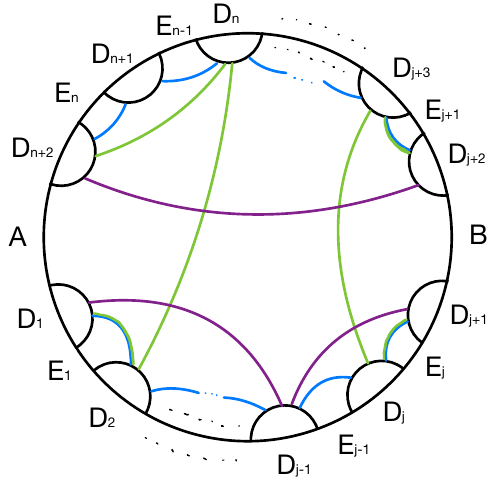}
		\end{minipage}
	}
 \caption{(a) The blue, green and purple geodesics are three candidates of EWCS$(A:EB)$. 
 (b) The blue, green and purple geodesics are three candidates of EWCS$(E:AB)$. Only the blue geodesics need to be considered in the evaluation of EWCS$(E:AB)$.
 }
\label{EWCS}
\end{figure}

Using numerical sampling of random points, we follow a two-step process to approximate the maximum of \(\widehat{\mathrm{EI}}_{\Delta}(A:B|E)\) \eqref{ratioEn}. First, a global sampling is conducted, generating a large number of random points—say, 100,000—in the gap region(s) between  \(A\) and \(B\) (one gap region for Case I, two for Case II). Points that are not fully connected are then filtered out. Evaluating equation \eqref{ratioEn} at the valid points yields an approximate maximum value and the corresponding maximal point.

Next, to achieve higher precision, local sampling is carried out near the maximal point of global sampling. The newly generated points are again screened by the connection conditions and then evaluated by equation \eqref{ratioEn} to find the maximum. Multiple rounds of local sampling may yield a more accurate maximum value and point.

Our numerical computations reveal that maximal points always satisfy several transition conditions simultaneously. {This is parallel to the case of CMI as introduced in Section \ref{secubCMI} and reflects the underlying similarity in their mathematical structures.}
These conditions fall into two categories: 
\begin{itemize}
    \item Connection conditions, which are satisfied by cross ratios (\(X\)s) and indicate that the maximal point lies on the boundary of the connected phase of $\mathrm{EW}(ABE)$.
    \item EWCS conditions, satisfied by \(\widehat{L}\)s, which concern certain geodesic lengths between gap regions.
\end{itemize}
{The difference with the case of CMI is the appearance of the EWCS conditions in addition to the connection conditions.}
Notably, the total number of these transition conditions equals the number of degrees of freedom (\(2n\) if \(E\) involves \(n\) intervals and \(X_{AB}\) is fixed). Thus, the maximum of \(\widehat{\mathrm{EI}}_{\Delta}(A:B|E)\) is determined by these transition conditions.

Moreover, these transition conditions depend on the {value} of \(X_{AB}\) and change at certain critical values, dividing the entire range of \(X_{AB} > 0\) into phases based on the set of transition conditions that the maximal point satisfies. 
For example, suppose phase \(\alpha\) of \(\mathrm{Max}(\widehat{\mathrm{EI}}_{\Delta}(A:B|E))\) occurs when \(X_{AB} \in (a,b)\) with conditions denoted by \(\alpha_1, \alpha_2, \ldots, \alpha_{2n}\), and phase \(\beta\) occurs when \(X_{AB} \in (b,c)\) with conditions \(\beta_1, \beta_2, \ldots, \beta_{2n}\). In this case, we observe
\begin{equation}\label{transition}
\begin{aligned}
    &\alpha_i = \beta_i, \quad i = 1, 2, \ldots, 2n-1,\\
    &\alpha_{2n} \neq \beta_{2n},
\end{aligned}
\end{equation}
{where the equality (inequality) sign means that the two conditions on the two sides are (not) the same.}
At the critical point \(y\), both sets of conditions \(\alpha_i\) (\(i = 1, 2, \ldots, 2n\)) and \(\beta_i\) (\(i = 1, 2, \ldots, 2n\)) are satisfied, providing \(2n+1\) independent conditions to determine the \(2n+1\) variables, which include \(X_{AB}\) and \(2n\) free degrees of freedom of region \(E\).

Through the numerical procedures described above, it is possible to identify all phases and associated transition conditions. 
Note that the transition conditions for a phase can be read off from those of its two endpoints—the two neighboring critical points that bound it—via \eqref{transition}.
Hence it is important to identify the critical points at which the transition conditions that determine \(\mathrm{Max}(\widehat{\mathrm{EI}}_{\Delta}(A:B|E))\) are changed.
We therefore focus on critical points and
we will introduce an illustration tool, the multi-phase transition (MPT) diagram, to indicate the configuration of $E$ with \(\mathrm{Max}(\widehat{\mathrm{EI}}_{\Delta}(A:B|E))\) at the critical points in the following subsections.

\subsubsection{Review of MPT diagrams for CMI upper bound configuration}\label{MPT}

\noindent As introduced in Section \ref{secCMI}, the MPT diagram is a highly informative tool for describing the features of the configuration with maximal CMI. It directly illustrates the (dis)connectivity of entanglement wedges, as well as the phase transition conditions that this configuration must satisfy\footnote{Note that the phase transitions in this subsection, especially in the definition of the multi-phase transition (MPT) diagram, refer to the phase transitions of RT surfaces, namely, the change of the (dis)connectivity of some entanglement wedges. On the other hand, the phases and critical points discussed in Section \ref{sec4.1} correspond to different structures of the value of \((\widehat{\mathrm{EI}}_{\Delta}(A:B|E))\) on the parameter plane.}. Thus, it is of great importance in characterizing the distinct properties of $E$ and the maximal \((\widehat{\mathrm{EI}}_{\Delta}(A:B|E))\) at the critical points that we are interested in. In this subsection, we take the upper bound of CMI as an example, and introduce the general rules of drawing the MPT diagrams and how they encode the essential conditions.

To begin with, as shown in Figure \ref{MPT2}, each diagram consists of three basic elements: circles, dots and legs. We recall that when upper bounding CMI, we are fixing $A$ and $B$ while fine-tuning $E$, which consists of $m$ intervals. The precise position of $E$ is determined by $2m$ phase transition conditions. Depicting this scenario in the diagrams, a circle denotes a single boundary interval as labeled. A dot represents a critical point, which imposes the RT surface phase transition condition between two boundary regions. Specifically, these are the respective unions of boundary intervals that can be traced back to the dot through either of the two legs descending from the dot. 

Based on these elements, there also exist fundamental rules of drawing MPT diagrams. First, uniquely specifying $E$ requires $2m$ dots in total, and each interval of $E$ should be traced back to at least two dots, as the interval has two endpoints. Second, any dot must descend by legs to at most one dot below. This prevents the conditions from overdetermining the system. In particular, a dot connecting directly with two lower dots can impose a constraint inconsistent with those imposed by these dots. Finally, each dot must connect with either $A$ or $B$. As explained in \cite{Ju:2024kuc}, diagrams involving phase transition conditions purely within $E$ cannot maximize the CMI. When intervals $E_1$ and $E_2$ undergo a phase transition, merging them into a single interval by removing the separating gap region leaves the CMI unchanged.

\begin{figure}[h]
\centering
     \includegraphics[width=14cm]{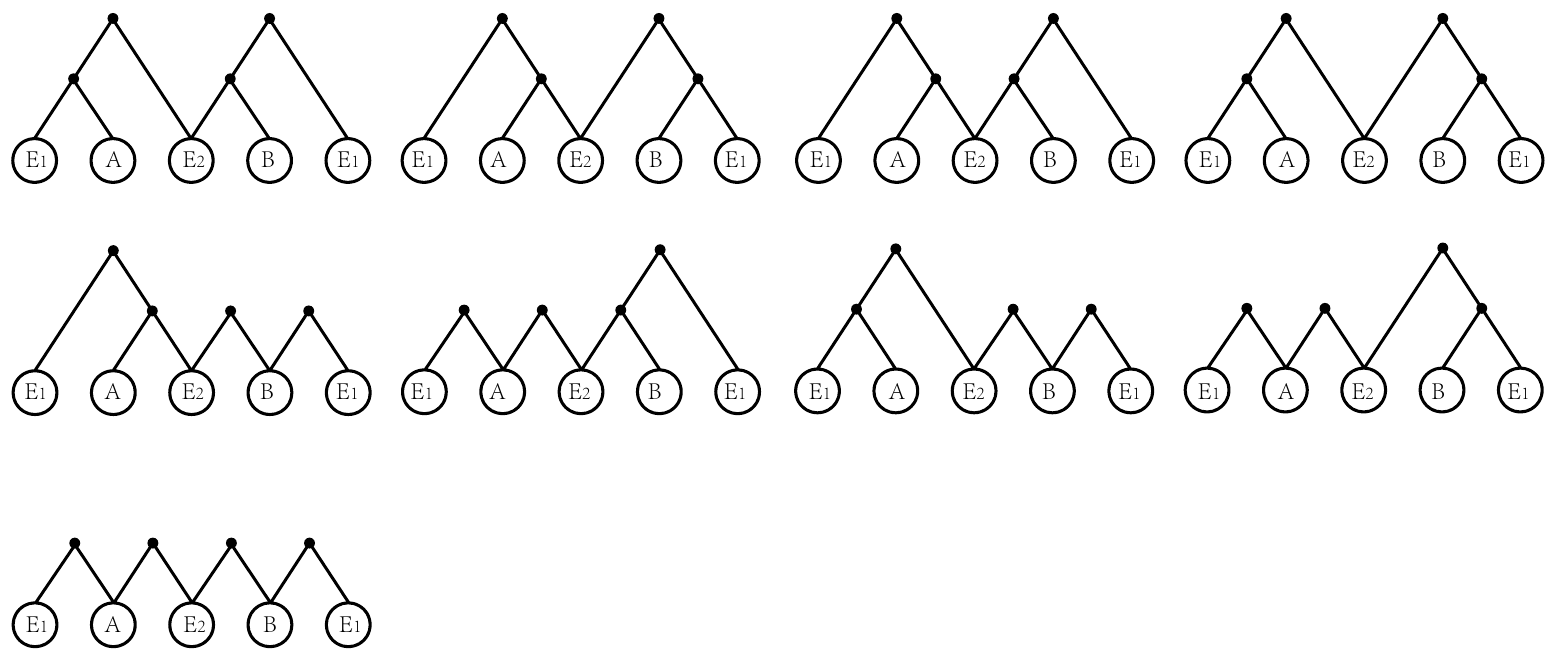}
 \caption{All $m=2$ MPT diagrams for $E$ living in both gap regions inside and outside of $A$ and $B$ that satisfy the three fundamental rules. Their respective CMI values are calculated and compared to obtain the diagram that maximizes CMI.}
\label{MPT2}
\end{figure}

The above rules provide the necessary framework to generate all MPT diagrams with $2m$ phase transition conditions. We then need to compare their respective CMI in search of the diagram with the largest value. 

Through a detailed analysis of the diagrams, we first obtained three additional constraints for the MPT diagram with maximal CMI, which greatly simplify the search, as introduced in Section \ref{secCMI}. We proved them in \cite{Ju:2024kuc} for the general case of arbitrary $m$.
First, $\mathrm{EW}(ABE)$ must be totally connected, since removing any interval $E_i$ which disconnects from $\mathrm{EW}(ABE)$ does not change the CMI. Second, the $\mathrm{EW}(E)$ within a gap region between $A$ and $B$ should be totally disconnected. As in the proof of the third fundamental rule, merging any connected $E_1$ and $E_2$ preserves the CMI value. Finally, in \cite{Ju:2024kuc} we generalized and proved the disconnectivity condition $I(A:E)=I(B:E)=0$, which imposes a particularly strong restriction on the diagrams. 

With these constraints, we successfully identified two key features of the MPT diagram with maximal CMI. First, the dots are arranged in a “zigzag” pattern. Regarding the phase transition between $A$($B$) and $E$ as the successive inclusion of intervals of $E$ from $\mathrm{EW}(A)$ and $\mathrm{EW}(B)$, "zigzag" refers to the pattern that if a new $E_i$ is included from the left of $A$($B$), the subsequent phase transition condition requires the inclusion of the nearest new $E_j$ on the right of $A$, and vice versa. Second, the order of the phase transition between $E_i$ and $A$ is exactly opposite to that between $E_i$ and $B$. Examples for $m=4$ and $m=5$ are shown in Figure \ref{MPT45}. Therefore, we have finally obtained the complete requirements for the CMI upper bound configuration in the language of MPT diagrams. 

\begin{figure}[h]
\centering
     \includegraphics[width=13cm]{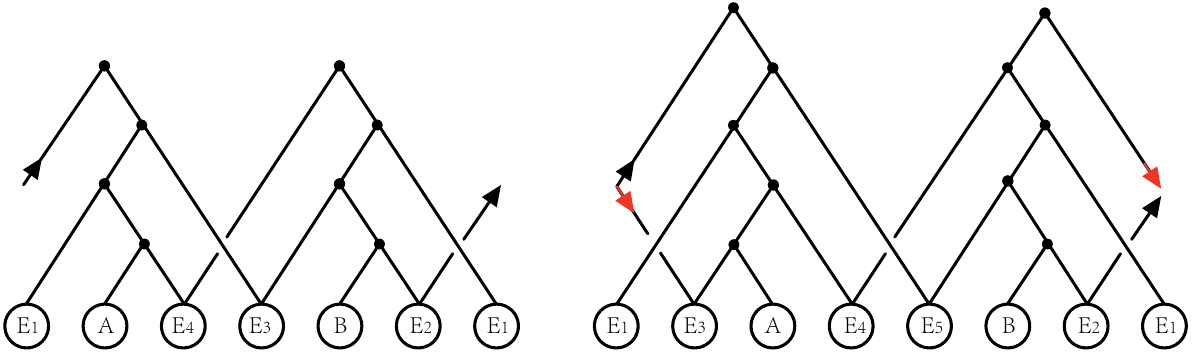}
 \caption{MPT diagrams for $m=4$ and $m=5$ in the two gap region cases that maximize CMI. Legs with arrows of the same color in each diagram are connected.}
\label{MPT45}
\end{figure}

Thus, with the (dis)connectivity of entanglement wedges readily shown in the maximizing diagram, we have derived the general upper bound of CMI (\ref{FinalCMI}) as discussed in Section \ref{secCMI}.

\subsubsection{MPT diagrams for the EWCS triangle information}

\noindent In the case of upper-bounding \((\widehat{\mathrm{EI}}_{\Delta}(A:B|E))\), the MPT diagrams continue to play a crucial rule. Here we systematically analyze the maximal configurations across both case I and case II, covering \(n = 1, 2, 3\) intervals for $E$, and all critical points of \(X_{AB}\). For each case above, the set of transition conditions constraining the configuration with Max\((\widehat{\mathrm{EI}}_{\Delta}(A:B|E))\) can again be represented by an MPT diagram. 
However, it should be noted that, different from the CMI case, now the MPT diagrams must capture both categories of transition conditions \textemdash the connection conditions and the EWCS conditions. The rules of drawing these diagrams should also be modified.

For EWCS triangle information with $n$ intervals in region $E$, there are $n+2$ critical points.
Each critical point is specified by a distinct set of $2n+1$ transition conditions that fix the $2n+1$ degrees of freedom.
These conditions comprise $n+1$ connection conditions and $n$ EWCS conditions.
As shown in Figure \ref{MPTCase}, black dots and legs represent connection conditions while red dots and legs represent EWCS conditions.
Circles also denote boundary intervals as in the case of CMI.

The rules for connection conditions in the MPT diagrams are basically the same with that for the case of CMI, except two differences.
First, the gap regions $D$ are also involved in the diagrams.
Second, each interval of $E$ or $D$ is traced back to at most one dot. 
At the first and second critical points, only intervals of $E$ are involved in the connection conditions.
As the order of critical points increase beyond the second one, there are more and more connection conditions that involve gap regions and fewer and fewer connection conditions related to the intervals of $E$.
Specifically, as the order of critical point increases by one, one interval of $E$ is disconnected with other intervals in the diagram. Instead, a new dot connecting gap regions through legs appears.  
This is reflected in the diagrams as more and more gap regions are connected together by legs and dots, and they form a zigzag pattern.   

The EWCS conditions follow a new set of rules.
In contrast to a connection condition which involves two boundary intervals, an EWCS condition gives a relation among three geodesic lengths, which has the form of
\begin{equation}\label{L=LL}
    \widehat{L}_{D_i D_j} = \widehat{L}_{D_k D_l} \widehat{L}_{D_m D_n},
\end{equation}
where \(L_{D_i D_j}\) denotes the geodesic length between the RT surfaces associated with two specific gap regions \(D_i\) and \(D_j\). The hatted quantity, \(\widehat{L}_{D_i D_j}\), denotes the exponentiated form of this length, as defined in Section \ref{sec3}.
To represent such a three-body relation faithfully in a diagram, a dot should be linked with three legs, each of which is connected to a geodesic between two gap regions.
It is also required to indicate in the diagram which geodesic is on the left-hand-side of \eqref{L=LL} and which two geodesics are on the right-hand-side.
In order to depict both the EWCS conditions and connection conditions of a critical point in a single diagram and make them compatible with each other, we adopt the following rules for representing EWCS conditions:
\begin{itemize}
    \item Basic elements: red dots, red lines, red arrows;
    \item A red line connects either a red dot to a boundary interval or two red dots to each other;
    \item A red line separating two gap regions \(D_i\) and \(D_j\) represents the geodesic length \(\widehat{L}_{D_i D_j}\), see Figure \ref{MPTEWCS} for an example;
    \item Each red dot is  attached to exactly three red lines and encodes an EWCS condition of the form \eqref{L=LL}, which relates the three geodesic lengths represented by those lines;
    \item Among the three red lines attached to a given red dot, one carries an red arrow pointing into the dot (the geodesic length on the left-hand side), and the other two carry red arrows pointing away from the dot (the two geodesic lengths on the right-hand side).
\end{itemize}
\begin{figure}[h]
    \centering
    \includegraphics[scale=1]{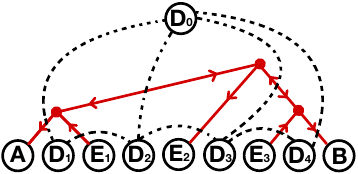}
    \caption{An illustration of the rule of how each red line separates two gap regions and represents a geodesic. Only the EWCS conditions are shown in the figure. \(D_0\) denotes the gap region that is outside \(A\) and \(B\) in Case I. The red line intersecting the dashed black curve that connects the boundary gap regions \(D_i\) and \(D_j\) represents  \(\widehat{L}_{D_i D_j}\). }
    \label{MPTEWCS}
\end{figure}
For instance, as shown in Figure \ref{MPTEWCS}, the MPT diagram for the 1st critical point with \(n=3\) in Case I contains three red dots representing three EWCS conditions. 
The red lines that connecting the leftmost red dot to region \(A\), \(E_1\) and the topmost red dot separate the gap regions \(D_0 D_1\), \(D_1 D_2\) and \(D_0 D_2\) respectively.
Accordingly, the EWCS condition encoded by the leftmost red dot reads \(\widehat{L}_{D_0 D_1}=\widehat{L}_{D_0 D_2} \widehat{L}_{D_1 D_2}\), as indicated by the directions of the red arrows.

Combining the rules for the connection conditions and the EWCS conditions demonstrated above,
the specific MPT diagrams corresponding to all the scenarios we consider are shown in Figure \ref{MPTCase}.

\begin{figure}[h]
\centering
	\subfigure[]{
		\begin{minipage}[b]{1\linewidth}
			\centering
			\includegraphics[scale=0.4]{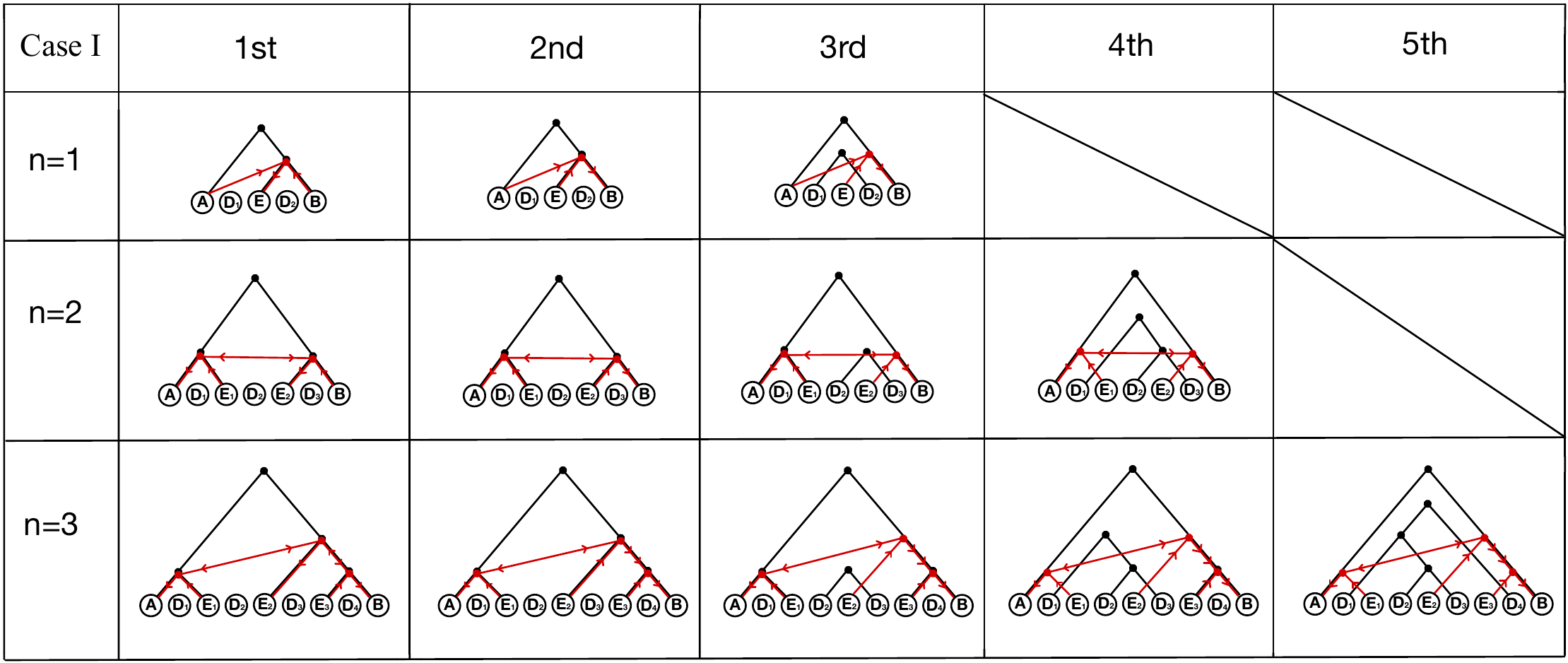}
		\end{minipage}
	}\\
	\subfigure[]{
		\begin{minipage}[b]{1\linewidth}
			\centering
			\includegraphics[scale=0.4]{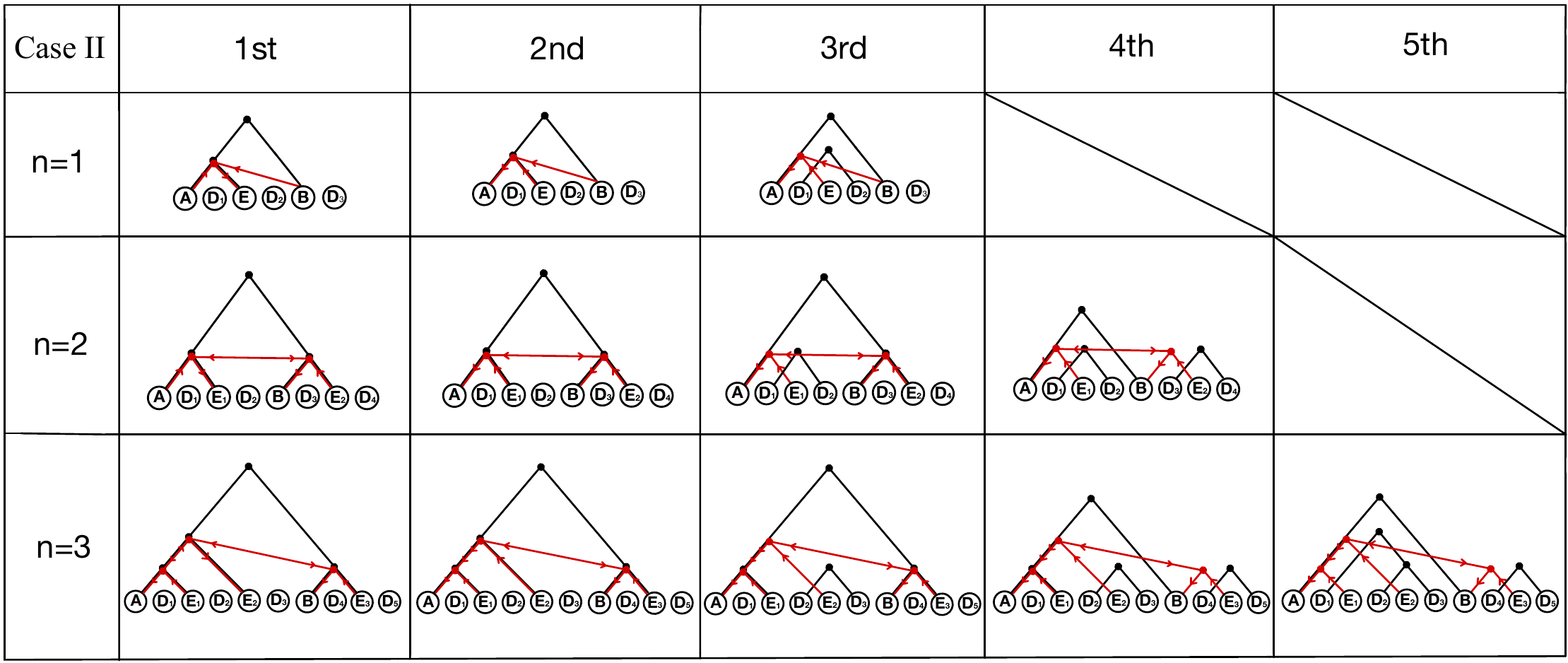}
		\end{minipage}
	}\\
	\subfigure[]{
		\begin{minipage}[b]{1\linewidth}
			\centering
			\includegraphics[scale=0.6]{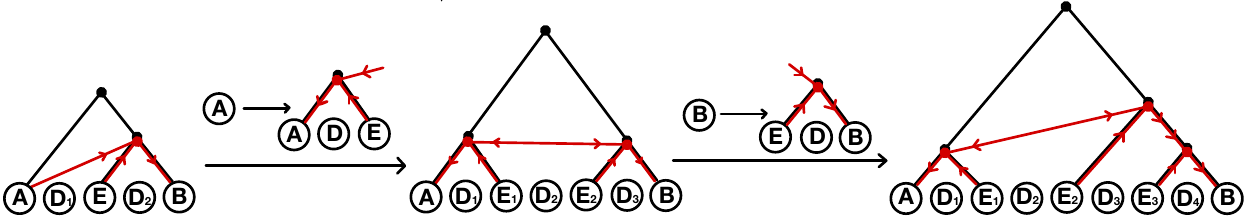}
		\end{minipage}
	}
 \caption{The MPT diagrams of the first five critical points are shown for \(n=1, 2, 3\) respectively: (a) for Case I and (b) for Case II. The black lines and nodes represent connection conditions, while the red lines and nodes indicate EWCS conditions. Panel (c) illustrates the "split" of regions \(A\) and \(B\) when an additional interval of \(E\) is incorporated.
 }
\label{MPTCase}
\end{figure}
The MPT diagrams exhibit some special features. 
First, for any fixed $n$, the EWCS conditions are identical for the second and all subsequent critical points, but they differ from those of the first critical point with one special condition.
Second, the MPT diagrams also have a ``split" property, which allows for the construction of transition conditions for \(n+1\) intervals from those of \(n\) intervals by splitting region \(A\) (or \(B\)) into a smaller \(A\) (or \(B\)), a new gap region, and an additional interval of \(E\). 
For example, at the second critical point, the MPT diagram for \(n=2\) is derived from the diagram for \(n=1\) by replacing region \(A\) with \(A D_1 E_1\) and attaching the transition conditions \(X_{AE_1}=1\) and \(\widehat{L}_{D_0 D_1}=\widehat{L}_{D_0 D_2}\widehat{L}_{D_1 D_2}\), as shown in Figure \ref{MPTCase}(c). 
Similarly, the MPT diagram for \(n=3\) can be obtained by splitting region \(B\) in the \(n=2\) diagram. By iteratively applying this process—alternately splitting \(A\) and \(B\)—one can determine the MPT diagram (\ie, transition conditions) for any arbitrary \(n\). 
As a consequence, the ``split" property of the MPT diagrams gives us the rules to generate critical points and maxima from small interval numbers to general interval numbers and avoid heavy numerical computations.

\subsection{Solving the transition conditions: iteration}\label{sec4.2}

\noindent The rules governing the transition conditions that define critical values of \(X_{AB}\) can be easily discerned from the MPT diagrams, making it straightforward to generalize them to any integer \(n\). The remaining task is to solve the \(2n+1\) conditions to obtain the optimal configuration of \(E_i\) (\(i=1,2,\ldots,n\)) that maximizes \(\widehat{\mathrm{EI}}_{\Delta}(A:B|E)\).
Starting with \(n=1\), configurations corresponding to critical points can be solved iteratively. Since the transition conditions vary between Case I and Case II, we will discuss the iteration processes for these two cases separately.

\subsubsection{Case I}

\noindent It is helpful to divide the fully connected entanglement wedge $\mathrm{EW}(ABE)$ into cells, as illustrated in Figure \ref{CellCaseI}. Each cell is bounded by six geodesics: three are EWCSs, shown in blue, and the other three are RT surfaces, shown in black, of three different gap regions connected by the three EWCSs. 
We define the ``central" gap region, denoted as \(D_c\), as the \(([\frac{n}{2}] + 1)\)-th gap region from left to right, and the ``central" geodesic \(L_0\) as the geodesic connecting the gap region \(D_0\) to the central gap region \(D_c\). \(L_0\) divides the cells to its right from those on its left. Thus, there are \([\frac{n}{2}]\) cells on the left of \(L_0\) and \([\frac{n+1}{2}]\) cells on the right\footnote{Alternatively, in a mirror-symmetrical arrangement, the central gap region \(D_c\) is the \(([\frac{n+1}{2}] + 1)\)-th gap region and there are \([\frac{n+1}{2}]\) cells on the left and \([\frac{n}{2}]\) cells on the right.}. 
The intervals of \(E\) on the left/right are therefore denoted as \(E_{L_i}/E_{R_j}\) (\(i=1, 2, \ldots, [\frac{n}{2}]\), \(j=1, 2, \ldots, [\frac{n+1}{2}]\)). Similarly, the gap regions are denoted as \(D_{L_i}/D_{R_j}\) (\(i=1, 2, \ldots, [\frac{n}{2}]\), \(j=1, 2, \ldots, [\frac{n+1}{2}]\)).
\begin{figure}[H]
 \centering
 \includegraphics[scale=0.7]{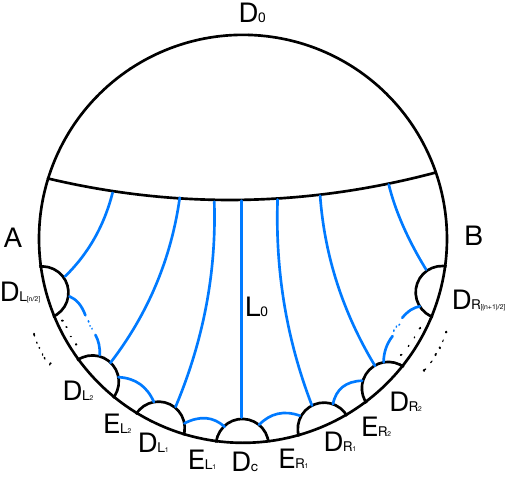}
 \caption{The entanglement wedge of $ABE$ is divided into ``cells" by a series of EWCSs which are shown in blue.
 The ``central" geodesic, labelled as $L_0$, is the boundary between the right cells and left cells.
 The intervals and gap regions that are associated with the right/left cells are labeled with sub-indices $R$/$L$.
 }
\label{CellCaseI}
\end{figure}

The method described above for dividing $\mathrm{EW}(ABE)$ into \(n\) cells is closely linked to the \(2n+1\) transition conditions at critical points. 
According to the ``split" property of MPT diagrams at critical points, the intervals of \(E\) (or cells) are added into $\mathrm{EW}(ABE)$ sequentially as the interval number is increased. 
In particular, the additional cells are first added to the leftmost side of $\mathrm{EW}(ABE)$, followed by the rightmost side, and this process repeats. 
In essence, the \((n+1)\)-th cell does not alter the {optimal} configuration of \(E\) in the first \(n\) cells; instead, it is created by dividing \(A\) or \(B\) into three segments: a new interval \(E_L\) or \(E_R\), a new gap region \(D_L\) or \(D_R\), and a smaller \(A\) or \(B\).
Therefore, a key feature at the critical points is that each addition of a new cell introduces two more transition conditions, while the conditions for the pre-existing cells remain unchanged.

We discuss the critical points of \(\mathrm{Max}(\widehat{\mathrm{EI}}_{\Delta}(A:B|E))\) one by one in the following.
We solve all the transition conditions at each critical point and the results are $2n+1$ cross ratios between gap regions for each critical point, which completely determine the configuration of the whole system \(ABE\).
The evaluation of the critical values of \(X_{AB}\) are shown in Appendix \ref{sec4.3}.
The explicit numerical results for the critical values of \(X_{AB}\) and corresponding values of \(\mathrm{Max}(\widehat{\mathrm{EI}}_{\Delta}(A:B|E))\) are summarized in the tables and figures in Section \ref{sec5}.

\paragraph{The first critical point}

In the case of \(n=1\), assuming the only cell is on the right of \(L_0\), we demonstrated in Section \ref{sec3} that the conditions determining the first critical point (the vanishing critical point) are:
\begin{equation}\label{1stn1}
    \overline{X}_{D_0 D_c}=1, \quad \overline{X}_{BE_{R_1}}=1, \quad \text{and} \quad \widehat{L}_{D_c D_{R_1}}=\widehat{L}_0 \widehat{L}_{D_0 D_{R_1}}.
\end{equation}
As in \eqref{sol1}, the cross ratios related to the first cell, \(\overline{X}_{D_c D_{R_1}}\) and \(\overline{X}_{D_0 D_{R_1}}\), have been found to be \(\overline{X}_{D_c D_{R_1}} \approx 10.5075\) and \(\overline{X}_{D_0 D_{R_1}} \approx 1.4207\). 

For general \(n>1\), the first critical point also corresponds to the upper bound of $X_{AB}$ for which \(\mathrm{Max}(\mathrm{EI}_{\Delta}(A:B|E))\) vanishes\footnote{Note that the EWCS conditions of the first critical point, namely $\widehat{L}_{D_c D_{R_1}}=\widehat{L}_0 \widehat{L}_{D_0 D_{R_1}}$ in \eqref{1stn1}, $\widehat{L}_{D_0 D_{R_{k}}}=\widehat{L}_{D_0 D_{R_{k-1}}} \widehat{L}_{D_{R_{k-1}} D_{R_{k}}}$ in \eqref{1stR}, and $\widehat{L}_{D_0 D_{L_{k}}}=\widehat{L}_{D_0 D_{L_{k-1}}} \widehat{L}_{D_{L_{k-1}} D_{L_{k}}}$ in \eqref{1stL}, indicate that the ratio \eqref{ratioEn} equals \(1\).}. The new transition conditions introduced by the \(k\)-th (\(k>1\)) right cell are:
\begin{equation}\label{1stR}
    \overline{X}_{\left(BD_{R_{[(n+1)/2]}}E_{R_{[(n+1)/2]}}...D_{R_{k+1}}E_{R_{k+1}}\right)E_{R_k}}=1, \quad \widehat{L}_{D_0 D_{R_{k}}}=\widehat{L}_{D_0 D_{R_{k-1}}} \widehat{L}_{D_{R_{k-1}} D_{R_{k}}}.
\end{equation}
Utilizing \eqref{LABE} and \eqref{idX1}, \eqref{1stR} is equivalent to:
\begin{equation}\label{iterEQR}
\begin{aligned}
     &2\overline{X}_{D_0 D_{R_{k-1}}}+1+\overline{X}_{D_0 D_{R_{k}}}+\overline{X}_{D_{R_{k-1}} D_{R_{k}}} - \overline{X}_{D_0 D_{R_{k}}}\overline{X}_{D_{R_{k-1}} D_{R_{k}}}=0,\\
    &\left(\sqrt{\overline{X}_{D_0 D_{R_{k-1}}}}+\sqrt{1+\overline{X}_{D_0 D_{R_{k-1}}}}\right)\left(\sqrt{\overline{X}_{D_{R_{k-1}} D_{R_{k}}}}+\sqrt{1+\overline{X}_{D_{R_{k-1}} D_{R_{k}}}}\right)\\
    &=\sqrt{\overline{X}_{D_0 D_{R_{k}}}}+\sqrt{1+\overline{X}_{D_0 D_{R_{k}}}}.
\end{aligned}
\end{equation}
Therefore, \(\overline{X}_{D_{R_{k-1}} D_{R_{k}}}\) and \(\overline{X}_{D_0 D_{R_{k}}}\) can be determined if \(\overline{X}_{D_0 D_{R_{k-1}}}\) is known. This recurrence relation traces back to the second right cell\footnote{Note that the EWCS condition of the first right cell differs from that of other right cells since the roles of \(\overline{X}_{D_c D_{R_1}}\) and \(\overline{X}_{D_0 D_{R_1}}\) are swapped.}, where \(\overline{X}_{D_{R_{1}} D_{R_{2}}}\) and \(\overline{X}_{D_0 D_{R_{2}}}\) can be expressed in terms of \(\overline{X}_{D_0 D_{R_{1}}}\). Since \(\overline{X}_{D_0 D_{R_{1}}}\) has already been solved in the first right cell using \eqref{1stn1}, all cross ratios related to the right cells can be determined by iterating \eqref{iterEQR}.

The new transition conditions brought by the \(k\)-th left cell are analogous to \eqref{1stR}, with all subscripts ``R" replaced by ``L":
\begin{equation}\label{1stL}
    \overline{X}_{\left(BD_{L_{[n/2]}}E_{L_{[n/2]}}...D_{L_{k+1}}E_{L_{k+1}}\right)E_{L_k}}=1, \quad \widehat{L}_{D_0 D_{L_{k}}}=\widehat{L}_{D_0 D_{L_{k-1}}} \widehat{L}_{D_{L_{k-1}} D_{L_{k}}}.
\end{equation}
Similarly, \eqref{1stL} is equivalent to
\begin{equation}\label{iterEQL}
\begin{aligned}
    &2\overline{X}_{D_0 D_{L_{k-1}}}+1+\overline{X}_{D_0 D_{L_{k}}}+\overline{X}_{D_{L_{k-1}} D_{L_{k}}} - \overline{X}_{D_0 D_{L_{k}}}\overline{X}_{D_{L_{k-1}} D_{L_{k}}}=0,\\
    &\left(\sqrt{\overline{X}_{D_0 D_{L_{k-1}}}}+\sqrt{1+\overline{X}_{D_0 D_{L_{k-1}}}}\right)\left(\sqrt{\overline{X}_{D_{L_{k-1}} D_{L_{k}}}}+\sqrt{1+\overline{X}_{D_{L_{k-1}} D_{L_{k}}}}\right)\\
    &=\sqrt{\overline{X}_{D_0 D_{L_{k}}}}+\sqrt{1+\overline{X}_{D_0 D_{L_{k}}}}.
\end{aligned}
\end{equation}
Likewise, \(\overline{X}_{D_{L_{k-1}} D_{L_{k}}}\) and \(\overline{X}_{D_0 D_{L_{k}}}\) are obtainable if \(\overline{X}_{D_0 D_{L_{k-1}}}\) is known. Since this recursion holds for \(k \ge 1\), all cross ratios pertaining to the left cells can be determined by iterations of \eqref{iterEQL} with the initial condition \(\overline{X}_{D_0 D_{L_{0}}}:=\overline{X}_{D_0 D_c}=1\).

The equations \eqref{iterEQR} and \eqref{iterEQL} have the following mutual form
\begin{equation}\label{iterEQ}
    \begin{aligned}
    &\quad 2\overline{X}_{2j-2}+1+\overline{X}_{2j}+\overline{X}_{2j-1}-\overline{X}_{2j} \overline{X}_{2j-1}=0,\\
    &\sqrt{\overline{X}_{2j}}+\sqrt{1+\overline{X}_{2j}}=\left(\sqrt{\overline{X}_{2j-2}}+\sqrt{1+\overline{X}_{2j-2}}\right)\left(\sqrt{\overline{X}_{2j-1}}+\sqrt{1+\overline{X}_{2j-1}}\right),
    \end{aligned}
\end{equation}
where $\overline{X}_{D_0 D_{R_{k-1}}}$(or $\overline{X}_{D_0 D_{L_{k-1}}}$), $\overline{X}_{D_{R_{k-1}} D_{R_{k}}}$(or $\overline{X}_{D_{L_{k-1}} D_{L_{k}}}$), and $ \overline{X}_{D_0 D_{R_{k}}}$(or $\overline{X}_{D_0 D_{L_{k}}}$) are replaced by $\overline{X}_{2j-2}$, $\overline{X}_{2j-1}$ and $\overline{X}_{2j}$, respectively.
The solution to $\overline{X}_{2k-1}$ and $\overline{X}_{2k}$ is 
\begin{equation}
    \overline{X}_{2j}=f(x), \quad \overline{X}_{2j-1}=g(x),
\end{equation}
where $\overline{X}_{2j-2}$ is abbreviated as $x$ and we have defined
\begin{equation}\label{iteration}
\begin{aligned}
   f(x):=&\frac{1}{2}\left(2+3 x+2 \sqrt{2+6 x+4 x^{2}}+\sqrt{x\left(16+25 x+12 \sqrt{2+6 x+4 x^{2}}\right)}\right),\\
   g(x):=&\left(\sqrt{x(1+x)}-\sqrt{f(x)(1+f(x))}\right)^2-\left(x-f(x) \right)^2.
\end{aligned}
\end{equation}
The initial condition for the right cell is $\overline{X}_{2} =g(1) \approx 1.4207$, while the initial condition for the left cell is $\overline{X}_{0}=1$.
$\overline{X}_{i}$, $(i=1,2,3,...)$ can then be generated from \eqref{iteration} by iteration. 
Restoring the subscripts of cross ratios, for the right cells, we obtain 
\begin{equation}\label{RCR}
\begin{aligned}
   \overline{X}_{D_0 D_{R_{k}}}=\underset{k-1}{\underbrace{f\cdot f \cdot ... \cdot f }}\cdot g(1), \quad k=1,2,3,...\\
   \left\{\begin{matrix}
\overline{X}_{D_{c} D_{R_{1}}}=f(1),\\
\overline{X}_{D_{R_{k-1}} D_{R_{k}}}=g\cdot \underset{k-2}{\underbrace{f\cdot f \cdot ... \cdot f }}\cdot g(1), \quad k=2,3,...
\end{matrix}\right.
\end{aligned}
\end{equation}
For the left cells, we obtain
\begin{equation}\label{LCR}
\begin{aligned}
   \overline{X}_{D_0 D_{L_{k}}}&=\underset{k}{\underbrace{f\cdot f \cdot ... \cdot f }}(1), \quad k=1,2,3,...\\
   \overline{X}_{D_{L_{k-1}} D_{L_{k}}}&=g\cdot \underset{k-1}{\underbrace{f\cdot f \cdot ... \cdot f }}(1), \quad k=1,2,3,...
\end{aligned}
\end{equation}
where ``$\cdot$" denotes the composition of functions.
The configurations of the first critical point for arbitrary $n$ are then obtained. 

Interestingly, iterating the function \( f \) for any positive initial value causes the function value to grow indefinitely, eventually tending toward infinity. As we have
\[
\lim_{x \to +\infty} \frac{f(x)}{x} = 7,
\]
the value after the \((n+1)\)-th iteration is approximately 7 times the value after the \(n\)-th iteration, assuming the latter is large.
In contrast, the function \( g \) transforms a large number into a fixed value:
\[
\lim_{x \to +\infty} g(x) = \frac{9}{7}.
\]
Consequently, as the number of intervals \( n \) approaches infinity, both \(\overline{X}_{D_{L_{k-1}} D_{L_{k}}}\) and \(\overline{X}_{D_{R_{k-1}} D_{R_{k}}}\) converge to \(\frac{9}{7}\), while \(\overline{X}_{D_0 D_{L_{k}}}\) and \(\overline{X}_{D_0 D_{R_{k}}}\) tend to infinity. 
Recalling that \(\overline{X} = \frac{1}{X}\), these results indicate that as \( n \to \infty \), the distance between \( D_0 \) and \( D_{L_{k}} \) tends to infinity, while the distance between \( D_{L_{k-1}} \) and \( D_{L_{k}} \) approaches a constant value. This suggests that the first critical value of \( X_{AB} \) remains finite and greater than zero. The computation of the precise value of this critical point can be found in Appendix \ref{sec4.3}.

\begin{comment}
In the case of $n=2$, we have a new cell on the left of the central geodesic, along with two new conditions
\begin{equation}\label{1stn2}
    \quad \overline{X}_{AE_1}=1 \quad \widehat{L}_4=\widehat{L}_0 \widehat{L}_3.
\end{equation}
The original interval $E$ of $n=1$ now is renamed as $E_2$.
The cross ratios related to the second cell, $X_3$ and $X_4$, are solved from \eqref{1stn2} provided $\overline{X}_0$ is known which indeed the case.

In the case of $n=3$, the third cell is added in the right of the first cell and the original interval $B$ is split to three intervals $E_3$, $D_4$ and a new but smaller $B$.
The condition $\overline{X}_{BE_2}=1$ corresponds to the first cell is then renamed as $\overline{X}_{(B D_4 E_3)E_2}=1$.
The new conditions corresponding to the third cell are
\begin{equation}\label{1stn3}
    \quad \overline{X}_{BE_3}=1 \quad \widehat{L}_6=\widehat{L}_2 \widehat{L}_5.
\end{equation}
Keep going on, the right and left cells are added in turn and there are two additional conditions each time.
\end{comment}

\paragraph{The second critical point}

In the case of \(n=1\) and assuming the sole cell is on the right of \(L_0\), the second critical point is determined by the following conditions
\begin{equation}\label{2ndn1}
\overline{X}_{D_0 D_c}=1, \quad \overline{X}_{BE_{R_1}}=1, \quad \widehat{L}_{D_0 D_{R_1}}=\widehat{L}_0 \widehat{L}_{D_c D_{R_1}}.
\end{equation}
The only difference from the conditions for the first critical point is the exchange of \(\widehat{L}_{D_0 D_{R_1}}\) and \(\widehat{L}_{D_c D_{R_1}}\) in the EWCS condition. Consequently, the solutions for \(\overline{X}_{D_0 D_{R_1}}\) and \(\overline{X}_{D_c D_{R_1}}\) are also exchanged. Thus, we have \(\overline{X}_{D_0 D_{R_1}}=f(1)\) and \(\overline{X}_{D_c D_{R_1}}=g(1)\).

Similarly, in the case of \(n>1\), as depicted in the MPT diagram \ref{MPTCase}(a), the only difference between the transition conditions for the first two critical points is the exchange of \(\widehat{L}_{D_0 D_{R_1}}\) and \(\widehat{L}_{D_c D_{R_1}}\) in the EWCS condition for the first right cell. The conditions for other right or left cells remain unchanged from those at the first critical point.
In other words, the transition conditions at the second critical point are \(\overline{X}_{D_0 D_c}=1\), \eqref{1stR}, and \eqref{1stL} with \(k\ge 1\).
The iteration can be applied in a similar manner; however, the initial condition for the right cells is now \(\overline{X}_0=1\), the same as the initial condition for the left cells. Thus, at the second critical point, the cross ratios for the \(k\)-th right cell are identical with that of the \(k\)-th left cell.
We obtain
\begin{equation}\label{2ndCRI}
\begin{aligned}
   \overline{X}_{D_0 D_{R_{k}}}&=\overline{X}_{D_0 D_{L_{k}}}=\underset{k}{\underbrace{f\cdot f \cdot ... \cdot f }}(1), \quad k=1,2,3,...\\
   \overline{X}_{D_{R_{k-1}} D_{R_{k}}}&=\overline{X}_{D_{L_{k-1}} D_{L_{k}}}=g\cdot \underset{k-1}{\underbrace{f\cdot f \cdot ... \cdot f }}(1), \quad k=1,2,3,...
\end{aligned}
\end{equation}
The transition conditions \(\overline{X}_{D_0 D_c}=1\), \eqref{1stR}, and \eqref{1stL} imply that {the following value is the maximum of \(\widehat{\mathrm{EI}}_{\Delta}(A:B|E)\)} 
\begin{equation}
    \mathrm{Max}(\widehat{\mathrm{EI}}_{\Delta}(A:B|E))=\widehat{L}_0^2=\left(\sqrt{\overline{X}_{D_0 D_c}}+\sqrt{1+\overline{X}_{D_0 D_c}} \right)^2=3+2\sqrt{2}
\end{equation}
at the second critical point for arbitrary \(n\) intervals of \(E\).

\paragraph{The third and higher order critical points}

In the case of \(n=1\), the third critical point is determined by
\begin{equation}\label{3rdn1}
\overline{X}_{D_c D_{R_1}}=1, \quad \overline{X}_{BE_{R_1}}=\overline{X}_{D_0 D_c}, \quad \text{and} \quad \widehat{L}_{D_0 D_{R_1}}=\widehat{L}_0 \widehat{L}_{D_c D_{R_1}}.
\end{equation}
Note that the first two conditions imply \(\overline{X}_{AB}=1\) according to \eqref{idX2}. We find \(\overline{X}_{D_0 D_{R_1}}=f(1)\) and \(\overline{X}_{D_0 D_c}=g(1)\).

{In the case of general \(n>1\),} as more cells are added to the entanglement wedge, numerical results show that the transition condition for the first cell remains as given in \eqref{3rdn1}. The conditions for the other right or left cells are the same with those at the first (or second) critical points. The iterative method can still be applied, with the initial conditions for the right and left cells now being \(\overline{X}_{D_0 D_{R_{1}}}=f(1)\) and \(\overline{X}_{D_0 D_c} = g(1)\), respectively. The cross ratios for the {k-th} right and left cells are determined as
\begin{equation}\label{3rdRCR}
\begin{aligned}
   \overline{X}_{D_0 D_{R_{k}}}=\underset{k}{\underbrace{f\cdot f \cdot ... \cdot f }}(1), \quad k=1,2,3,...\\
   \left\{\begin{matrix}
\overline{X}_{D_{c} D_{R_{1}}}=1,\\
\overline{X}_{D_{R_{k-1}} D_{R_{k}}}=g\cdot \underset{k-1}{\underbrace{f\cdot f \cdot ... \cdot f }}(1), \quad k=2,3,...
\end{matrix}\right.
\end{aligned}
\end{equation}
and 
\begin{equation}\label{3rdLCR}
\begin{aligned}
   \overline{X}_{D_0 D_{L_{k}}}=\underset{k}{\underbrace{f\cdot f \cdot ... \cdot f }}\cdot g(1), \quad k=1,2,3,...\\
   \overline{X}_{D_{L_{k-1}} D_{L_{k}}}=g\cdot \underset{k-1}{\underbrace{f\cdot f \cdot ... \cdot f }}\cdot g(1), \quad k=1,2,3,...
\end{aligned}
\end{equation}
Since the EWCS conditions of the third critical point are the same as that of the second critical point, the ratio \eqref{ratioEn} at the third critical point is found to be
\begin{equation}
\begin{aligned}
    \mathrm{Max}(\widehat{\mathrm{EI}}_{\Delta}(A:B|E))=\widehat{L}_0^2&=\left(\sqrt{\overline{X}_{D_0 D_c}}+\sqrt{1+\overline{X}_{D_0 D_c}} \right)^2\\&=\left(\sqrt{g(1)}+\sqrt{1+g(1)} \right)^2\\&\approx 7.5504.
\end{aligned}
\end{equation}
Note that this value is also irrelevant to the number of intervals \(n\). 

Additional critical points exist if \(n > 1\). In fact, for a system with \(n\) intervals, there are \(n+2\) critical points. For the first three critical points, the cross ratios for the first right cell are determined by solving the conditions specific to that cell, and the cross ratios for other cells are then derived through iteration.
The \(k\)-th (\(k=3,4,5,\ldots\)) critical point can be identified by solving the conditions for the \(k-2\) cells near the central geodesic—specifically, the first \([\frac{k-2}{2}]\) left cells and \([\frac{k-1}{2}]\) right cells—and then performing right/left iterations for the other cells, starting from the cross ratios corresponding to the rightmost/leftmost geodesics of these \(k-2\) cells, \ie, \(\widehat{L}_{D_{0}D_{R_{[(k-1)/2]}}}\) and \(\widehat{L}_{D_{0}D_{L_{[(k-2)/2]}}}\).
In these \(k-2\) cells, the conditions for the \(k\)-th (\(k>2\)) critical point are depicted in the MPT diagram in Figure \ref{MPTCase}(a), which illustrates that more and more gap regions are marginally connected. These conditions can be solved numerically.
Thus, by using the iteration method, we can determine the cross ratios of the gap regions at the \(k\)-th critical points for \(n\) intervals from those at the \(k\)-th critical points for \(k-2\) intervals (for any \(n \ge k-2\)).

The precise values of $X_{AB}$ at all critical points in Case I are computed in Appendix \ref{sec4.3}.
The corresponding results are summarized in Table \ref{TableI} and Figure \ref{CPs2}.

\subsubsection{Case II}

\noindent In Case II, it is also convenient to divide the fully connected $\mathrm{EW}(ABE)$ into cells as depicted in Figure \ref{CellCaseII}.
In Case I, all cells are grouped into ``right" or ``left" cells which are separated by a central geodesic, while in Case II, we divide the cells into an ``upper" group and a ``lower" group.
The boundary that separates them is the geodesic connecting the RT surfaces of the region $D_{U_1}$ and $D_{L_1}$.
We have $\left[ \frac{n}{2} \right]$ upper cells and $\left[ \frac{n+1}{2} \right]$ lower cells\footnote{This separation is only a convention we choose. Due to the right-left symmetry (switching $A$ and $B$) and upper-lower symmetry (switching the upper and lower intervals of $E$ ) of the configuration of the system, the boundary can also be the geodesic between the EW of region $D_{U_{[n/2]+1}}$ and $D_{L_{[(n+1)/2]+1}}$, and there may also be $\left[ \frac{n}{2}+1 \right]$ upper cells and $\left[ \frac{n}{2} \right]$ lower cells.}.
The upper/lower intervals of $E$ are therefore denoted as $E_{U_i}$/$E_{L_j}$ ($i=1, 2,...[\frac{n}{2}]$, $j=1,2,...,[\frac{n+1}{2}]$).
Similarly the upper/lower gap regions are denoted respectively as $D_{U_i}$/$D_{L_j}$ ($i=1,2,...[\frac{n}{2}]+1$, $j=1,2,...,[\frac{n+1}{2}]+1$). 
\begin{figure}[H]
 \centering
 \includegraphics[scale=0.7]{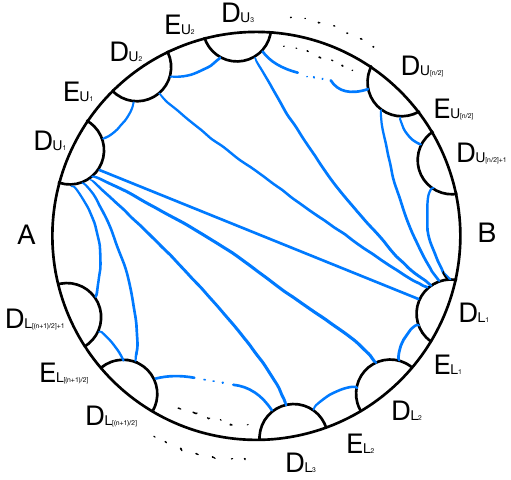}
 \caption{The entanglement wedge of $ABE$ is divided into ``cells" by a series of EWCSs which are shown in blue.
 The geodesic that connects the RT surfaces of $D_{U_1}$ and $D_{L_1}$ is the boundary of the upper cells and lower cells.
 The intervals and gap regions that are associated with the upper/lower cells are labeled with sub-indices $U$/$L$.
 }
\label{CellCaseII}
\end{figure}

As in Case I, the above way of dividing $\mathrm{EW}(ABE)$ into cells are also in accordance with the transition conditions (the MPT diagram in Figure \ref{MPTCase} (b)) of critical points.
The transition conditions for the case of $n$ intervals are also contained in that for the case of $n+1$ intervals.
As the intervals of $E$ (or cells) increase, the new interval can be viewed as added one by one to the lower gap region $D_{L_{[(n+1)/2]+1}}$ and the upper gap region $D_{U_{[n/2]}}$ of $\mathrm{EW}(ABE)$ in turn\footnote{The new cells are added to $D_{L_{[(n+1)/2]+1}}$ first and then to $D_{U_{[n/2]}}$, and this procedure is iterated.}.
With each new cell added to $\mathrm{EW}(ABE)$, we have two more transition conditions while the conditions for the existing cells remain unchanged.
In other words, the $(n+1)$th cell does not interfere with the first $n$ cells, but it is built by splitting $A$/$B$ into three intervals: a new interval $E_U$/$E_L$, a new gap region $D_U$/$D_L$ and a new smaller $A$/$B$.

In the case of $n=1$, there is only one lower cell in $\mathrm{EW}(ABE)$, the configuration of which is exactly the same as the case of $n=1$ in Case I.
We discuss the case of $n>1$ in the following.

\paragraph{The first critical point}

The new transition conditions brought by the $k$th ($k>1$) upper cell are 
\begin{equation}\label{1stU}
    \overline{X}_{\left(BD_{U_{[n/2]+1}}E_{U_{[n/2]}}...D_{U_{k+2}}E_{U_{k+1}}\right)E_{U_k}}=1, \quad \widehat{L}_{D_{L_{1}} D_{U_{k+1}}}=\widehat{L}_{D_{L_{1}} D_{U_{k}}} \widehat{L}_{D_{U_{k}} D_{U_{k+1}}}.
\end{equation}
By using \eqref{LABE} and \eqref{idX1}, \eqref{1stU} is equivalent to
\begin{equation}\label{iterEQU}
    \begin{aligned}
     &\quad 2\overline{X}_{D_{L_{1}} D_{U_{k}}}+1+\overline{X}_{D_{L_{1}} D_{U_{k+1}}}+\overline{X}_{D_{U_{k}} D_{U_{k+1}}}-\overline{X}_{D_{L_{1}} D_{U_{k+1}}}\overline{X}_{D_{U_{k}} D_{U_{k+1}}}=0,\\
    &\left(\sqrt{\overline{X}_{D_{L_{1}} D_{U_{k}}}}+\sqrt{1+\overline{X}_{D_{L_{1}} D_{U_{k}}}}\right)\left(\sqrt{\overline{X}_{D_{U_{k}} D_{U_{k+1}}}}+\sqrt{1+\overline{X}_{D_{U_{k}} D_{U_{k+1}}}}\right)\\
    &=\sqrt{\overline{X}_{D_{L_{1}} D_{U_{k+1}}}}+\sqrt{1+\overline{X}_{D_{L_{1}} D_{U_{k+1}}}}.
    \end{aligned}
\end{equation}
One can therefore find $\overline{X}_{D_{U_{k}} D_{U_{k+1}}}$ and $\overline{X}_{D_{L_{1}} D_{U_{k+1}}}$ provided $\overline{X}_{D_{L_{1}} D_{U_{k}}}$ is known.
This recurrence relation is traced down to the first upper cell where $\overline{X}_{D_{U_{1}} D_{U_{2}}}$ and $\overline{X}_{D_{L_{1}} D_{U_{2}}}$ can be represented by $\overline{X}_{D_{L_{1}} D_{U_{1}}}$.
Since $\overline{X}_{D_{L_{1}} D_{U_{1}}}$ is already solved in the first lower cell, all cross ratios related to the upper cells can be solved by iterations of \eqref{iterEQU}.

The new transition conditions brought by the $k$th lower cell are the same as \eqref{1stU} except all subscripts "U" are switched to "L":
\begin{equation}\label{1stLo}
    \overline{X}_{\left(AD_{L_{[(n+1)/2]+1}}E_{L_{[(n+1)/2]}}...D_{L_{k+1}}E_{L_{k}}\right)E_{L_k}}=1, \quad \widehat{L}_{D_{U_1} D_{L_{k+1}}}=\widehat{L}_{D_{U_1} D_{L_{k}}} \widehat{L}_{D_{L_{k}} D_{L_{k+1}}}.
\end{equation}
Similarly, \eqref{1stLo} is equivalent to
\begin{equation}\label{iterEQLo}
    \begin{aligned}
    &\quad 2\overline{X}_{D_{U_1} D_{L_{k}}}+1+\overline{X}_{D_{U_1} D_{L_{k+1}}}+\overline{X}_{D_{L_{k}} D_{L_{k+1}}}-\overline{X}_{D_{U_1} D_{L_{k+1}}}\overline{X}_{D_{L_{k}} D_{L_{k+1}}}=0,\\
    &\left(\sqrt{\overline{X}_{D_{U_1} D_{L_{k}}}}+\sqrt{1+\overline{X}_{D_{U_1} D_{L_{k}}}}\right)\left(\sqrt{\overline{X}_{D_{L_{k}} D_{L_{k+1}}}}+\sqrt{1+\overline{X}_{D_{L_{k}} D_{L_{k+1}}}}\right)\\
    &=\sqrt{\overline{X}_{D_{U_1} D_{L_{k+1}}}}+\sqrt{1+\overline{X}_{D_{U_1} D_{L_{k+1}}}}.
    \end{aligned}
\end{equation}
Likewise $\overline{X}_{D_{L_{k}} D_{L_{k+1}}}$ and $\overline{X}_{D_{U_1} D_{L_{k+1}}}$ are obtained if $\overline{X}_{D_{U_1} D_{L_{k}}}$ is known.
Since $\overline{X}_{D_{U_1} D_{L_{2}}}$ is already solved in the first lower cell, all cross ratios related to the lower cells can be solved by iterations of \eqref{iterEQLo} with the initial condition $\overline{X}_{D_{U_1} D_{L_{2}}}=g(1)$ .
Note that the EWCS condition of the $2$nd,...,$\left[\frac{n+1}{2}\right]$th lower cells is different from that of the first lower cell in that the role of $\overline{X}_{D_{L_1} D_{L_2}}$ and $\overline{X}_{D_{U_1} D_{L_2}}$ are exchanged.

The equations \eqref{iterEQU} and \eqref{iterEQLo} both have the form \eqref{iterEQ}, where $\overline{X}_{D_{L_1} D_{U_{k}}}$(or $\overline{X}_{D_{U_1} D_{L_{k}}}$), $\overline{X}_{D_{U_{k}} D_{U_{k+1}}}$(or $\overline{X}_{D_{L_{k}} D_{L_{k+1}}}$), and $ \overline{X}_{D_{L_1} D_{U_{k+1}}}$(or $\overline{X}_{D_{U_1} D_{L_{k+1}}}$) are replaced by $\overline{X}_{2j-2}$, $\overline{X}_{2j-1}$ and $\overline{X}_{2j}$, respectively.
As obtained in Case I, the solution to $\overline{X}_{2j-1}$ and $\overline{X}_{2j}$ is
\begin{equation}
    \overline{X}_{2j}=f(x), \quad \overline{X}_{2j-1}=g(x),
\end{equation}
where $\overline{X}_{2j-2}$ is abbreviated as $x$.
The initial condition for the upper cell is $\overline{X}_{0}=1$, while the initial condition for the lower cell is $\overline{X}_{1}=g(1) \approx 1.4207$.
$\overline{X}_{i}$, $(i=1,2,3,...)$ can then be generated from \eqref{iterEQ} by iteration. 
Restoring the subscripts of cross ratios, for the upper cells, we obtain 
\begin{equation}\label{UCR}
\begin{aligned}
   \overline{X}_{D_{L_1} D_{U_{k}}}&=\underset{k-1}{\underbrace{f\cdot f \cdot ... \cdot f }}(1), \quad k=1,2,3,...\\
   \overline{X}_{D_{U_{k}} D_{U_{k+1}}}&=g\cdot \underset{k-1}{\underbrace{f\cdot f \cdot ... \cdot f }}(1), \quad k=1,2,3,...
\end{aligned}
\end{equation}
For the lower cells, we obtain
\begin{equation}\label{LoCR}
\begin{aligned}
   \overline{X}_{D_{U_1} D_{L_{k}}}=\underset{k-2}{\underbrace{f\cdot f \cdot ... \cdot f }}\cdot g(1), \quad k=2,3,...\\
   \left\{\begin{matrix}
\overline{X}_{D_{L_{1}} D_{L_{2}}}=f(1),\\
\overline{X}_{D_{L_{k}} D_{L_{k+1}}}=g\cdot \underset{k-2}{\underbrace{f\cdot f \cdot ... \cdot f }}\cdot g(1), \quad k=2,3,...
\end{matrix}\right.
\end{aligned}
\end{equation}
where ``$\cdot$" denotes the composition of functions.
The configurations of the first critical points for arbitrary $n$ are then obtained.

We see that the upper/lower cells are analogues of the left/right cells and the related cross ratios between gap regions have the same structure as in Case I.
This result implies that in Case II, the first critical value of $X_{AB}$ is also non-vanishing.
These critical values are computed in Appendix \ref{sec4.3}.

\paragraph{The second critical point}

As illustrated in the MPT diagram \ref{MPTCase} (b), the transition condition of the second critical point of Case II is completely parallel to that of Case I. The only difference is that $\widehat{L}_{D_{U_1} D_{L_2}}$ and $\widehat{L}_{D_{L_1} D_{L_2}}$ are exchanged in the EWCS condition of the first lower cell.
The conditions of other upper or lower cells remain the same as that of the first critical points, \ie, \(\overline{X}_{D_{U_{1}}D_{L_{1}}}=1\), \eqref{1stU}, and \eqref{1stLo}.
We obtain
\begin{equation}\label{2ndCRII}
\begin{aligned}
   \overline{X}_{D_{L_{1}} D_{U_{k}}}&=\overline{X}_{D_{U_1} D_{L_{k}}}=\underset{k-1}{\underbrace{f\cdot f \cdot ... \cdot f }}(1), \quad k=1,2,3,...\\
   \overline{X}_{D_{U_{k}} D_{U_{k+1}}}&=\overline{X}_{D_{L_{k}} D_{L_{k+1}}}=g\cdot \underset{k-1}{\underbrace{f\cdot f \cdot ... \cdot f }}(1), \quad k=1,2,3,...
\end{aligned}
\end{equation}
The transition conditions imply that
\begin{equation}
    \mathrm{Max}(\widehat{\mathrm{EI}}_{\Delta}(A:B|E))=\widehat{L}_{D_{L_1} D_{U_1}}^2=\left(\sqrt{\overline{X}_{D_{L_{1}}D_{U_{1}}}}+\sqrt{1+\overline{X}_{D_{L_{1}}D_{U_{1}}}} \right)^2=3+2\sqrt{2}
\end{equation}
at the second critical point for arbitrary \(n\) intervals of \(E\).

\paragraph{The third and higher order critical points}

It is observed from our numerical results that the transition condition of the first cell is
\begin{equation}\label{3rdn1II}
\begin{aligned}
    &\overline{X}_{D_{L_1} D_{L_2}}=1, \quad \overline{X}_{AD_{L_{[(n+1)/2]+1}}E_{L_{[(n+1)/2]}}...D_{L_{3}}E_{L_{2}}}=\overline{X}_{D_{U_1} D_{L_1}}, \\
    &\widehat{L}_{D_{U_1} D_{L_2}}=\widehat{L}_{D_{U_1} D_{L_1}} \widehat{L}_{D_{L_1} D_{L_2}}.
\end{aligned}
\end{equation}
Note that the first two conditions imply
\begin{equation}
\overline{X}_{\left( AD_{L_{[(n+1)/2]+1}}E_{L_{[(n+1)/2]}}...D_{L_{3}}E_{L_{2}} \right)\left( BD_{U_{[n/2]+1}}E_{U_{[n/2]}}...D_{U_{2}}E_{U_{1}} \right)}=1    
\end{equation}
according to \eqref{idX2} and we have $\overline{X}_{D_{U_1} D_{L_2}}=f(1)$ and $\overline{X}_{D_{U_1} D_{L_1}}=g(1)$.
The conditions of other right or left cells remain the same as that of the first (or the second) critical points.
The iteration can still be applied.
The initial condition for the upper and lower cells are now $\overline{X}_{D_{U_1} D_{L_2}} \sim \overline{X}_2=f(1)$ and $ \overline{X}_{D_{U_1} D_{L_2}} \sim \overline{X}_0=g(1)$, respectively.
The cross ratios of the upper and lower cells are found to be
\begin{equation}\label{3rdUCR}
\begin{aligned}
   \overline{X}_{D_{L_1} D_{U_{k}}}=\underset{k-1}{\underbrace{f\cdot f \cdot ... \cdot f }}\cdot g(1), \quad k=1,2,3,...\\
   \overline{X}_{D_{U_{k}} D_{U_{k+1}}}=g\cdot \underset{k-1}{\underbrace{f\cdot f \cdot ... \cdot f }}\cdot g(1), \quad k=1,2,3,...
\end{aligned}
\end{equation}
and
\begin{equation}\label{3rdLoCR}
\begin{aligned}
   \overline{X}_{D_{U_1} D_{L_{k}}}=\underset{k-1}{\underbrace{f\cdot f \cdot ... \cdot f }}(1), \quad k=2,3,...\\
   \left\{\begin{matrix}
\overline{X}_{D_{L_1} D_{L_{2}}}=1,\\
\overline{X}_{D_{L_{k}} D_{L_{k+1}}}=g\cdot \underset{k-1}{\underbrace{f\cdot f \cdot ... \cdot f }}(1), \quad k=2,3,...
\end{matrix}\right.
\end{aligned}
\end{equation}
The ratio \eqref{ratioEn} at the third critical point can be found by the EWCS conditions as
\begin{equation}
\begin{aligned}
\mathrm{Max}(\widehat{\mathrm{EI}}_{\Delta}(A:B|E))=\widehat{L}_{D_{L_1} D_{U_1}}^2&=\left(\sqrt{\overline{X}_{D_{L_{1}}D_{U_{1}}}}+\sqrt{1+\overline{X}_{D_{L_{1}}D_{U_{1}}}} \right)^2\\&=\left(\sqrt{g(1)}+\sqrt{1+g(1)} \right)^2\\&\approx 7.5504,
\end{aligned}
\end{equation}
which is also irrelevant to the number of intervals \(n\).

As in Case I, there are also $n+2$ critical points in the case of $n$ intervals.
The $k$th $(k=3,4,5,...)$ critical point can be found by solving the conditions of the $k-2$ cells near the geodesic connecting the EW of $D_{U_1}$ and $D_{L_1}$ (more precisely, first $[\frac{k-2}{2}]$ upper cells and $[\frac{k-1}{2}]$ lower cells) and then perform the right/left iterations to other cells starting from the cross ratios corresponding to the ``uppermost"/``lowermost'' geodesics.
For the EWCS combination to have $k$ critical points, we need at least $k-2$ cells (\ie, $k-2$ intervals of $E$).
In these $k-2$ cells, the conditions of the $k$th $(k>2)$ critical point are illustrated in Figure \ref{MPTCase} (b) which shows that more and more gap regions are connected in the MPT diagram.
These conditions can be solved numerically.
Based on the numerical solution of the $k-2$ cells, the cross ratios of the gap regions at the $k$th critical points for $n$ intervals $(n \ge k-2)$ can be obtained by iteration \eqref{iterEQ}.
The initial condition of the iteration for the upper/lower cells is $\overline{X}_0=\overline{X}_{D_{L_1} D_{U_{[(k-2)/2]+1}}}$ and $\overline{X}_0=\overline{X}_{D_{U_1} D_{L_{[(k-1)/2]+1}}}$ respectively.

The precise values of $X_{AB}$ at all critical points in Case II are computed in Appendix \ref{sec4.3}.
The corresponding results are summarized in Table \ref{TableII} and Figure \ref{CPs2}.

\subsection{\(\mathrm{Max}(\mathrm{EI}_\Delta (A:B|E))\) and the entanglement of assistance}\label{secEoA}

\noindent We now compare \(\mathrm{Max}(\mathrm{EI}_{\Delta}(A:B|E))\) with \(\mathrm{HE}(A:B)\), which is the minimal EWCS that separates A from B in the entanglement wedge of \(ABE\) and equals one half of the holographic entanglement of assistance \(\mathrm{EoA}(AA^*:BB^*|EE^*)\), in the case of general interval number \(n\).
For a fixed interval number \(n\), at all critical points beyond the second, we find
\(\mathrm{Max}(\mathrm{\widehat{EI}}_{\Delta}(A:B|E)) = \widehat{L}_0^{2}\) in Case I and \(\mathrm{Max}(\mathrm{\widehat{EI}}_{\Delta}(A:B|E)) = \widehat{L}_{D_{L_1} D_{U_1}}^{2}\) in Case II, as shown in the previous subsection. 
In fact, as suggested by \eqref{transition}, these relations hold throughout all phases except the first two. 
Since \(\widehat{L}_0\) and \(\widehat{L}_{D_{L_1} D_{U_1}}\) are indeed the smallest among all possible EWCSs that separate \(A\) and \(B\) in the entanglement wedge of \(ABE\) in Case I and II, respectively, 
these two values coincide with \(\mathrm{\widehat{HE}}(A:B)\) in the corresponding case.
We therefore conclude that \(\mathrm{Max}(\mathrm{\widehat{EI}}_{\Delta}(A:B|E)) \le \mathrm{\widehat{HE}}(A:B)^2\) and
\begin{equation}
   \mathrm{Max}(\mathrm{EI}_{\Delta}(A:B|E)) \le  2 \mathrm{HE}(A:B|E)
\end{equation}
with the inequality saturated when \(X_{AB}\) is larger than the second critical points.

\section{Conclusion and Discussion}\label{sec5}

\noindent In this paper, we introduced the holographic EWCS triangle information, $\mathrm{EI}_\Delta (A:B|E)$, as a linear combination of entanglement wedge cross sections that is both positive and upper bounded, capturing assisted quantum correlations in a holographic tripartite mixed state \(ABE\).
We also defined in the canonical purification state $\lvert \sqrt{\rho_{ABE}} \rangle$ the holographic entanglement of assistance $\mathrm{EoA}(AA^*:BB^*|EE^*)$ which provides a universal bound on $\mathrm{EI}_\Delta$:
\begin{equation}
   \mathrm{EI}_\Delta(A:B|E)\le \mathrm{EoA}(AA^*:BB^*|EE^*) = 2\mathrm{HE}({A:B|E}), 
\end{equation}
showing that $\mathrm{EI}_\Delta$ is upper bounded by the minimal EWCS cuts that separate $A$ and $B$ in the entanglement wedge of \(ABE\).

Specializing to AdS$_3$/CFT$_2$, we compute the maximum value of $\mathrm{EI}_\Delta$ over all possible auxiliary regions $E$ (with a fixed $A$ and $B$) using generalized MPT diagrams, and uncovered a rich phase structure governed by the cross ratio $X_{AB}$. 
In both Case I and II, for each fixed number of intervals in $E$, the maximized $\mathrm{EI}_\Delta$ vanishes below a critical threshold of $X_{AB}$. It then increases monotonically with $X_{AB}$, and diverges as $X_{AB} \to \infty$.
In particular, while \(A\) and \(B\) are uncorrelated for \(X_{AB}<1\) (\ie, \(I(A:B)=0\)), introducing the auxiliary system \(E\) makes the $\mathrm{Max}(\mathrm{EI}_\Delta)$ nonzero as soon as \(X_{AB}\) exceeds the threshold of the first critical point. This highlights the entanglement enhancing assisted by \(E\). 
Moreover, the maximum saturates the upper bound $2\,\mathrm{HE}(A:B|E)$, which equals to the entanglement of assistance in the canonical purification state ($\mathrm{EoA}(AA^*:BB^*|EE^*)$), when $X_{AB}$ exceeds the second critical point.
For fixed $X_{AB}$, the maximum grows with $n$ but approaches a finite limit as $n\to \infty$.
Finally, comparing the Case I and II directly, Case II always yields a larger maximum value than Case I for the same $n$ and $X_{AB}$.

The explicit numerical results of $\mathrm{Max}(\mathrm{EI}_\Delta(A:B|E))$ are summarized in Table \ref{TableI} for the Case I, {where all $n$ intervals of $E$ stay in the same gap region between $A$ and $B$,} and in Table \ref{TableII} for Case II, {where the $n$ intervals of $E$ are split equally in the gap regions between and outside $A$ and $B$,} respectively.
The corresponding maximum values $\mathrm{Max}(\widehat{\mathrm{EI}}_{\Delta}(A:B|E))$ are shown in Figures \ref{CPs1} and \ref{CPs2}.
\begin{table}[!ht]
    \centering
    \begin{tabular}{|l|l|l|l|l|l|l|l|}
    \hline
        n & 1st & 2nd  & 3rd  & 4th  & 5th  & 6th  \\ \hline
        1 & 0.09517028 & 0.70386783 & 1 & - & - & - \\ \hline
        2 & 0.07735027  & 0.51814928 & 0.75757510 & 1 & - & - \\ \hline
        3 & 0.06597732 & 0.49830390 & 0.72418528 & 0.95906706 & 1 & - \\ \hline
        4 & 0.06418252 & 0.47945866 & 0.69937172 & 0.92196698 & 0.96167928 & 1 \\ \hline
        5 & 0.06280714 & 0.47686199 & 0.69502451 & 0.91667430 & 0.95604653 & 0.99419457 \\ \hline
        ...... & ...... & ...... & ...... & ...... & ...... & ...... \\ \hline
        $\infty$ & 0.06229791 & 0.47342947 & 0.69034886 & 0.90991263 & 0.94903516 & 0.98679139 \\ \hline
    \end{tabular}
    \caption{ Critical values of the ``order parameter" $X_{AB}$ at the 1st, 2nd, 3rd... critical points in Case I, where all $n$  intervals of $E$ stay in the same gap region between $A$ and $B$. }\label{TableI}
\end{table}

\begin{table}[!ht]
    \centering
    \begin{tabular}{|l|l|l|l|l|l|l|l|}
    \hline
        n & 1st  & 2nd  & 3rd & 4th  & 5th  & 6th   \\ \hline
        1 & 0.09517028 & 0.70386783 & 1 & - & - & - \\ \hline
        2 & 0.06698730 & 0.49542992 & 0.72081590  & 1 & - & - \\ \hline
        3 & 0.05699941 & 0.47423274 & 0.68488263 & 0.95331201 & 1 & - \\ \hline
        4 & 0.05456067 & 0.45394249 & 0.65749728 & 0.90880380 & 0.95369581 & 1 \\ \hline
        5 & 0.05337342 & 0.45116312 & 0.65281155 & 0.90274552 & 0.94717639 & 0.99322107 \\ \hline
        ...... & ...... & ...... & ...... & ...... & ...... & ...... \\ \hline
        $\infty$ & 0.05279979 & 0.44748598 & 0.64769242 & 0.89473657 & 0.93880823 & 0.98426082 \\ \hline
    \end{tabular}
    \caption{Critical values of the ``order parameter" $X_{AB}$ at the 1st, 2nd, 3rd... critical points in Case II, where the $n$  intervals of $E$ are split equally in the gap regions between and outside $A$ and $B$. }\label{TableII}
\end{table}

\begin{figure}[hbpt]
	\centering
	\includegraphics[scale=0.4]{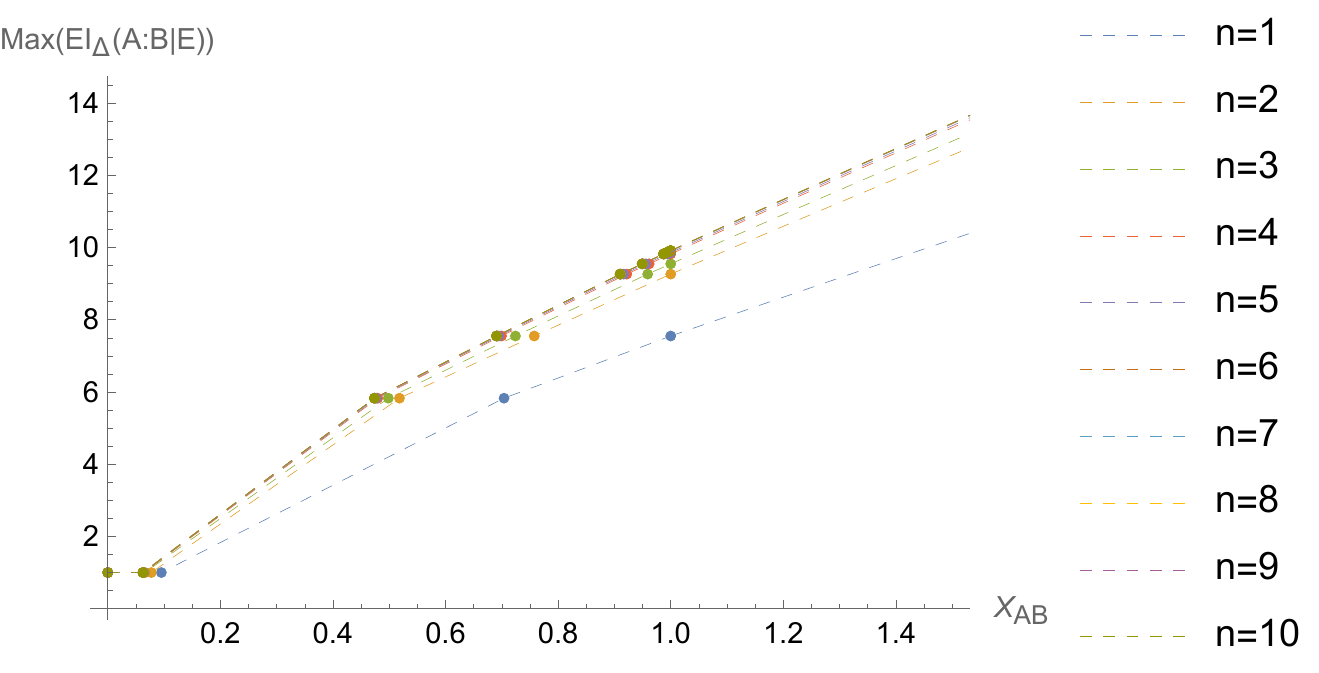}
	\caption{Values of $\mathrm{Max}(\mathrm{EI}_{\Delta}(A:B|E))$ at the critical points in Case I, with different colors indicating different numbers of interval $n$ in region $E$. The maximum value at critical points are shown in dots and connected by dashed lines which indicate the tendency of $\mathrm{Max}(\mathrm{EI}_{\Delta}(A:B|E))$ as a function of $X_{AB}$ in the case of different $n$.
	}\label{CPs1}
\end{figure}

\begin{figure}[hbpt]
	\centering
	\includegraphics[scale=0.4]{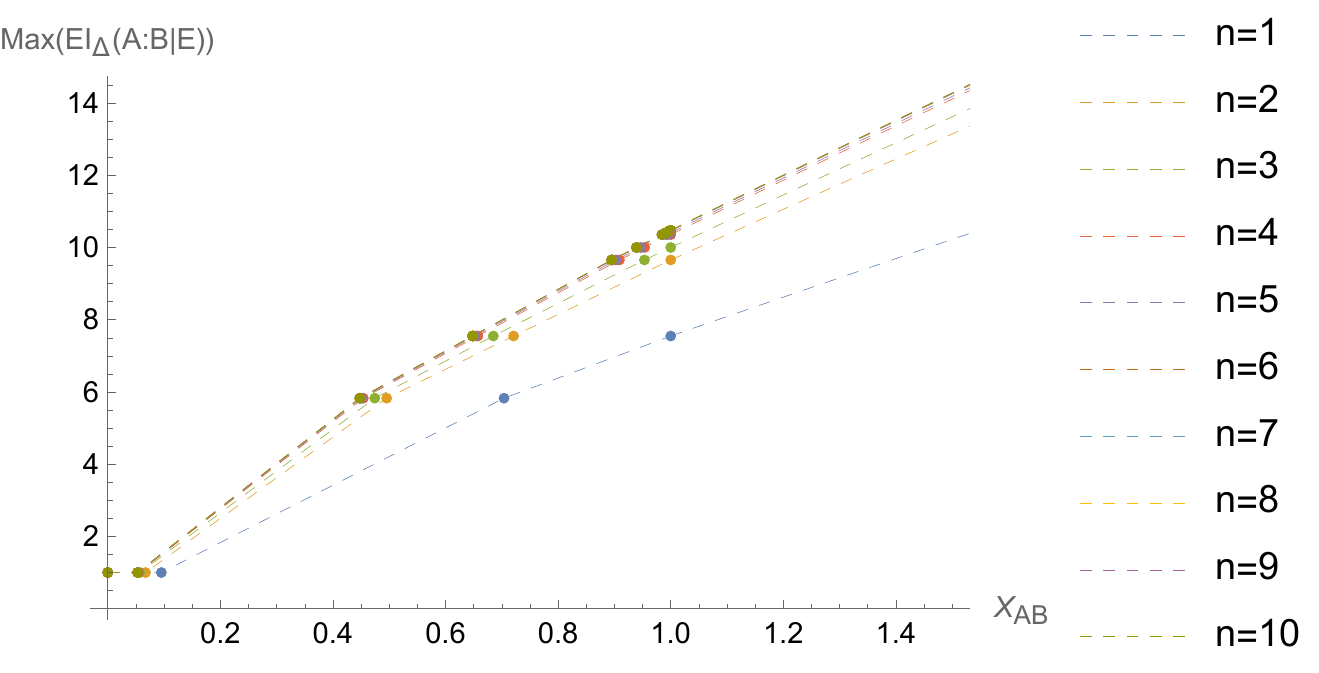}
	\caption{Values of $\mathrm{Max}(\mathrm{EI}_{\Delta}(A:B|E))$ at the critical points in Case II, with different colors indicating different numbers of interval $n$ in region $E$. The maximum value at critical points are shown in dots and connected by dashed lines which indicate the tendency of $\mathrm{Max}(\mathrm{EI}_{\Delta}(A:B|E))$ as a function of $X_{AB}$ in the case of different $n$.
	}\label{CPs2}
\end{figure}

These results position $\mathrm{Max}(\mathrm{EI}_\Delta(A:B|E))$ as a diagnostic of bipartite entanglement between $A$ and $B$ with the help of $E$ in holography.
Natural extensions include higher dimensions and time-dependent settings, and a sharper operational meaning in holography relative to entanglement of assistance.
It would also be interesting to construct alternative combinations of geometric quantities in holography which could serve as candidate diagnostics of multi-partite entanglement.
The general combination may involve not only entanglement entropies and EWCSs, but also other proposed multi-partite entanglement quantifications such as multi-EWCSs and multi-entropies.

\section*{Acknowledgement}

\noindent We thank Bart{\l}omiej Czech, Xuan-Ting Ji and Yi Ling for their valuable discussions.
This work is supported in part by the National Natural Science Foundation of China under Grant No. 12347183, 12505053, and 12035016.

\appendix

\section{Holographic entanglement entropy, cross ratios and connection conditions }\label{sec2}

\noindent In this section we introduce several basic ingredients that are used for the computation of $\mathrm{Max}(\mathrm{EI}_{\Delta}(A:B|E))$.
We work in the context of AdS$_3$/CFT$_2$ and consider pure AdS$_3$ in this section.

\subsection{Holographic entanglement entropy}

\noindent According to the RT proposal \cite{Ryu:2006bv} for the holographic entanglement entropy, the entanglement entropy of a subregion \( A \) in the boundary conformal field theory is proportional to the area of the bulk minimal surface \( \gamma_A \) that is homologous to \( A \)
\begin{equation}
	S_A=\frac{\text{Area}\left(\gamma_A\right)}{4G_N}.
\end{equation}
In AdS$_3$, the minimal surface for a single interval \(A\) reduces to the geodesic that connects the two endpoints of \(A\).
The length of the geodesic is \(2 L \log \frac{R}{\epsilon}\) in Poincaré coordinates, where \(R\) is the length of \(A\), \(\epsilon\) is the UV cutoff and \(L\) is the AdS radius.  
The entanglement entropy of \(A\) is then given by
\begin{equation}
	S_A=\frac{c}{3}\log \frac{R}{\epsilon},
\end{equation}
where \(c\) is the central charge and we have used the identity \(\frac{L}{G_N}=\frac{2c}{3}\)\cite{Brown:1986nw}.

\subsection{Cross ratios and connection conditions}\label{sec2.2}

\noindent A time slice of global AdS$_3$ can be represented as the Poincaré upper half plane.
The conformal boundary is a 1 dimensional straight line (the $x$ axis).
Given two disjoint intervals \(P=(x_i, x_j)\) and \(Q=(x_{k}, x_{l})\) on this line, a cross ratio is defined to be
\begin{equation}\label{CrossratioDef}
    X_{PQ}:=\frac{x_{ij} x_{kl}}{x_{jk} x_{il}},
\end{equation}
where we have denoted \(x_{mn}:=x_m-x_n\).
The cross ratio is constructed to be invariant under one dimensional conformal transformations, \ie, the translation, dilation and special conformal transformation\footnote{Note that in one dimension, there are \(N-3\) independent cross ratios that are constructed from \(4\) out of \(N\) points. In \(D\) dimensions, there are \(D\cdot N\) variables to build cross ratios
and \((D+2)(D+1)/2\) constraints coming from the freedom of conformal transformations. Thus in general, the number of independent cross ratios is \(D\cdot N-(D+2)(D+1)/2\). Therfore, we have only \(1\) independent cross ratio given \(4\) points.
All other cross ratios constructed from the \(4\) points can be represented by \eqref{CrossratioDef}.}. 

Since the time slice of AdS$_3$ is isometric under conformal transformations, it has other representations in a plane, which are conformal to the upper half plane.
A well known representation is the Poincaré disk.
The cross ratio of two intervals \(P=(\theta_i, \theta_j)\) and \(Q=(\theta_{k}, \theta_{l})\) on the conformal boundary is 
\begin{equation}\label{Xac}
    X_{PQ}=\frac{\sin \frac{\theta_{ij}}{2} \sin \frac{\theta_{kl}}{2}}{\sin \frac{\theta_{jk}}{2} \sin \frac{\theta_{il}}{2}},
\end{equation}
where \(\theta_{mn}:=\theta_m-\theta_n\). 
Here, $\theta_i, \theta_j, \theta_{k}, \theta_{l}$ are angular coordinates of four endpoints on the boundary. The intervals are defined counter-clockwise, \ie, $\theta_j$($\theta_{l}$) is counter-clockwise to $\theta_i$($\theta_{k}$).
We use this representation in Section \ref{sec4} in the numerical computations.

The cross ratios can be used to indicate the connectivity of the entanglement wedge of boundary intervals.
For instance, the connectivity of the EW for two disjoint intervals $P$ and $Q$ is determined by their mutual information (MI), which can be represented by the cross ratio $X_{PQ}$ as
\begin{equation}\label{IPQ}
	\begin{aligned}
		I(P:Q)&=S_{P}+S_{Q}-S_{PQ}\\
		&=\frac{c}{3}\log\left(\mathrm{max}\left\{1,\;\; X_{PQ} \right\}\right).
	\end{aligned}
\end{equation}
The EW of $P$ and $Q$ is connected if the MI is nonzero, \ie, $X_{PQ}>1$, otherwise the MI vanishes and the EW is disconnected.

For $3$ intervals $A$, $B$ and $E$, as depicted in Figure \ref{Fig3a}, there are $5$ possible configurations of EW.
\begin{figure}[H]
\centering
     \includegraphics[width=8cm]{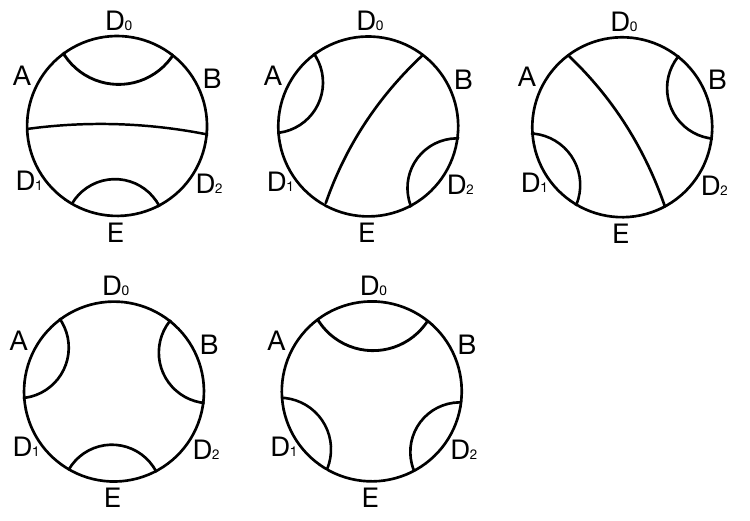}
 \caption{The 5 possible configurations for 3 intervals $A$, $B$ and $E$.
 The 3 gap regions between them are denoted as $D_0$, $D_1$ and $D_2$.}
\label{Fig3a}
\end{figure}
To ensure that we have a connected configuration, all the disconnected configurations must have larger geodesic length than that of the totally connected configuration.
Thus there are $4$ connection conditions that constrain the cross ratios, namely
\begin{equation}\label{con}
    \frac{X_{AB}}{X_{D_1 D_2}}>1,\quad \frac{1}{X_{D_0 D_1}}>1,\quad \frac{1}{X_{D_0 D_2}}>1,\text{ and } \frac{1}{X_{D_1 D_2}}>1.
\end{equation}
Each condition corresponds to a configuration that is not totally connected. 
One can read off these conditions by subtracting the totally connected diagram from each of the disconnected diagrams and demanding all these differences to be positive, as shown in Figure \ref{Fig3b}.
\begin{figure}[h]
\centering
     \includegraphics[width=10cm]{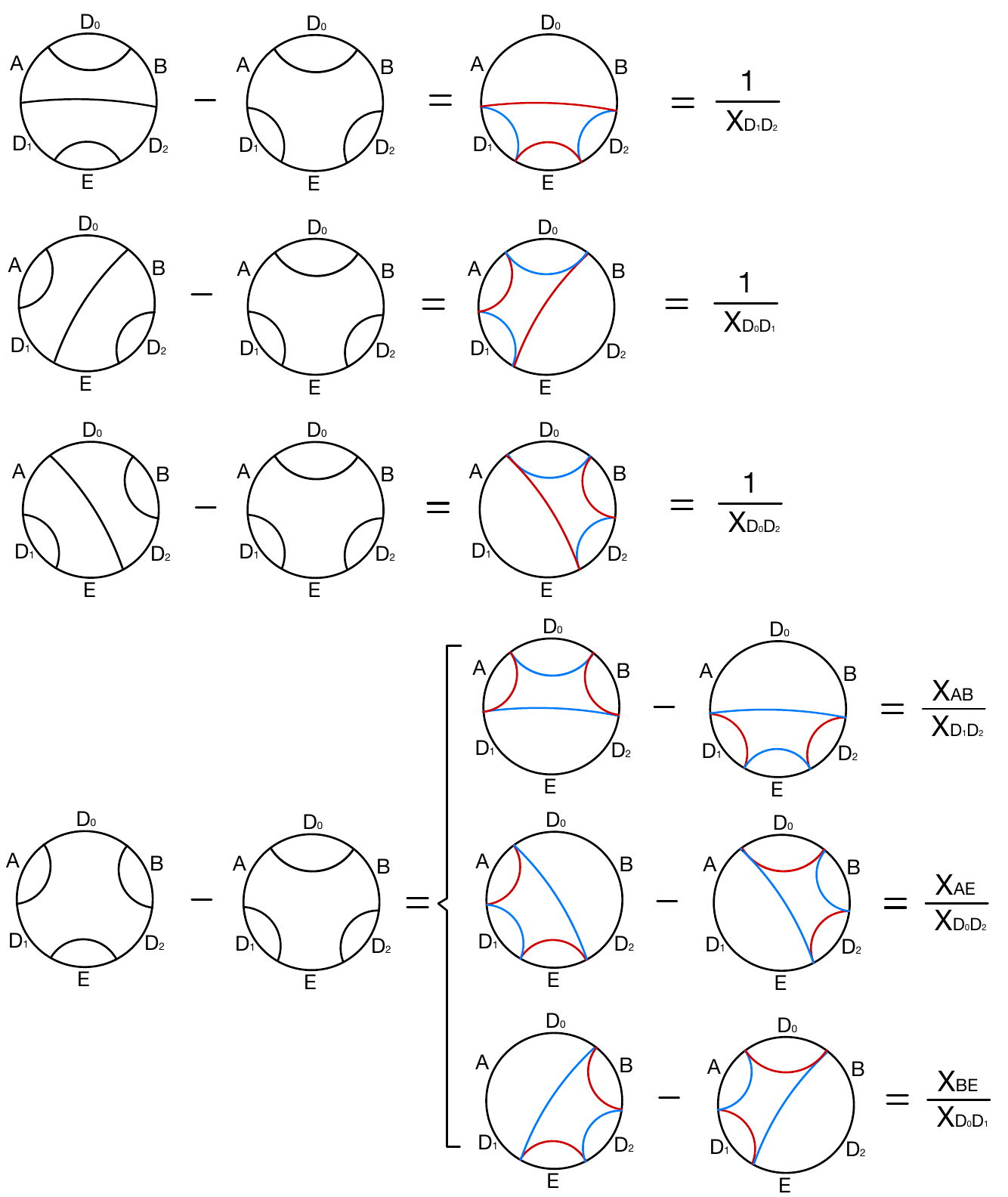}
 \caption{A diagrammatic representation of the connection conditions \eqref{con}. On the leftmost of the equalities, the 4 disconnected diagrams are subtracted by the totally connected diagram, resulting in 4 functions of cross ratios on the rightmost of the equalities, where we have omitted the common factor $\frac{c}{3}$ and the logarithm.
 The cross ratios are read off from combinations of the geodesics (RT surfaces) which are shown in the middle of the equalities.
 The red/blue curves represent the geodesics whose lengths have positive/negative signs in the calculation of the cross ratios.
 The combinations are not unique. 3 different combinations of the geodesics are listed for the last subtraction, which correspond to 3 different sets of cross ratios.}
\label{Fig3b}
\end{figure}
Note that the expressions of the connection conditions in terms of cross ratios are not unique, because there are identities among cross ratios.
For example, we have $\frac{X_{AB}}{X_{D_1 D_2}}=\frac{X_{AE}}{X_{D_0 D_2}}=\frac{X_{BE}}{X_{D_0 D_1}}$.
We will prove this equation in the following section.

To find all the connection conditions for $n$ intervals, one may first look for all the possible configurations for the corresponding EW and then subtract the connected diagram from each disconnected diagram to find all the connection conditions.
The possible configurations are related to the partition of interval number $n$.
For example, the partition of $n=5$ is 
\begin{equation}
\begin{aligned}
    5=&1+1+1+1+1\\
     =&2+1+1+1\\
     =&2+2+1\\
     =&3+1+1\\
     =&3+2\\
     =&4+1.\\
\end{aligned}
\end{equation}
The configurations can be classified according to how intervals are connected in that configuration, thereby each partition of $n$ corresponds to a kind of configuration.
For instance, the partition $3+1+1$ corresponds to the configurations where three intervals of $E$ are connected while the other two intervals are disconnected and also isolated from the connected three intervals, as illustrated in Figure \ref{Fig4}(a).
\begin{figure}[h]
\centering
\subfigure[]{
		\begin{minipage}[b]{.9\linewidth}
			\centering
			\includegraphics[scale=0.5]{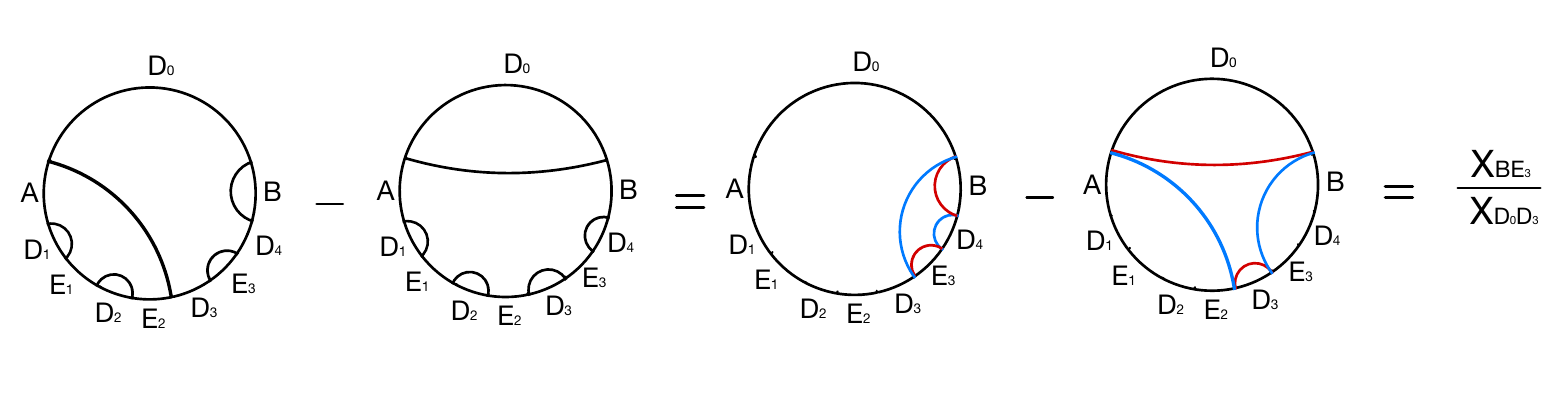}
		\end{minipage}
	}\\
	\subfigure[]{
		\begin{minipage}[b]{.9\linewidth}
			\centering
			\includegraphics[scale=0.5]{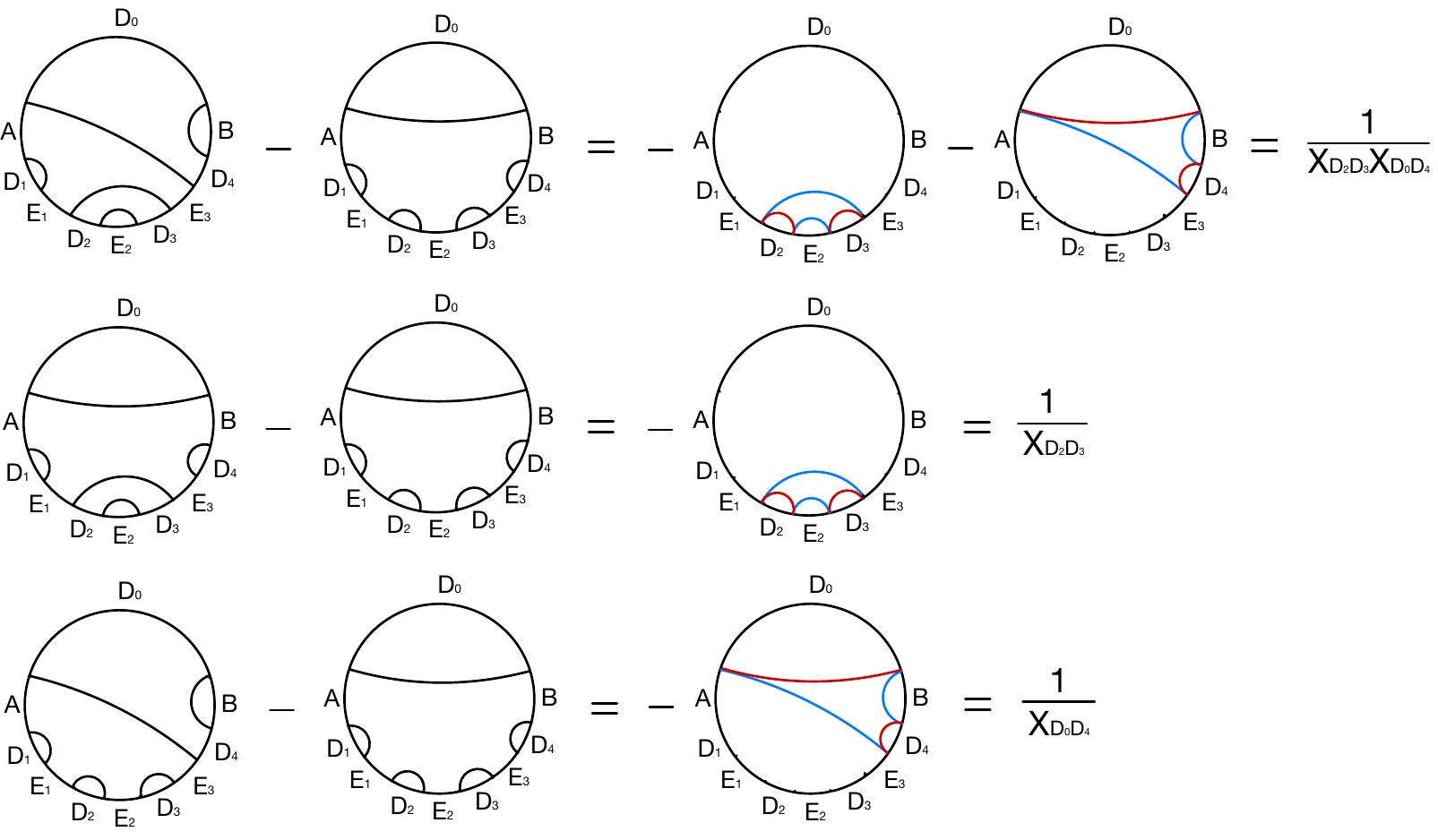}
		\end{minipage}
	}
 \caption{(a) A configuration of partition $3+1+1$ which does not have any ``jump". This configuration corresponds to $\frac{X_{BE_3}}{X_{D_0 D_3}}$ after subtracted by the totally connected configuration.  
 (b) The first line shows a configuration of partition $3+1+1$ which has a ``jump" and (after subtracted by the totally connected configuration) corresponds to $\frac{1}{X_{D_2 D_3} X_{D_0 D_4}}$.
 The second and third line depict two configurations that correspond to $\frac{1}{X_{D_2 D_3}}$ and $\frac{1}{X_{D_0 D_4}}$ respectively.
 The connection condition $\frac{1}{X_{D_2 D_3} X_{D_0 D_4}}>1$ that corresponds to the configuration in the first line is redundant if the connection conditions correspond to the second and third line ($\frac{1}{X_{D_2 D_3}}>1$ and $\frac{1}{X_{D_0 D_4}}>1$) are fulfilled.
 The red/blue curves in the diagrams represent the geodesics whose lengths have positive/negative signs in the calculation of the cross ratios.}
\label{Fig4}
\end{figure}
However, it is not necessary to examine all the configurations. 
Configurations containing non-adjacent connected intervals (``jumps") lead to repeated connection conditions with those that do not have ``jumps" and are redundant, see Figure \ref{Fig4}(b).
After excluding configurations with ``jumps", the problem of finding all connection conditions boils down to counting the possible ways to partition the intervals into groups of successive intervals.
For each gap region between intervals, one may put a partition or not, so that there are $2^n$ ways in total to perform such partition in $n$ gap regions. But this counts the one with no partition, which corresponds to the totally connected configuration, and $n$ that has only one partition, which are not associated with any EWs.
Therefore, there are $2^n-n-1$ connection conditions for $n$ intervals, each of which can be represented in terms of cross ratios in a systematic way.

\section{The derivation of the equation (\ref{idX1})}\label{CRidX1}

\noindent In this appendix, we derive the equation (\ref{idX1}) from the configuration shown in Figure \ref{Figabc}.
\begin{figure}[h]
 \centering
 \includegraphics[scale=1]{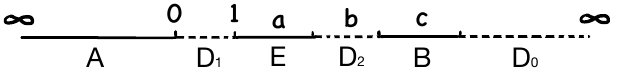}
 \caption{The configuration of the boundary systems $A$, $B$ and $E$.
 The left endpoint of $A$, the right endpoint of $A$ and the left endpoint of $E$ are located at infinity, $0$ and $1$ respectively. $a$, $b$ and $c$ are the lengths of $E$, $D_2$ and $B$ respectively.
 }
\label{Figabc}
\end{figure}
Three of the six endpoints of the intervals can be fixed by the three degrees of freedom of conformal transformations.
Thus, without loss of generality, we choose the interval $A$ as the negative half line of the $x$ axis and set the left endpoint of $E$ to be 1.
The lengths of intervals $E$, $D_2$ and $B$ are the remaining three degrees of freedom, which are denoted as $a$, $b$ and $c$ respectively.
The cross ratios $\overline{X}_{AB}$, $\overline{X}_{AE}$, $\overline{X}_{AE}$, $\overline{X}_{D_0 D_1}$, $\overline{X}_{D_0 D_2}$, $\overline{X}_{D_1 D_2}$ can be represented by $a$, $b$, $c$ as
\begin{equation}\label{abc}
\begin{aligned}
    \overline{X}_{AE}=\frac{1}{a}, \quad \overline{X}_{BE}=\frac{b(a+b+c)}{a c}, \quad \overline{X}_{AB}=\frac{1+a+b}{c}, \\ 
    \overline{X}_{D_0 D_1}=a+b+c, \quad \overline{X}_{D_0 D_2}=\frac{c}{b}, \quad \overline{X}_{D_1 D_2}=\frac{a(1+a+b)}{b}.
\end{aligned}
\end{equation}
The equation (\ref{idX1}) can then be verified from \eqref{abc} directly.

\section{Critical points of $X_{AB}$ for general $n$}\label{sec4.3}

\noindent In Section \ref{sec4.2}, we have solved the transition conditions at all the critical points and obtained all the $2n+1$ independent cross ratios between gap regions {$(X_{D_i D_j})$} for each critical points. 
{These $2n+1$ cross ratios completely determine the configuration of the entanglement wedge of $ABE$, such that all other variables such as the cross ratios between intervals $A$, $B$, $E_i$ can be expressed by them.}
In this appendix, we derive the precise critical values of $X_{AB}$ from these $2n+1$ cross ratios.

Note that a cell is constructed by 3 gap regions and 3 intervals that are denoted as $I_1, I_2, I_3$ and $G_1, G_2, G_3$ respectively, as seen in Figure \ref{Cell}.
\begin{figure}[H]
 \centering
 \includegraphics[scale=0.6]{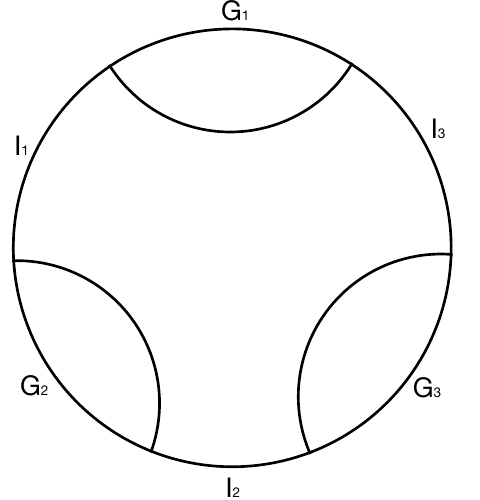}
 \caption{A cell consists of three intervals and three gap regions denoted respectively as $I_j$ and $G_j$ where $j=1,2,3$.
 }
\label{Cell}
\end{figure}
The configuration of a cell involves 6 cross ratios, 3 between the gap regions and 3 between the intervals.
One fact that we will take advantage of in the following repeatedly is that, for each cell, the configuration is fixed by 3 cross ratios among the 6 cross ratios. 
Once we obtain 3 cross ratios, the other 3 can be solved from the following 3 equations,
\begin{equation}\label{Xeq3}
\begin{aligned}
  \overline{X}_{G_1 G_2}-\overline{X}^2_{I_2 I_3} \overline{X}_{G_2 G_3} \overline{X}_{G_3 G_1}+\overline{X}_{I_2 I_3}(1+\overline{X}_{G_1 G_2}+\overline{X}_{G_2 G_3}+\overline{X}_{G_3 G_1})=0,\\
  \overline{X}_{G_2 G_3}-\overline{X}^2_{I_3 I_1} \overline{X}_{G_3 G_1} \overline{X}_{G_1 G_2}+\overline{X}_{I_3 I_1}(1+\overline{X}_{G_1 G_2}+\overline{X}_{G_2 G_3}+\overline{X}_{G_3 G_1})=0,\\
  \overline{X}_{G_3 G_1}-\overline{X}^2_{I_1 I_2} \overline{X}_{G_1 G_2} \overline{X}_{G_2 G_3}+\overline{X}_{I_1 I_2}(1+\overline{X}_{G_1 G_2}+\overline{X}_{G_2 G_3}+\overline{X}_{G_3 G_1})=0,
\end{aligned}
\end{equation}
which are just a rewrite of equation \eqref{idX1}, but here $I_i$ and $G_i$ represent arbitrary intervals and gap regions respectively.

\subsection{Case I}

\noindent We first compute $X_{AB}$ at each critical point in Case I with the knowledge of the cross ratios between gap regions. The configuration of this case is shown in Figure \ref{CellCaseI}.

\paragraph{The 1st and 2nd critical points}
  
We start from the first right cell, the intervals and gap regions of which are defined as $I_1=A D_{L_{\left[n/2 \right]}}  E_{L_{\left[n/2 \right]}}...D_{L_{1}}  E_{L_{1}}$, $I_2=E_{R_{1}}$, $I_3=\left.B D_{R_{\left[(n+1)/2 \right]}}  E_{R_{\left[(n+1)/2 \right]}} ...D_{R_{2}}  E_{R_{2}}\right.$, $G_1=D_0$, $G_2=D_c$, $G_3=D_{R_{1}}$.
The cross ratios between the three gap regions $G_1$, $G_2$ and $G_3$ are obtained in \ref{sec4.2} and $\overline{X}_{I_3 I_1}$ can be obtained from \eqref{Xeq3} to be
\begin{equation}
\begin{aligned}
\overline{X}_{I_3 I_1}=&\frac{1+\overline{X}_{G_1 G_2}+\overline{X}_{G_2 G_3}+\overline{X}_{G_3 G_1}}{2 \overline{X}_{G_3 G_1} \overline{X}_{G_1 G_2}}\\
&+\frac{\sqrt{(1+\overline{X}_{G_1 G_2}+\overline{X}_{G_2 G_3}+\overline{X}_{G_3 G_1})^2+4 \overline{X}_{G_1 G_2} \overline{X}_{G_2 G_3} \overline{X}_{G_3 G_1}}}{2 \overline{X}_{G_3 G_1} \overline{X}_{G_1 G_2}}
\end{aligned}
\end{equation}

Then one may consider a new bigger cell that connects the intervals $I_1'=\left. A D_{L_{\left[n/2 \right]}}  E_{L_{\left[n/2 \right]}}\right.$ \\ $\left....D_{L_{1}}  E_{L_{1}}\right.$, $I_2'=E_{R_{2}}$, $I_3'=B D_{R_{\left[(n+1)/2 \right]}}  E_{R_{\left[(n+1)/2 \right]}}...D_{R_{3}}  E_{R_{3}}$ and is bordered by the gap regions $G_1'=D_0$, $G_2'=D_c E_{R_{1}} D_{R_{1}}$, $G_3'=D_{R_{2}}$.
Note that we have $\overline{X}_{G_{1}' G_{2}'}=1/\overline{X}_{I_3 I_1}$, $\overline{X}_{I_{2}' I_{3}'}=1$ and $\overline{X}_{G_{3}' G_{1}'}=f(g(1))$ (according to \eqref{RCR}). 
Thus, we have 3 known cross ratios and the configuration of the new cell can be solved.
From \eqref{Xeq3}, we obtain
\begin{equation}
    \overline{X}_{I_3' I_1'}=\frac{\overline{X}_{G_1' G_2'}+(1+\overline{X}_{G_1' G_2'}+\overline{X}_{G_3' G_1'})\overline{X}_{I_{2}' I_{3}'}}{\overline{X}_{G_1' G_2'}(\overline{X}_{G_3' G_1'} \overline{X}_{I_{2}' I_{3}'}-1)}.
\end{equation}

Keep going on, we can iterate this computation.
By merging $G_2$, $I_2$ and $G_3$ as our new $G_2$, removing a right interval ($E$) and a gap region ($D$) from $I_3$ and letting the removed interval ($E$) to be the new $I_2$ each time, we have a new cell. The cross ratio $\overline{X}_{I_3 I_1}$ thus obtained can then be regarded as the input for the next cell.
More generally, we define a series of cells with the $k$th ($k=1,2,3,...,[(n+1)/2]$) cell related to the intervals $I_1^{(k)}=A D_{L_{\left[n/2 \right]}}  E_{L_{\left[n/2 \right]}}...D_{L_{1}}  E_{L_{1}}$, $I_2^{(k)}=E_{R_{k}}$, $I_3^{(k)}=B D_{R_{\left[(n+1)/2 \right]}}  E_{R_{\left[(n+1)/2 \right]}}...D_{R_{k+1}}  E_{R_{k+1}}$ and the gap regions $G_1^{(k)}=D_0$, $G_2^{(k)}=D_c E_{R_{1}} D_{R_{1}}...E_{R_{k-1}} D_{R_{k-1}}$, $G_3^{(k)}=D_{R_k}$.
We find the following iteration relation\footnote{$\overline{X}_{I_3^{(n+1)} I_1^{(n+1)}}$ and $\overline{X}_{G_3 G_1}^{(n+1)}$ have been renamed as $\overline{X}_{I_3 I_1}^{(n+1)}$ and $\overline{X}_{G_3 G_1}^{(n+1)}$ respectively for conciseness.}
\begin{equation}
    \overline{X}_{I_3 I_1}^{(n+1)}=\frac{\overline{X}_{G_3 G_1}^{(n+1)}+1}{\overline{X}_{G_3 G_1}^{(n+1)}-1}\overline{X}_{I_3 I_1}^{(n)}+\frac{2}{\overline{X}_{G_3 G_1}^{(n+1)}-1},
\end{equation}
where 
\begin{equation}
    \overline{X}_{G_3 G_1}^{(n)}=\underset{n-1}{\underbrace{f\cdot f \cdot ... \cdot f }}\cdot g(1), \quad n=1,2,3,....
\end{equation}
This process can be continued until $I_3=B$ and we obtain $\overline{X}_{B (A D_{L_{\left[n/2 \right]}}  E_{L_{\left[n/2 \right]}}...D_{L_{1}}  E_{L_{1}})}$.
Then one may do a similar iteration in the left cells until $I_1=A$ and finally we could find $X_{AB}$.
The first and the second critical points can be obtained in this way, the results of which are listed in the first two columns of Table \ref{TableI}.

\paragraph{The 3rd and higher order critical points}

As demonstrated in Section \ref{sec4.2}, at the $k$th ($k \ge 3$) critical point, the configuration of the $k-2$ cells near the central geodesic are fixed by the transition conditions shown in Figure \ref{MPTCase} (a).
Through these conditions, boundary coordinates of the endpoints (instead of cross ratios) of each interval in these cells are obtained.
For one of such conditions, we have 
\begin{equation}\label{XAB4}
    X_{(A D_{L_{\left[n/2 \right]}}  E_{L_{\left[n/2 \right]}}...D_{L_{[(k-2)/2]+1}} E_{L_{[(k-2)/2]+1}})(B D_{R_{\left[(n+1)/2 \right]}}  E_{R_{\left[(n+1)/2 \right]}}...D_{R_{[(k-1)/2]+1}}  E_{R_{[(k-1)/2]+1}})}=1.
\end{equation}
On the other hand, the transition conditions outside these $k-2$ cells are the same as that of the first three critical points, which can also be seen in Figure \ref{MPTCase} (a).
Let
\begin{equation}
\begin{aligned}
    I_1&=A D_{L_{\left[n/2 \right]}}  E_{L_{\left[n/2 \right]}}...D_{L_{[(k-2)/2]+1}} E_{L_{[(k-2)/2]+1}}\\
    I_2&=E_{R_{[(k-1)/2]+1}}\\
    I_3&=B D_{R_{\left[(n+1)/2 \right]}}  E_{R_{\left[(n+1)/2 \right]}}...D_{R_{[(k-1)/2]+2}} E_{R_{[(k-1)/2]+2}}\\
    G_1&=D_0\\
    G_2&=D_{L_{[(k-2)/2]}}E_{L_{[(k-2)/2]}}...D_c...E_{R_{[(k-1)/2]}}D_{R_{[(k-1)/2]}}\\
    G_3&=D_{R_{[(k-1)/2]+1}},
\end{aligned}
\end{equation}
and noting that equation \eqref{XAB4} is equivalent to $X_{G_1 G_2}=1$,
we may perform the same iteration as the first three critical points.  
The explicit numerical results are summarized in the tables and figures in Section \ref{sec5}.
In particular, the results of the 3-6th critical points are listed in the 3-6th columns of Table \ref{TableI}.

\subsection{Case II}

\noindent In Case II, we consider the configuration shown in Figure \ref{CellCaseII}. 
The method of obtaining $X_{AB}$ is different from the method in Case I.
This is due to a lack of a mutual boundary of all the cells in Case II, in contrast to Case I, where such mutual boundary exists: the gap region $D_0$.
In Case II, $D_{U_1}$ is the boundary of all lower cells and $D_{L_1}$ is the boundary of all upper cells.
However our goal is the same. We need to merge more and more intervals and gap regions meanwhile removing more and more intervals and gap regions from $I_1$ and $I_3$ until $I_1=A$ and $I_3=B$.

In the following, the cross ratios between intervals (gap regions) will be denoted as $\overline{X}^{L(k)}_{I_i I_j}$ or $\overline{X}^{U(k)}_{I_i I_j}$ ($\overline{X}^{L(k)}_{G_i G_j}$ or $\overline{X}^{U(k)}_{G_i G_j}$) with the upper index $L/U$ indicating the lower/upper cells, $(k)$ denoting the $k$th cell in the iteration process and $i,j=1,2,3$.

\paragraph{The 1st and 2nd critical points}
  
We start first by merging lower cells and then merge upper cells. 
In the first lower cell, \ie, $I_1^{(1)}=A D_{L_{[(n+1)/2]+1}} E_{L_{[(n+1)/2]}}...D_{L_3} E_{L_2}$, $I_2^{(1)}=E_{L_1}$, $I_3^{(1)}=B D_{U_{[n/2]+1}} E_{U_{[n/2]}}...D_{U_2} E_{U_1}$, $G_1^{(1)}=D_{U_1}$, $G_2^{(1)}=D_{L_2}$, $G_3^{(1)}=D_{L_1}$, we find from \eqref{Xeq3} that
\begin{equation}
    \overline{X}^{L(1)}_{I_3 I_1}=\frac{2\overline{X}^{L(1)}_{G_3 G_1}+1+\overline{X}^{L(1)}_{G_1 G_2}}{(\overline{X}^{L(1)}_{G_1 G_2}-1)\overline{X}^{L(1)}_{G_3 G_1}},
\end{equation}
where $\overline{X}^{L(1)}_{G_1 G_2}=g(1)$ and $\overline{X}^{L(1)}_{G_3 G_1}=1$.
We then merge $D_{L_1}$, $E_{L_1}$, $D_{L_2}$ into a new bigger gap region $G_3^{(2)}$ and consider the new cell that corresponds to intervals $\left.I_1^{(2)}=A D_{L_{[(n+1)/2]+1}} \right.$\\ $\left.E_{L_{[(n+1)/2]}}...D_{L_4} E_{L_3}\right.$, $I_2^{(2)}=E_{L_2}$, $I_3^{(2)}=I_3^{(1)}$, and is bordered by gap regions $G_1^{(2)}=G_1^{(1)}$, $G_2^{(2)}=D_{L_3}$, $G_3^{(2)}$.
The two variables that may be used in the iterative computation of $X_{AB}$ are $\overline{X}^{L(2)}_{I_3 I_1}$ and $\overline{X}^{L(2)}_{G_2 G_3}$.
Note that $\overline{X}^{L(2)}_{G_3 G_1}=1/\overline{X}^{L(1)}_{I_3 I_1}$ and $\overline{X}^{L(2)}_{I_1 I_2}=1$, which follows from the transition condition (Figure \ref{MPTCase} (b)), $\overline{X}^{L(2)}_{I_3 I_1}$ and $\overline{X}^{(2)}_{G_2 G_3}$ can be then obtained from the 2nd and 3rd equations in \eqref{Xeq3}:
\begin{equation}\label{ULstep2}
    \begin{aligned}
    \overline{X}^{L(2)}_{I_3 I_1}=\frac{2+(1+\overline{X}^{L(2)}_{G_1 G_2})\overline{X}^{L(1)}_{I_3 I_1}}{\overline{X}^{L(2)}_{G_1 G_2}-1}\\
    \overline{X}^{L(2)}_{G_2 G_3}= \frac{2+(1+\overline{X}^{L(2)}_{G_1 G_2})\overline{X}^{L(1)}_{I_3 I_1}}{(\overline{X}^{L(2)}_{G_1 G_2}-1)\overline{X}^{L(1)}_{I_3 I_1}}.
    \end{aligned}
\end{equation}
We now repeat this process. 
By merging $G_2^{(2)}$, $I_2^{(2)}$ and $G_3^{(2)}$ into a larger gap region $G_2^{(3)}$,  meanwhile removing the lower interval $E_{L_3}$ and the gap region $D_{L_4}$ from $I_3^{(2)}$, and finally letting $I_2^{(3)}=E_{L_3}$, we have a new cell.
The cross ratios $\overline{X}^{L(3)}_{I_3 I_1}$ and $\overline{X}^{L(3)}_{G_2 G_3}$ can be represented by $\overline{X}^{L(2)}_{I_3 I_1}$ and $\overline{X}^{L(3)}_{G_1 G_2}$ which are known variables from the previous step \eqref{ULstep2} and the transition condition $\overline{X}^{L(3)}_{G_1 G_2}=\overline{X}_{D_{U_1} D_{L_4}}$ respectively.
This process can be repeated $N=\left[\frac{n+1}{2} \right]$ times until $I_1^{(N)}=A$, $I_2^{(N)}=E_{L_{[(n+1)/2]}}$ and $I_3^{(N)}=I_3^{(1)}$.
At the $k$th step, we have 
\begin{equation}\label{ULstepk}
    \begin{aligned}
    \overline{X}^{L(k)}_{I_3 I_1}=\frac{2+(1+\overline{X}^{L(k)}_{G_1 G_2})\overline{X}^{L(k-1)}_{I_3 I_1}}{\overline{X}^{L(k)}_{G_1 G_2}-1}\\
    \overline{X}^{L(k)}_{G_2 G_3}= \frac{2+(1+\overline{X}^{L(k)}_{G_1 G_2})\overline{X}^{L(k-1)}_{I_3 I_1}}{(\overline{X}^{L(k)}_{G_1 G_2}-1)\overline{X}^{L(k-1)}_{I_3 I_1}},
    \end{aligned}
\end{equation}
where $\overline{X}^{L(k)}_{G_1 G_2}=\overline{X}_{D_{U_1} D_{L_{k+1}}}$ are known from \eqref{LoCR}.
The outcome of these iteration processes are two series of cross ratios related to the lower cells: $\overline{X}^{L(k)}_{I_3 I_1}$ and $\overline{X}^{L(k)}_{G_2 G_3}$ ($k=1,2,3,...,\left[\frac{n+1}{2} \right]$).

The parallel iteration process can be performed in the upper cells and the outcome are two series of cross ratios related to these cells: $\overline{X}^{U(k)}_{I_3 I_1}$ and $\overline{X}^{U(k)}_{G_2 G_3}$ ($k=1,2,3,...,\left[\frac{n}{2} \right]$).

$X_{AB}$ can finally be found by a new iteration that takes advantage of two of the above four series of cross ratios.
In the case where region $E$ has $n$ intervals, we have $\left[\frac{n+1}{2} \right]$ lower cells and $\left[\frac{n}{2} \right]$ upper cells.
We define a new series of cells by putting all the upper intervals and gap regions into $G_1$ and combining more and more lower intervals and gap regions to form $G_3$, \ie, $G_1^{(k)}=D_{U_1} E_{U_1}...D_{U_{[n/2]}} E_{U_{[n/2]}} D_{U_{[n/2]+1}}$, $G_2^{(k)}=D_{L_{k+1}}$, $G_3^{(k)}=D_{L_{1}} E_{L_{1}} ... D_{L_{k-1}} E_{L_{k-1}} D_{L_{k}}$ where $k=1,2,3,...,\left[\frac{n+1}{2} \right]$.
At $k=1$, since $\left. \overline{X}_{G_3^{(1)} G_1^{(1)}}=\overline{X}_{D_{L_1} (D_{U_1} E_{U_1}...D_{U_{[n/2]}} E_{U_{[n/2]}} D_{U_{[n/2]+1}})}\right.$\\ $\left.=1/\overline{X}^{U([n/2])}_{I_3 I_1}\right.$, $\overline{X}_{G_2^{(1)} G_3^{(1)}}=\overline{X}_{D_{L_2} D_{L_1}}=\overline{X}^{L(1)}_{G_2 G_3}=f(1)$, and $\overline{X}_{I_1^{(1)} I_2^{(1)}}=1$, we have three variables, and we obtain
\begin{equation}
    \overline{X}_{I_3^{(1)} I_1^{(1)}}= \overline{X}^{L(1)}_{G_2 G_3} \overline{X}^{U([n/2])}_{I_3 I_1}.
\end{equation}
At general $k$, since $\overline{X}_{G_3^{(k)} G_1^{(k)}}=1/\overline{X}_{I_3^{(k-1)} I_1^{(k-1)}}$, $\overline{X}_{I_1^{(k)} I_2^{(k)}}=1$ and $\overline{X}_{G_2^{(k)} G_3^{(k)}}=\overline{X}^{L(k)}_{G_2 G_3}$, we have
\begin{equation}
    \overline{X}_{I_3^{(k)} I_1^{(k)}}=\frac{\overline{X}_{G_2^{(k)} G_3^{(k)}}}{\overline{X}_{G_3^{(k)} G_1^{(k)}}}=\overline{X}^{L(k)}_{G_2 G_3} \overline{X}_{I_3^{(k-1)} I_1^{(k-1)}},
\end{equation}
where $k=1,2,3,...,\left[\frac{n+1}{2} \right]$.
By iteration, we end up at
\begin{equation}
    \overline{X}_{AB}^{n}=\overline{X}_{I_3^{([(n+1)/2])} I_1^{([(n+1)/2])}}=\left(\prod_{k=1}^{[(n+1)/2]} \overline{X}^{L(k)}_{G_2 G_3}\right) \overline{X}^{U([n/2])}_{I_3 I_1},
\end{equation}
where $\overline{X}_{AB}^{n}$ is the first critical point of the cross ratio of $A$ and $B$ if region $E$ has $n$ intervals.
Alternatively, one may define another series of cells by putting all the lower intervals and gap regions into $G_1$ and combining more and more lower intervals and gap regions to form $G_3$, \ie, $G_1^{(k)}=D_{L_{1}} E_{L_{1}} ... D_{L_{[(n+1)/2]}} E_{L_{[(n+1)/2]}} D_{L_{[(n+1)/2]+1}}$, $G_2^{(k)}=D_{U_{k+1}}$, $G_3^{(k)}=D_{U_1} E_{U_1}...D_{U_{k-1}} E_{U_{k-1}} D_{U_{k}}$ where $k=1,2,3,...,\left[\frac{n}{2} \right]$. 
By a parallel iteration, we have
\begin{equation}
    \overline{X}_{AB}^{n}=\overline{X}_{I_3^{([n/2])} I_1^{([n/2])}}=\left(\prod_{k=1}^{[n/2]} \overline{X}^{U(k)}_{G_2 G_3}\right) \overline{X}^{L([(n+1)/2])}_{I_3 I_1}.
\end{equation}
The final results of $X_{AB}$ are the same.

The second critical point can be obtained from exactly the same iteration process.
The only difference is that the initial inputs in the first lower cell are now $\overline{X}^{L(1)}_{G_1 G_2}=f(1)$ (instead of $\overline{X}^{L(1)}_{G_1 G_2}=g(1)$), and $\overline{X}^{L(1)}_{G_3 G_1}=1$.
The results of the first and the second critical points are listed in the first two columns of Table \ref{TableII}.

\paragraph{The 3rd and higher order critical points}

As demonstrated in Section \ref{sec4.2}, the configuration of the $k$th ($k=3,4,5,...$) critical point when $E$ has $n$ ($n=k-2,k-1,k,...$) intervals is obtained by iteration starting from the configuration of the $k$th critical point when $E$ has $k-2$ intervals.
The latter configuration contains $k-2$ cells ($\left[\frac{k-2}{2}\right]$ upper cells and $\left[\frac{k-1}{2}\right]$ lower cells) which are fixed by the transition conditions shown in Figure \ref{MPTCase} (b).
Through these conditions, boundary coordinates of the endpoints (instead of cross ratios) of each interval in these cells are obtained.
In particular, we have $X_{AB}=1$.

If there are $n$ intervals (cells) and $n>k-2$, the transition conditions outside the $k-2$ cells are the same as that of the first two critical points (Figure \ref{MPTCase} (b)) and the iteration process described above still applies.
We define a series of lower cells,
\begin{equation}
\begin{aligned}
    I_1^{(k)}&=A D_{L_{\left[(n+1)/2 \right]+1}}  E_{L_{\left[(n+1)/2 \right]}} D_{L_{\left[(n+1)/2 \right]}}...D_{L_{\left[(k-1)/2\right]+3}} E_{L_{\left[(k-1)/2\right]+2}}\\
    I_2^{(k)}&=E_{L_{\left[(k-1)/2\right]+1}}\\
    I_3^{(k)}&=B D_{U_{\left[n/2 \right]+1}}  E_{U_{\left[n/2 \right]}} D_{U_{\left[n/2 \right]}}...D_{U_{2}} E_{U_{1}}\\
    G_1^{(k)}&=D_{U_1}\\
    G_2^{(k)}&=D_{L_{\left[(k-1)/2\right]+2}}\\
    G_3^{(k)}&=D_{L_{1}} E_{L_1}... D_{L_{\left[(k-1)/2\right]}} E_{L_{\left[(k-1)/2\right]}} D_{L_{\left[(k-1)/2\right]}+1},
\end{aligned}
\end{equation}
and perform the same iteration as the first two critical points of Case II. The output is a series of cross ratios denoted as
\begin{equation}
    \overline{X}^{L(j)}_{G_2 G_3}:=\overline{X}_{D_{L_{\left[ (k-1)/2 \right]+j+1}} (D_{L_{1}} E_{L_{1}} ... D_{L_{\left[ (k-1)/2 \right]+j-1}} E_{L_{\left[ (k-1)/2 \right]+j-1}} D_{L_{\left[ (k-1)/2 \right]+j}})},
\end{equation}
where $j=1,2,...,\left[\frac{n+1}{2} \right]-\left[\frac{k-1}{2} \right]$. 
Similarly we define a series of upper cells
\begin{equation}
    \begin{aligned}
    I_1^{(k)}&=B D_{U_{\left[n/2 \right]+1}}  E_{U_{\left[n/2 \right]}} D_{U_{\left[n/2 \right]}}...D_{U_{\left[(k-2)/2\right]+3}} E_{U_{\left[(k-2)/2\right]+2}}\\
    I_2^{(k)}&=E_{U_{\left[(k-2)/2\right]+1}}\\
    I_3^{(k)}&=A D_{L_{\left[(n+1)/2 \right]+1}}  E_{L_{\left[(n+1)/2 \right]}} D_{L_{\left[(n+1)/2 \right]}}...D_{L_{2}} E_{L_{1}}\\
    G_1^{(k)}&=D_{L_1}\\
    G_2^{(k)}&=D_{U_{\left[(k-2)/2\right]+2}}\\
    G_3^{(k)}&=D_{U_{1}} E_{U_1}... D_{U_{\left[(k-2)/2\right]}} E_{U_{\left[(k-2)/2\right]}} D_{U_{\left[(k-2)/2\right]+1}},
\end{aligned}
\end{equation}
which are prepared for the same iteration. 
The iteration gives us another series of cross ratios denoted as $\overline{X}^{L(j)}_{G_2 G_3}:=\overline{X}_{D_{U_{\left[ (k-2)/2 \right]+j+1}} (D_{U_{1}} E_{U_{1}} ... D_{U_{\left[ (k-2)/2 \right]+j-1}} E_{U_{\left[ (k-2)/2 \right]+j-1}} D_{U_{\left[ (k-2)/2 \right]+j}})}$, where $j=1,2,...,\left[\frac{n}{2} \right]-\left[\frac{k-2}{2} \right]$. 

We can then compute $X_{AB}$ from the above two series.
The result is  
\begin{equation}
    \overline{X}_{AB}^{n}= \prod_{i=1}^{\left[n/2 \right]-\left[(k-2)/2 \right]} \overline{X}^{U(i)}_{G_2 G_3}  \prod_{j=1}^{\left[(n+1)/2 \right]-\left[(k-1)/2 \right]} \overline{X}^{L(j)}_{G_2 G_3}.
\end{equation}
The explicit numerical results are summarized in the tables and figures in Section \ref{sec5}.
In particular, the 3-6th critical points are listed in the 3-6th columns of Table \ref{TableII}.

\bibliography{main}

@article{Ju:2024kuc,
    author = "Ju, Xin-Xiang and Pan, Wen-Bin and Sun, Ya-Wen and Wang, Yuan-Tai and Zhao, Yang",
    title = "{More on the upper bound of holographic n-partite information}",
    eprint = "2411.19207",
    archivePrefix = "arXiv",
    primaryClass = "hep-th",
    doi = "10.1007/JHEP03(2025)184",
    journal = "JHEP",
    volume = "03",
    pages = "184",
    year = "2025"
}

@article{Ju:2024hba,
    author = "Ju, Xin-Xiang and Pan, Wen-Bin and Sun, Ya-Wen and Zhao, Yang",
    title = "{Holographic multipartite entanglement from the upper bound of $n$-partite information}",
    eprint = "2411.07790",
    archivePrefix = "arXiv",
    primaryClass = "hep-th",
    month = "11",
    year = "2024"
}

@article{Ju:2025tgg,
    author = "Ju, Xin-Xiang and Sun, Ya-Wen and Zhao, Yang",
    title = "{Upper bound of holographic entanglement entropy combinations}",
    eprint = "2505.11059",
    archivePrefix = "arXiv",
    primaryClass = "hep-th",
    doi = "10.1007/JHEP09(2025)085",
    journal = "JHEP",
    volume = "09",
    pages = "085",
    year = "2025"
}

@article{Ju:2025eyn,
    author = "Ju, Xin-Xiang and Liu, Bo-Hao and Sun, Ya-Wen and Xu, Bo-Yu and Zhao, Yang",
    title = "{Holographic multipartite entanglement structures in IR modified geometries}",
    eprint = "2512.20397",
    archivePrefix = "arXiv",
    primaryClass = "hep-th",
    month = "12",
    year = "2025"
}

@article{Smolin:2005khf,
    author = "Smolin, John A. and Verstraete, Frank and Winter, Andreas",
    title = "{Entanglement of assistance and multipartite state distillation}",
    eprint = "quant-ph/0505038",
    archivePrefix = "arXiv",
    doi = "10.1103/PhysRevA.72.052317",
    journal = "Phys. Rev. A",
    volume = "72",
    pages = "052317",
    year = "2005"
}

@article{Hayden:2011ag,
    author = "Hayden, Patrick and Headrick, Matthew and Maloney, Alexander",
    title = "{Holographic Mutual Information is Monogamous}",
    eprint = "1107.2940",
    archivePrefix = "arXiv",
    primaryClass = "hep-th",
    reportNumber = "BRX-TH-638, BRX-TH-638",
    doi = "10.1103/PhysRevD.87.046003",
    journal = "Phys. Rev. D",
    volume = "87",
    number = "4",
    pages = "046003",
    year = "2013"
}

@article{Miyaji:2015yva,
    author = "Miyaji, Masamichi and Takayanagi, Tadashi",
    title = "{Surface/State Correspondence as a Generalized Holography}",
    eprint = "1503.03542",
    archivePrefix = "arXiv",
    primaryClass = "hep-th",
    reportNumber = "YITP-15-16, IPMU15-0024",
    doi = "10.1093/ptep/ptv089",
    journal = "PTEP",
    volume = "2015",
    number = "7",
    pages = "073B03",
    year = "2015"
}

@article{Horodecki:2005vvo,
    author = "Horodecki, Micha{\l} and Oppenheim, Jonathan and Winter, Andreas",
    title = "{Partial quantum information}",
    eprint = "quant-ph/0505062",
    archivePrefix = "arXiv",
    doi = "10.1038/nature03909",
    journal = "Nature",
    volume = "436",
    number = "7051",
    pages = "673--676",
    year = "2005"
}

@inproceedings{DiVincenzo:1998zz,
    author = "DiVincenzo, David P. and Fuchs, Christopher A. and Mabuchi, Hideo and Smolin, John A. and Thapliyal, Ashish and Uhlmann, Armin",
    title = "{Entanglement of assistance}",
    booktitle = "{1st NASA Conference on Quantum Computing and Quantum Communications}",
    eprint = "quant-ph/9803033",
    archivePrefix = "arXiv",
    month = "2",
    year = "1998"
}

@article{Ryu:2006bv,
    author = "Ryu, Shinsei and Takayanagi, Tadashi",
    title = "{Holographic derivation of entanglement entropy from AdS/CFT}",
    eprint = "hep-th/0603001",
    archivePrefix = "arXiv",
    reportNumber = "NSF-KITP-06-11, NSF-KITP-06-11",
    doi = "10.1103/PhysRevLett.96.181602",
    journal = "Phys. Rev. Lett.",
    volume = "96",
    pages = "181602",
    year = "2006"
}

@article{Hubeny:2007xt,
    author = "Hubeny, Veronika E. and Rangamani, Mukund and Takayanagi, Tadashi",
    title = "{A Covariant holographic entanglement entropy proposal}",
    eprint = "0705.0016",
    archivePrefix = "arXiv",
    primaryClass = "hep-th",
    reportNumber = "DCPT-07-13, KUNS-2069",
    doi = "10.1088/1126-6708/2007/07/062",
    journal = "JHEP",
    volume = "07",
    pages = "062",
    year = "2007"
}

@inproceedings{Schilling:2014sjf,
    author = "Schilling, Christian",
    title = "{The Quantum Marginal Problem}",
    eprint = "1404.1085",
    archivePrefix = "arXiv",
    primaryClass = "quant-ph",
    month = "4",
    year = "2014"
}

@mastersthesis{Schilling:2015wtq,
    author = "Schilling, Christian",
    title = "{Quantum Marginal Problem and its Physical Relevance}",
    eprint = "1507.00299",
    archivePrefix = "arXiv",
    primaryClass = "quant-ph",
    reportNumber = "DISS.{\textasciitilde}ETH No.{\textasciitilde}21748",
    doi = "10.3929/ethz-a-010139282",
    type = "Other thesis",
    month = "7",
    year = "2015"
}

@article{Umemoto:2018jpc,
    author = "Umemoto, Koji and Zhou, Yang",
    title = "{Entanglement of Purification for Multipartite States and its Holographic Dual}",
    eprint = "1805.02625",
    archivePrefix = "arXiv",
    primaryClass = "hep-th",
    reportNumber = "YITP-18-41",
    doi = "10.1007/JHEP10(2018)152",
    journal = "JHEP",
    volume = "10",
    pages = "152",
    year = "2018"
}

@article{Takayanagi:2017knl,
    author = "Takayanagi, Tadashi and Umemoto, Koji",
    title = "{Entanglement of purification through holographic duality}",
    eprint = "1708.09393",
    archivePrefix = "arXiv",
    primaryClass = "hep-th",
    reportNumber = "YITP-17-89, IPMU17-0115",
    doi = "10.1038/s41567-018-0075-2",
    journal = "Nature Phys.",
    volume = "14",
    number = "6",
    pages = "573--577",
    year = "2018"
}

@article{Harper:2020wad,
    author = "Harper, Jonathan",
    title = "{Multipartite entanglement and topology in holography}",
    eprint = "2006.02899",
    archivePrefix = "arXiv",
    primaryClass = "hep-th",
    reportNumber = "BRX-TH-6664",
    doi = "10.1007/JHEP03(2021)116",
    journal = "JHEP",
    volume = "03",
    pages = "116",
    year = "2021"
}

@article{Bao:2019zqc,
    author = "Bao, Ning and Cheng, Newton",
    title = "{Multipartite Reflected Entropy}",
    eprint = "1909.03154",
    archivePrefix = "arXiv",
    primaryClass = "hep-th",
    doi = "10.1007/JHEP10(2019)102",
    journal = "JHEP",
    volume = "10",
    pages = "102",
    year = "2019"
}

@article{Dutta:2019gen,
    author = "Dutta, Souvik and Faulkner, Thomas",
    title = "{A canonical purification for the entanglement wedge cross-section}",
    eprint = "1905.00577",
    archivePrefix = "arXiv",
    primaryClass = "hep-th",
    doi = "10.1007/JHEP03(2021)178",
    journal = "JHEP",
    volume = "03",
    pages = "178",
    year = "2021"
}

@article{Akers:2019gcv,
    author = "Akers, Chris and Rath, Pratik",
    title = "{Entanglement Wedge Cross Sections Require Tripartite Entanglement}",
    eprint = "1911.07852",
    archivePrefix = "arXiv",
    primaryClass = "hep-th",
    doi = "10.1007/JHEP04(2020)208",
    journal = "JHEP",
    volume = "04",
    pages = "208",
    year = "2020"
}

@article{Hayden:2021gno,
    author = "Hayden, Patrick and Parrikar, Onkar and Sorce, Jonathan",
    title = "{The Markov gap for geometric reflected entropy}",
    eprint = "2107.00009",
    archivePrefix = "arXiv",
    primaryClass = "hep-th",
    doi = "10.1007/JHEP10(2021)047",
    journal = "JHEP",
    volume = "10",
    pages = "047",
    year = "2021"
}

@article{Gadde:2022cqi,
    author = "Gadde, Abhijit and Krishna, Vineeth and Sharma, Trakshu",
    title = "{New multipartite entanglement measure and its holographic dual}",
    eprint = "2206.09723",
    archivePrefix = "arXiv",
    primaryClass = "hep-th",
    reportNumber = "TIFR/TH/22-34",
    doi = "10.1103/PhysRevD.106.126001",
    journal = "Phys. Rev. D",
    volume = "106",
    number = "12",
    pages = "126001",
    year = "2022"
}

@article{Penington:2022dhr,
    author = "Penington, Geoff and Walter, Michael and Witteveen, Freek",
    title = "{Fun with replicas: tripartitions in tensor networks and gravity}",
    eprint = "2211.16045",
    archivePrefix = "arXiv",
    primaryClass = "hep-th",
    doi = "10.1007/JHEP05(2023)008",
    journal = "JHEP",
    volume = "05",
    pages = "008",
    year = "2023"
}

@article{Bao:2015bfa,
    author = "Bao, Ning and Nezami, Sepehr and Ooguri, Hirosi and Stoica, Bogdan and Sully, James and Walter, Michael",
    title = "{The Holographic Entropy Cone}",
    eprint = "1505.07839",
    archivePrefix = "arXiv",
    primaryClass = "hep-th",
    reportNumber = "CALT-TH-2015-020, IPMU15-0074, SLAC-PUB-16294, SU-ITP-15-08, CALT-TH 2015-020, IPMU15-0074, SLAC-PUB-16294, SU-ITP-15/08",
    doi = "10.1007/JHEP09(2015)130",
    journal = "JHEP",
    volume = "09",
    pages = "130",
    year = "2015"
}

@article{Hernandez-Cuenca:2022pst,
    author = "Hern{\'a}ndez-Cuenca, Sergio and Hubeny, Veronika E. and Rota, Massimiliano",
    title = "{The holographic entropy cone from marginal independence}",
    eprint = "2204.00075",
    archivePrefix = "arXiv",
    primaryClass = "hep-th",
    doi = "10.1007/JHEP09(2022)190",
    journal = "JHEP",
    volume = "09",
    pages = "190",
    year = "2022"
}

@article{Balasubramanian:2013lsa,
    author = "Balasubramanian, Vijay and Chowdhury, Borun D. and Czech, Bartlomiej and de Boer, Jan and Heller, Michal P.",
    title = "{Bulk curves from boundary data in holography}",
    eprint = "1310.4204",
    archivePrefix = "arXiv",
    primaryClass = "hep-th",
    doi = "10.1103/PhysRevD.89.086004",
    journal = "Phys. Rev. D",
    volume = "89",
    number = "8",
    pages = "086004",
    year = "2014"
}

@article{Gadde:2023zzj,
    author = "Gadde, Abhijit and Krishna, Vineeth and Sharma, Trakshu",
    title = "{Towards a classification of holographic multi-partite entanglement measures}",
    eprint = "2304.06082",
    archivePrefix = "arXiv",
    primaryClass = "hep-th",
    doi = "10.1007/JHEP08(2023)202",
    journal = "JHEP",
    volume = "08",
    pages = "202",
    year = "2023"
}

@article{He:2020xuo,
    author = "He, Temple and Hubeny, Veronika E. and Rangamani, Mukund",
    title = "{Superbalance of Holographic Entropy Inequalities}",
    eprint = "2002.04558",
    archivePrefix = "arXiv",
    primaryClass = "hep-th",
    doi = "10.1007/JHEP07(2020)245",
    journal = "JHEP",
    volume = "07",
    pages = "245",
    year = "2020"
}

@article{Liu:2024ulq,
    author = "Liu, Bowei and Zhang, Junjia and Ohyama, Shuhei and Kusuki, Yuya and Ryu, Shinsei",
    title = "{Multiwavefunction overlap and multientropy for topological ground states in (2+1) dimensions}",
    eprint = "2410.08284",
    archivePrefix = "arXiv",
    primaryClass = "cond-mat.str-el",
    reportNumber = "RIKEN-iTHEMS-Report-24 KYUSHU-HET-297",
    doi = "10.1103/tjcf-yryh",
    journal = "Phys. Rev. B",
    volume = "112",
    number = "12",
    pages = "125160",
    year = "2025"
}

@article{Balasubramanian:2024ysu,
    author = "Balasubramanian, Vijay and Kang, Monica Jinwoo and Murdia, Chitraang and Ross, Simon F.",
    title = "{Signals of multiparty entanglement and holography}",
    eprint = "2411.03422",
    archivePrefix = "arXiv",
    primaryClass = "hep-th",
    doi = "10.1007/JHEP06(2025)068",
    journal = "JHEP",
    volume = "06",
    pages = "068",
    year = "2025"
}

@article{Basak:2024uwc,
    author = "Basak, Jaydeep Kumar and Malvimat, Vinay and Yoon, Junggi",
    title = "{A New Genuine Multipartite Entanglement Measure: from Qubits to Multiboundary Wormholes}",
    eprint = "2411.11961",
    archivePrefix = "arXiv",
    primaryClass = "hep-th",
    month = "11",
    year = "2024"
}

@article{Bao:2025psl,
    author = "Bao, Ning and Furuya, Keiichiro and Naskar, Joydeep",
    title = "{Tripartite Entanglement Signal from Multipartite Entanglement of Purification}",
    eprint = "2509.08209",
    archivePrefix = "arXiv",
    primaryClass = "hep-th",
    month = "9",
    year = "2025"
}

@article{Ju:2023dzo,
    author = "Ju, Xin-Xiang and Liu, Bo-Hao and Pan, Wen-Bin and Sun, Ya-Wen and Wang, Yuan-Tai",
    title = "{Squashed entanglement from generalized Rindler wedge}",
    eprint = "2310.09799",
    archivePrefix = "arXiv",
    primaryClass = "hep-th",
    doi = "10.1007/JHEP09(2025)006",
    journal = "JHEP",
    volume = "09",
    pages = "006",
    year = "2025"
}

@article{Freedman:2016zud,
    author = "Freedman, Michael and Headrick, Matthew",
    title = "{Bit threads and holographic entanglement}",
    eprint = "1604.00354",
    archivePrefix = "arXiv",
    primaryClass = "hep-th",
    reportNumber = "BRX-TH-6302, NSF-KITP-16-051",
    doi = "10.1007/s00220-016-2796-3",
    journal = "Commun. Math. Phys.",
    volume = "352",
    number = "1",
    pages = "407--438",
    year = "2017"
}

@article{Bhattacharya:2014vja,
    author = "Bhattacharya, Jyotirmoy and Hubeny, Veronika E. and Rangamani, Mukund and Takayanagi, Tadashi",
    title = "{Entanglement density and gravitational thermodynamics}",
    eprint = "1412.5472",
    archivePrefix = "arXiv",
    primaryClass = "hep-th",
    reportNumber = "DCPT-14-77, YITP-104, PMU14-0359",
    doi = "10.1103/PhysRevD.91.106009",
    journal = "Phys. Rev. D",
    volume = "91",
    number = "10",
    pages = "106009",
    year = "2015"
}

@article{Araki:1970ba,
    author = "Araki, H. and Lieb, E. H.",
    title = "{Entropy inequalities}",
    doi = "10.1007/BF01646092",
    journal = "Commun. Math. Phys.",
    volume = "18",
    pages = "160--170",
    year = "1970"
}

@article{Li:2014bep,
    author = "Li, Ke and Winter, Andreas",
    title = "{Squashed Entanglement, $\mathbf {k}$ -Extendibility, Quantum Markov Chains, and Recovery Maps}",
    eprint = "1410.4184",
    archivePrefix = "arXiv",
    primaryClass = "quant-ph",
    doi = "10.1007/s10701-018-0143-6",
    journal = "Found. Phys.",
    volume = "48",
    number = "8",
    pages = "910--924",
    year = "2018"
}

@article{Bao:2018fso,
    author = "Bao, Ning and Chatwin-Davies, Aidan and Remmen, Grant N.",
    title = "{Entanglement of Purification and Multiboundary Wormhole Geometries}",
    eprint = "1811.01983",
    archivePrefix = "arXiv",
    primaryClass = "hep-th",
    doi = "10.1007/JHEP02(2019)110",
    journal = "JHEP",
    volume = "02",
    pages = "110",
    year = "2019"
}

@article{Brown:1986nw,
    author = "Brown, J. David and Henneaux, M.",
    title = "{Central Charges in the Canonical Realization of Asymptotic Symmetries: An Example from Three-Dimensional Gravity}",
    doi = "10.1007/BF01211590",
    journal = "Commun. Math. Phys.",
    volume = "104",
    pages = "207--226",
    year = "1986"
}

@article{Chu:2019etd,
    author = "Chu, Jinwei and Qi, Runze and Zhou, Yang",
    title = "{Generalizations of Reflected Entropy and the Holographic Dual}",
    eprint = "1909.10456",
    archivePrefix = "arXiv",
    primaryClass = "hep-th",
    doi = "10.1007/JHEP03(2020)151",
    journal = "JHEP",
    volume = "03",
    pages = "151",
    year = "2020"
}

@article{Iizuka:2025bcc,
    author = "Iizuka, Norihiro and Lin, Simon and Nishida, Mitsuhiro",
    title = "{Why many-partite entanglement is essential for holography}",
    eprint = "2504.01625",
    archivePrefix = "arXiv",
    primaryClass = "hep-th",
    month = "4",
    year = "2025"
}
\bibliographystyle{JHEP}

\end{document}